\newtheorem{lemma}{Lemma}[section]
\newtheorem{proposition}{Proposition}[section]
\newtheorem{theorem}{Theorem}[section]
\newtheorem{definition}{Definition}[section]
\numberwithin{equation}{section}
\newcommand{\tr}{\operatorname{Tr}}
\newcommand{\id}{\mathrm{Id}}
\newcommand{\End}{\mathrm{End}}
\newcommand{\ket}[1]{\vert #1 \rangle}
\newcommand{\bra}[1]{\langle #1 \vert}
\newcommand{\scalprod}[2]{\langle #1 \vert #2 \rangle}
\newcommand{\irr}{\mathrm{Irr}}
\tikzset{
  tensor/.style={
    inner sep = 0.05cm,
    shape = circle,
    draw,
    fill
  },
  t/.style={
    inner sep = 0.03cm,
    shape = circle,
    draw,
    fill
  },
}
\tikzset{ z={(30:1)}}
\tikzset{>=stealth}
\tikzset{
  every picture/.style = {
    baseline={([yshift=-.5ex]current bounding box.center)},
    font=\scriptsize
  },
  irrep/.style={
    anchor=south,
    font = \tiny,
    inner sep=2pt,
    black,
    midway
  },
  irrep south/.style={
    anchor=north,
    font = \tiny,
    inner sep=3pt,
    black,
    midway
  }
}
\tikzset{
  fusion tensor/.style = {
    rectangle,
    minimum width= 1mm,
    minimum height = 7mm,
    rounded corners = 0.5mm,
    inner sep = 0,
    draw,
    fill,
  }
}
\tikzset{
  every label/.style = {
    text depth=0pt,
    text height=1ex,
  },
}
\tikzset{
  stringnet mpo/.pic={
        \def\d{0.15}; 
        \def\l{0.4}; 
        \def\h{0.4}; 
        \draw (-\h,-\h) rectangle (\h,\h);
        \draw (-\l-\h,0) -- (-\h,0) coordinate[midway] (-left-mid);
        \draw[densely dotted] (-\h,0) -- (\h,0);
        \draw (\h,0) --(\h+\l,0) coordinate[midway] (-right-mid);
        \draw (0,-\l-\h) -- (0,-\h) coordinate[midway] (-down-mid);
        \draw[densely dotted] (0,-\h) -- (0,\h);
        \draw (0,\h) --(0,\h+\l) coordinate[midway] (-up-mid);
          \foreach \i/\label/\ca/\cb in {%
                    0/b/up/right,%
                    90/f/left/up,%
                    180/e/down/left,%
                    270/k/right/down}{
            \begin{scope}[rotate = \i]
              \draw (\d,\h) -- (\d,\h+\l) coordinate[midway] (-\ca-\cb);
              \draw[densely dotted] (\d,\h) arc (180:270:\h-\d);
              \draw (\h,\d) -- (\h+\l,\d) coordinate[midway] (-\cb-\ca);
            \end{scope}
          }
  },
  stringnet fusion/.pic={
    \def\d{0.15};
    \def\h{1.0};
    \def\w{0.25};
    \def\y{0.6};
    \def\l{0.4};
    \draw (-\w,-\h) rectangle (\w,\h);
    \draw (\w,0)--(\w+\l,0) coordinate[midway] (-right-mid);
    \draw (\w,\d)--(\w+\l,\d) coordinate[midway] (-right-up);
    \draw (\w,-\d)--(\w+\l,-\d) coordinate[midway] (-right-down);
    \draw (-\w,\y)--(-\w-\l,\y) coordinate[midway] (-left-up-mid);
    \draw (-\w,\y-\d)--(-\w-\l,\y-\d) coordinate[midway] (-left-up-down);
    \draw (-\w,\y+\d)--(-\w-\l,\y+\d) coordinate[midway] (-left-up-up);
    \draw (-\w,-\y)--(-\w-\l,-\y) coordinate[midway] (-left-down-mid);
    \draw (-\w,-\y-\d)--(-\w-\l,-\y-\d) coordinate[midway] (-left-down-down);
    \draw (-\w,-\y+\d)--(-\w-\l,-\y+\d) coordinate[midway] (-left-down-up);
    \def\r{0.2}
    \draw[densely dotted] (-\w,\y-\d) arc (90:0:\r) -- (-\w+\r,-\y+\d+\r) arc (0:-90:\r);
    \draw[densely dotted] (\w,\d) arc (-90:-180:\r) -- (\w-\r,\y+\d-\r) arc (0:90:\r) --(-\w,\y+\d);
    \draw[densely dotted] (\w,-\d) arc (90:180:\r) -- (\w-\r,-\y-\d+\r) arc (0:-90:\r) --(-\w,-\y-\d);
  },
  stringnet unit/.pic={
    \def\d{0.15};
    \def\l{0.4};
    \def\w{0.1}
    \draw (-\w,-0.25) rectangle (\w,0.25);
    \draw (\w,0) -- (\w+\l,0) coordinate[midway] (-mid);
    \draw (\w,\d) -- (\w+\l,\d) coordinate[midway] (-up);
    \draw (\w,-\d) -- (\w+\l,-\d) coordinate[midway] (-down);
    \draw[densely dotted] (\w,-\d) arc (-90:-180:\w) --(0,\d-\w) arc (180:90:\w);
  },
  stringnet projector/.pic={
    \def\d{0.15};
    \def\l{0.4};
    \def\w{0.15}
    \draw (-\w,-0.25) rectangle (\w,0.25);
    \draw (\w,0) -- (\w+\l,0) coordinate[midway] (-right-mid);
    \draw (\w,\d) -- (\w+\l,\d) coordinate[midway] (-right-up);
    \draw (\w,-\d) -- (\w+\l,-\d) coordinate[midway] (-right-down);
    \draw (-\w,0) -- (-\w-\l,0) coordinate[midway] (-left-mid);
    \draw (-\w,\d) -- (-\w-\l,\d) coordinate[midway] (-left-up);
    \draw (-\w,-\d) -- (-\w-\l,-\d) coordinate[midway] (-left-down);
  },
  ->-/.style={postaction={decorate,decoration={
        markings,
        mark=at position #1 with {\arrow{stealth}}
      }}},
  ->-/.default=0.6,
  -<-/.style={postaction={decorate,decoration={
        markings,
        mark=at position #1 with {\arrow{stealth reversed}}
      }}},
  -<-/.default=0.6,
  virtual/.style = {red},
}
\tikzset{
  on each segment/.style={
    decorate,
    decoration={
      show path construction,
      moveto code={},
      lineto code={
        \path [#1]
        (\tikzinputsegmentfirst) -- (\tikzinputsegmentlast);
      },
      curveto code={
        \path [#1] (\tikzinputsegmentfirst)
        .. controls
        (\tikzinputsegmentsupporta) and (\tikzinputsegmentsupportb)
        ..
        (\tikzinputsegmentlast);
      },
      closepath code={
        \path [#1]
        (\tikzinputsegmentfirst) -- (\tikzinputsegmentlast);
      },
    },
  },
  mid arrow/.style={postaction={decorate,decoration={
        markings,
        mark=at position .5 with {\arrow[#1]{stealth}}
      }}},
}
\newcommand{\tinyprime}[1]{#1^{\scalebox{.6}{$\scriptscriptstyle\prime$}}}
\newcommand{\tinypprime}[1]{#1^{\scalebox{.6}{$\scriptscriptstyle\prime\prime$}}}
\newcommand{\rem}[1]{\textcolor{red}{\textbf{[#1]}}}
\title{Matrix product operator algebras I: representations of weak Hopf algebras and projected entangled pair states}
\author[1]{Andras Molnar}
\author[2,3]{Alberto Ruiz de Alarcón}
\author[1]{Jos\'e Garre-Rubio}
\author[1,4]{Norbert Schuch}
\author[5,6]{J.~Ignacio Cirac}
\author[2,3]{David P\'erez-Garc\'ia}
\affil[1]{University of Vienna, Faculty of Mathematics, Oskar-Morgenstern-Platz 1, 1090 Vienna, Austria}
\affil[2]{Dpto. de An\'alisis Matem\'atico y Matem\'atica Aplicada, Universidad Complutense de Madrid, 28040 Madrid, Spain}
\affil[3]{Instituto de Ciencias Matem\'aticas, C/ Nicol\'as Cabrera 13-15, 28049 Madrid, Spain}
\affil[4]{University of Vienna, Faculty of Physics, Boltzmanngasse 5, 1090 Vienna, Austria}
\affil[5]{Max-Planck-Institute of Quantum Optics, Hans-Kopfermann-Str. 1, 85748 Garching, Germany}
\affil[6]{Munich Center for Quantum Science and Technology, Schellingstr. 4, 80799 München, Germany}
\renewcommand\Affilfont{\itshape\small}
\begin{document}

\maketitle

\begin{abstract}
Matrix Product Operators (MPOs) are tensor networks representing operators acting on 1D systems.  They model a wide variety of situations, including communication channels with memory effects, quantum cellular automata, mixed states in 1D quantum systems, or holographic boundary models associated to 2D quantum systems.  A scenario where MPOs have proven particularly useful is to represent algebras of non-trivial symmetries.  Concretely, the boundary of both symmetry protected and topologically ordered phases in 2D quantum systems exhibit symmetries in the form of MPOs.

In this paper, we develop a theory of MPOs as representations of algebraic structures.  We establish a dictionary between algebra and MPO properties which allows to transfer results between both setups, covering the cases of pre-bialgebras, weak bialgebras, and weak Hopf algebras. We define the notion of pulling-through algebras, which abstracts the minimal requirements needed to define topologically ordered 2D tensor networks from MPO algebras.  We show, as one of our main results, that any semisimple pivotal weak Hopf algebra is a pulling-trough algebra.  We demonstrate the power of this framework by showing that they can be used to construct Kitaev's quantum double models for Hopf algebras solely from an MPO representation of the Hopf algebra, in the exact same way as MPO symmetries obtained from fusion categories can be used to construct Levin-Wen string-net models, and to explain all their topological features; it thus allows to describe both Kitaev and string-net models on the same formal footing.
\end{abstract}

\tableofcontents

\section{Introduction}

\subsection{Context}

Tensor Networks \cite{Okunishi2022,Cirac2021,Hackbusch2019o} allow to express certain high-dimensional
tensors $T_{i_1,\dots,i_N}$ -- that is, tensors with a large number $N$ of
indices -- efficiently as a network of simple few-index
tensors. Those elementary tensors are arranged on a graph (``network''),
and their ``auxiliary'' indices are contracted (i.e., identified and
summed over) with those of the adjacent tensor, as prescribed by the edges
of the graph. The resulting object describes a multi-dimensional tensor
whose indices are given by the remaining uncontracted indices of the
original network. This allows to express the original tensor
$T_{i_1,\dots,i_N}$ with its exponential number $d^N$ of parameters
(for simplicity, $i_k=1,\dots,d$ for all $k$) through a much smaller number of
parameters (typically scaling linearly in $N$), while retaining highly
non-trivial correlations between the different indices of $T$.
Surprisingly, in a remarkably broad range of applications, this
exponential saving in the number of parameters effectively does not
restrict the expressive power of the ansatz.

A particularly successful instance of tensor networks is given by
one-dimensional (1D) tensor networks, most commonly known by the names of
Matrix Product States (MPS) \cite{Perez-Garcia2007,Fannes1992} or Tensor Trains (TT) \cite{Oseledets2011}, among others. These
ansatzes have been independently (re-)discovered in different fields and have found
applications in a multitude of areas throughout data science
and physics. This includes, for instance, their use in modeling
and compressing high-dimensional data, such as for image compression or
the simulation of PDEs, as well as in the design of deep learning
algorithms, such as for hidden Markov models or Born machines \cite{Stoudenmire2017,Novikov2017,Cichocki2016,Cichocki2017,
Latorre2005,McCord2022,Iblisdir2007,Garcia-Ripoll2021,Bachmayr2016,Carleo2019,Glasser2019}.
In physics, Tensor Networks -- both one-dimensional MPS and
their higher-dimensional generalization, termed Projected Entangled Pair
States (PEPS) -- have proven to be a powerful ansatz for the simulation
and modeling of quantum many-body systems in a wide range of scenarios in
condensed matter, atomic, and high energy physics, as well as in quantum
chemistry \cite{Schollwoeck2011, Sharma2012, McCulloch2002, Dalmonte2016, Paeckel2019, Zheng2017,  Orus2019,Verstraete2008,Schon2005,Banuls2020,Pastawski2015,Jahn2021,Szalay2015,Olivares-Amaya2015}.
Notably, their suitability for faithfully approximating low-energy
states of local Hamiltonians can be rigorously proven in a broad range of
settings \cite{Hastings2007b,Hastings2007a,Cirac2021}, providing a formal justification of their success in the study
of such systems; and the fact that MPS allow to represent physical symmetries
locally \cite{Perez-Garcia2008a} -- that is, as acting on the auxiliary degrees of freedom -- has
enabled a comprehensive classification of unconventional phases under symmetries \cite{Chen2011,Schuch2011,Cirac2021}.

A generalization of MPS which is natural in many contexts are Matrix
Product Operators (MPOs) \cite{Zwolak2004}. MPOs describe
operators acting on a multi-dimensional tensor which themselves can be
expressed as 1D tensor networks and thus possess -- and preserve -- the
underlying 1D locality structure. On the one hand, such MPOs can encode
operations or transformations which themselves have a 1D structure; they
thus form natural generalization of strictly local operations, in that
they preserve the local correlation structure when acting on an MPS \cite{Murg,Haegeman2017,Cirac2017a,Sahinoglu2018}.
On the other hand, MPOs can characterize symmetries acting on a system
which have an intrinsic 1D correlation structure \cite{Roose2019,Couvreur2022}, or by implication
projections onto subspaces invariant under the symmetry which can exhibit
highly non-trivial structures \cite{Garre-Rubio2022}. Notably, in both cases, despite their local
form the structure of such operators can be fundamentally different from
what can be achieved by low-depth local circuits alone.

MPOs appear most naturally in the context of quantum many-body problems.
They can, for instance, describe evolutions with a local structure (but
not necessarily locally generated) such as in driven quantum systems,
i.e., Floquet physics \cite{Po2016}; and they can describe the symmetries of chains of
particles with anyonic statistics \cite{Feiguin2007,Gils2013,Garre-Rubio2022}.  However, the use of MPOs in many-body
physics goes far beyond this and extends deeply into the study of 2D systems: Boundaries
-- and in particular the correlations across those boundaries, i.e., the
entangement spectrum --
play a key role in
understanding the physics of strongly correlated quantum matter \cite{Li2008,Laflorencie2016}.
 The boundary of a 2D quantum system has, yet again, a natural 1D structure,
which can be made explicit by cutting a 2D tensor network description of
the bulk system at the boundary \cite{Cirac2011}. The density matrix which carries the
entanglement spectrum can then be described by an MPO at the boundary.
On the other hand, just as physical symmetries in MPS can be represented
as acting at the auxiliary indices, symmetries in 2D PEPS can be
understood as acting as MPOs at the 1D boundary \cite{Chen2011a,Williamson2014,Molnar2017}, in which case they
naturally form a representation of the symmetry group; indeed, the
classification of MPO representations of groups enabled the classification
of symmetry protected phases in 2D. However, there is another way in which
such symmetries can act: Rather than acting as faithful representations
both on the physical and the auxiliary degree of freedom, MPO symmetries
can also appear as symmetries of the auxiliary degrees of freedom on their
own, that is, as pure entanglement symmetries. These entanglement
symmetries are precisely what underlies topological order in 2D, and
they allow to comprehensively understand topologically ordered systems,
encompassing
their ground space structure as well as anyons and their braiding \cite{Schuch2010,Bultinck2017,Cirac2021} (see
\cref{sec:intro:topo} for a detailed discussion).

In many of these cases, MPOs naturally form algebraic structures, such as
group representations in the case of symmetries or evolutions. A
particularly strong structure arises in the case where the
MPO on its own describes a symmetry, such as the entanglement symmetries
which appear in topologically ordered systems: As both products
and linear combinations of symmetries are again symmetries,
the MPOs which appear in
topologically ordered systems naturally form MPO algebras. Given the
widespread use of MPS and MPOs, the importance of understanding
topologically ordered systems, and the key role played by MPO algebras in
this context, it is thus highly desirable to formalize the representation
theory of MPO algebras and their underlying algebraic structure, and to
understand the way in which additional conditions imposed on the
corresponding algebraic structures are reflected in properties of their
MPO representation, and vice versa \cite{Bultinck2017, Kawahigashi2020,Kawahigashi2021,Kawahigashi2021a}.

\subsection{Main results of this work}

In this paper, we lay out a framework for describing MPOs as (faithful)
representations of algebraic structures $\mathcal A$ which are both
algebras and co-algebras (i.e.\ duals of an algebra), with certain
compatibility conditions: pre-bialgebras, weak bialgebras, weak Hopf
algebras, and variants thereof, and we establish a dictionary between the
properties of those algebraic structures $\mathcal A$ and those of their MPO
representations.  This framework allows one to use the algebraic structure of
$\mathcal A$ to reason about MPOs, and conversely to use MPOs to derive
statements about $\mathcal A$.   We then show that under suitable
conditions, those structures -- and thus their MPO representations --
satisfy a condition which we use to define a new algebraic structure:
\emph{pulling-through algebras}; those algebras play a key role in the
construction and study of topologically ordered models from weak Hopf
algebras. We conclude by demonstrating how pulling-through algebras show
up in the construction of a large class of topological models; the
study of further applications in topological
order is left for future work.

Let us describe more specifically the key results of the work.  We start
by showing that MPS naturally appear as representations of coalgebras
(i.e.\ duals of algebras) in \cref{sec:MPS_coalg}. MPOs are then representations of
pre-bialgebras $\mathcal A$  -- coalgebras which are also algebras, with a
minimal compatibility condition (\cref{sec:bialgebra}). In both cases, the coalgebra element
$a\in\mathcal A$ is encoded in the boundary condition of the MPO or MPS.
We then show that semisimplicity of $\mathcal{A}^*$, the dual of $\mathcal{A}$, amounts to an MPS/MPO
which decomposes into irreducible
(i.e., normal/injective) blocks corresponding to the irreps of $\mathcal{A}^*$ -- importantly, for MPOs these irreducible blocks give rise to a
notion of sectors and their fusion, which links to topological order -- and that cocentral elements  are
represented by translational invariant MPOs.

We then study the effect of introducing the additional conditions which
make $\mathcal A$ a weak bialgebra (WBA) or weak Hopf algebra (WHA), respectively
(\cref{sec:WBAWHA}).  We show how those conditions give rise to additional
properties of the sectors of the MPO and their fusion, giving them the structure of a monoidal category in the case of a WBA and multi-fusion category in the case of a WHA. In particular, we show how to
construct a ``vaccum'' sector, as well as an
integral whose representation is a special projector (used later to
construct the topological models).  By further restricting to pivotal,
spherical, and $C^*$ weak Hopf algebras, we then use said integral to show the existence of
a ``pulling through structure'' satisfying a sequence of
increasingly strong conditions.  This motivates the definition of
pulling-through algebras in \cref{sec:pulling_through_alg}, whose MPO representation possesses
such a pulling through structure which satisfies the corresponding
conditions.  In \cref{sec:peps}, we construct PEPS based on pulling-through
algebras, show that their symmetry structure is scale invariant -- central
for follow-up work -- and demonstrate that in the case of $C^*$ Hopf
algebras, the resulting models are equivalent to the class of generalized
Kitaev models, discussed in \cref{sec:intro:topo} below.

This work is the first of a series of papers in which we utilize MPO algebras
to understand and classify topological phases of matter in quantum many
body systems.
The second paper is concerned with classifying phases in 1D
open quantum systems as equivalence classes of shallow circuits. At the
heart of this problem is the classification of renormalization fixed
point density operators, and the  present work provides the tools required
to build examples for fixed point MPOs, as well as channels that map between such
states.
The third paper of the series studies the structure of PEPS
representations of topological models constructed from pulling-through
algebras, as explained above. This is in particular relevant for
characterizing the representation of physical symmetries on the auxiliary
degrees of freedom, and thus the classification of symmetry-enriched
topological phases.

\subsection{Connection to topological order\label{sec:intro:topo}}

As mentioned before, MPO algebras are especially important in the study of
topologically ordered phases in 2D tensor networks, i.e., PEPS. In particular, the
definition of pulling-through algebras in this work is directly motivated
by the study of topological order in PEPS. In the following, we explain
this context, and the way in which our results fit in, in more detail.

Topologically ordered phases are phases which exhibit order which cannot
be detected by any local order parameter. Instead, they are characterized
by a global ordering in their quantum correlations, also known as
entanglement.  Characteristic to these systems are their degenerate ground
states, which are locally indistinguishable and whose number depends on the
topology of the surface on which the system is defined -- both
incompatible with local order parameters -- as well as the presence of
excitations with non-trivial statistics in the system, termed ``anyons''.

In a seminal work, Kitaev~\cite{Kitaev2003} first proposed a Hamiltonian
model for a spin system with the aforementioned properties,
the Toric Code model,
as well as its generalization to finite groups $G$, the quantum double
models.  An alternative construction was provided by the string-net
models of Levin and Wen~\cite{Levin2005a}. While string-net models can be
understood as another way of generalizing the Toric Code model, they are in
fact motivated by topological quantum field theories.  Hence, they
use a category theoretical language (the construction utilizes
unitary fusion categories with some additional restrictions -- the
so-called tetrahedral symmetry of the $F$-symbols), as opposed to the
algebraic approach used for the quantum double models (which are
constructed as representations of the quantum double $D(G)$).
Over the years, generalizations of both classes of models have been
devised.  Kitaev already noted in his original work~\cite{Kitaev2003} that the same
construction also works for semisimple Hopf algebras. This has later been
worked out in detail~\cite{Balsam2012,Buerschaper2013}, and further
generalized to $C^*$-weak Hopf algebras~\cite{Chang2014}.  For string-net
models, it has been shown that the requirement of tetrahedral symmetry can
be dropped~\cite{Hahn2020}, and thus string-net models can be built from
arbitrary unitary fusion categories; further, the construction has been
generalized to build on bimodule categories instead of fusion categories
\cite{Lootens2020}.

Both Kitaev models and string-net models admit tensor network descriptions. Such
a description has been developed first for Kitaev's Toric Code
model~\cite{Verstraete2006a} and later generalized to Kitaev models based on
finite groups~\cite{Schuch2010} and Hopf algebras
\cite{Buerschaper2013}, and separately for string-net
models~\cite{Buerschaper2008,Gu2009a}.  A characteristic feature
of the PEPS description both of Kitaev models based on finite groups
and of string-net models is that the tensors which define the state
possess symmetries which act solely on the auxiliary degrees of freedom of
the tensor. This symmetry is intimately tied to the topological features
of the system, as it  allows to explain both its ground space degeneracy
and the presence of anyonic excitations
\cite{Schuch2010,Sahinoglu2014,Bultinck2017}, whereas breaking it
even slightly
leads to an immediate breakdown of topological order in
2D~\cite{Chen2010a} (but not in 3D~\cite{Williamson2021,Delcamp2021}).
A crucial property of these symmetries is their size-independence: They
are given either by tensor powers of a local symmetry generator (for the double
models) or, more generally, by homogenous MPOs (for string-net models),
such that every region in the PEPS possesses the same MPO symmetries.  The
specific properties of the underlying topological phase can then be
inferred by studying the algebraic properties of the corresponding MPO
algebra.  Remarkably, it is even possible to build topological models in
this very phase from nothing but the MPO symmetry itself: the PEPS tensor
is then constructed from the MPO by placing it on a fixed-size ring with suitable boundaries \cite{Schuch2010,Bultinck2017,Cirac2021}.

Despite the success of MPO symmetries in understanding, characterizing, and
simulating topological order in the phases of the Kitaev double models of
finite groups and of the string-net
models~\cite{Schuch2010,Bal2018,Lootens2021,Duivenvoorden2017}, the
picture is unfortunately not complete. First off, for the Kitaev model
based on Hopf algebras the known tensor network constructions
\cite{Buerschaper2013} do not evidently display any such symmetries; and
moving to an even broader setting, for the Kitaev models constructed from weak Hopf
algebras not even a tensor network description is known -- which, in turn, should
be possible to construct once the underlying MPO symmetries have been
identified. This
lack of knowledge is the more surprising given that weak Hopf algebras
correspond to multi-fusion categories~\cite{Etingof2002}, and thus,
they exhibit the same type of topological order as the corresponding
string-net models.
A key reason why, despite this connection, an understanding of the MPO
symmetries underlying Kitaev models for (weak) Hopf algebras is missing is
the fact that the MPO symmetries for string-net models are constructed in
a category theoretical language.  To understand the MPO symmetries
relevant for describing Kitaev models, however, an algebraic language is
clearly more natural. This is precisely what we achieve in this work:
We show that weak Hopf algebras correspond to MPO algebras with specific
properties, most importantly the pulling-through structure. Using these
MPO algebras, we can then construct a PEPS representation for Kitaev
models based on any weak Hopf algebra;  we explicitly show the connection
between both representations for the case where the MPOs are built from a
$C^*$-Hopf algebra.

On a more abstract level, the relation between the MPO symmetries of the
string-net models and the MPOs representations of semisimple weak Hopf
algebras is as follows.
The MPOs constructed in this work are representations of semisimple weak
Hopf algebras. In turn, representations of semisimple weak Hopf algebras
are known to be exactly multi-fusion categories \cite{Etingof2002}.
Vice versa, every (multi-)fusion category arises as the category of
representations of some weak Hopf algebra. The MPO symmetries of the
string-net model based on any given fusion category will thus form a
representation of the corresponding weak Hopf algebra. In fact, that weak
Hopf algebra can be constructed from the MPO symmetries themselves, when
closed with arbitrary boundary conditions. It is important to note that
while (ordinary) string-net models are built on a single fusion category,
general representations of weak Hopf algebras involve two different fusion
categories: one is the representation category of the weak Hopf algebra,
the other is the category (or a subcategory) of the representations of the
dual weak Hopf algebra. The correspondence between fusion categories,
MPOs, and weak Hopf algebras then suggests that string-net models based on
bimodule categories~\cite{Lootens2020} and PEPS constructed from weak Hopf
algebras are actually the same. The exact details on how to map these
models to each other is left for future work.

\subsection{Structure of the paper}

This paper is structured as follows. First, in \cref{sec:tensor_calculus}
we introduce the graphical notation of tensor calculus used throughout the
paper. In \cref{sec:MPS_coalg} we introduce coalgebras and matrix product
states (MPS), and establish the relation between them: one can think of an
MPS as a representation of a coalgebra. We specialize to the case where
the coalgebra is cosemisimple; in this case the representing MPS is in
canonical form, i.e.\ it is a sum of smaller bond dimensional injective
MPS.  In \cref{sec:bialgebra} we investigate pre-bialgebras: coalgebras
that are algebras as well such that the two structures satisfy a
compatibility condition. Correspondingly, the previously defined MPS
representations of pre-bialgebras will become MPOs and these MPOs are
closed under multiplication. In \cref{sec:WBAWHA} we introduce weak
bialgebras (pre-bialgebras such that their unit and counit satisfy certain
properties) and weak Hopf algebras (weak bialgebras with an extra
operation called the antipode). The main result of this section is
\cref{thm:wha_special_integral}, which proves the above mentioned special
integral in cosemisimple WHAs over $\mathbb{C}$. In \cref{sec:pivotal_WHA}
we specialize to pivotal WHAs, which, as mentioned above, allows us to
define a ``pulling-through'' structure on the WHA. In \cref{sec:C_star_WHA}
we further specialize to $C^*$-WHAs,  which guarantees extra properties of
this pulling-through structure. In \cref{sec:pulling_through_alg} we
investigate pulling-through algebras independently from the previous weak
Hopf algebra properties. We develop a graphical language suited for these
algebras that we then use in \cref{sec:peps} to define PEPS that have
stable entanglement symmetries described by this pulling-through algebra, and
which we show to precisely describe Kitaev's quantum double models in the
case of Hopf algebras.

\section{Graphical notation of tensor calculus}\label{sec:tensor_calculus}

In this section we introduce the graphical notation of tensor calculus that we use throughout the paper. This graphical notation is especially useful to visualize equations that involve the contraction of many higher rank tensors (i.e.\ tensors with more than two indices) and it is standard in the field of tensor networks. In this paper, however, we face extra challenges as we use the graphical language parallel to an algebraic one, and thus we have to modify the usual graphical language in order to be able to translate between the two languages.

From a computational point of view, tensors are just multi-dimensional arrays. In the usual graphical notation of tensor calculus, one denotes tensors by dots (and various shapes) with lines connected to them. The number of lines connected to the dot (or other shape) is the rank of the tensor, and each line corresponds to one of the vector spaces in the tensor product. For example, the following diagrams represent a scalar, a vector and a matrix, respectively:
\begin{equation*}
  s =
  \begin{tikzpicture}[baseline=-1mm]
    \node[tensor, label=below:$s$] {};
  \end{tikzpicture} \ , \quad
  v =
  \begin{tikzpicture}[baseline=-1mm]
    \node[tensor, label=below:$v$] (v) {};
    \draw (v)--++(-0.5,0);
  \end{tikzpicture} \ , \quad
  A =
  \begin{tikzpicture}[baseline=-1mm]
    \node[tensor, label=below:$A$] (A) {};
    \draw (-0.5,0) -- (A);
    \draw (A) -- (0.5,0);
  \end{tikzpicture} \ .
\end{equation*}
Tensor contraction is denoted by joining the lines corresponding to the contracted indices of the two tensors. For example the scalar product of two vectors, a matrix acting on a vector and the product of two matrices are denoted by the following diagrams, respectively:
\begin{equation*}
  \sum_i w_i v_i =
  \begin{tikzpicture}[baseline=-1mm]
    \node[tensor, label=below:$w$] at (0,0) (v) {};
    \node[tensor, label=below:$v$] at (1,0) (w) {};
    \draw (v)--(w);
  \end{tikzpicture} \ , \quad
  \sum_j A_{ij} v_j =
  \begin{tikzpicture}[baseline=-1mm]
    \node[tensor, label=below:$A$] at (0,0) (A) {};
    \node[tensor, label=below:$v$] at (0.7,0) (v) {};
    \draw (v)--(A);
    \draw (A)--+(-0.5,0);
  \end{tikzpicture} \ , \quad
 \sum_{j}A_{ij}B_{jk} =
 \begin{tikzpicture}[baseline=-1mm]
    \node[tensor, label=below:$A$] at (0,0) (A) {};
    \node[tensor, label=below:$B$] at (0.7,0) (B) {};
    \draw (B)--(A)--++(-0.5,0);
    \draw (B)--++(0.5,0);
 \end{tikzpicture} \ .
\end{equation*}
Here we have made the implicit assumption that the first index is the left line, the second one is the right one. For higher rank tensors and more complicated contraction schemes one has to keep track of which index belongs to which line (by e.g.\ fixing a convention such as in the previous figure). In the following we outline a notation that allows us to distinguish the different indices of the tensor without requiring them to be always at the same position. This notation is thus suitable to depict more complicated tensor constructions such as the definition of a PEPS.

To introduce our modification of the graphical notation, let us first formalize what tensors and tensor contractions are. Rank-$n$ tensors are elements of the tensor product $V_1 \otimes \dots \otimes V_n$ for some finite dimensional vector spaces $V_1 \dots V_n$. One can naturally take tensor products of tensors: for example, if $V_1$, $V_2$ and $V_3$ are vector spaces and  $r = v_1 \otimes v_2 \in V_1 \otimes V_2$ and $s=v_3\otimes f \in V_3\otimes V_2^*$, where $V_2^*$ denotes the space of linear functionals on $V_2$, then their tensor product $t=r\otimes s$ is $t=v_1 \otimes v_2 \otimes v_3 \otimes f$. Tensor contraction (without introducing a scalar product) is then the following operation: if amongst the $n$ components of the tensor product both a vector space $V$ and its dual $V^*$ appears, then one can form a rank-$(n-2)$ tensor by acting with the linear functional in $V^*$ on the vector in $V$. For example, the tensor $t=v_1 \otimes v_2 \otimes v_3 \otimes f$ defined above is an element of the space $V_1 \otimes V_2 \otimes V_3 \otimes V_2^*$, and thus one can contract its second and fourth components to obtain a rank-two tensor $\mathcal{C}_{24}(t) = f(v_2) \cdot v_1 \otimes v_3 \in V_1 \otimes V_3$.

As we have seen, to make sense of tensor contraction without a scalar product, it is important to differentiate between vector spaces and linear functionals. We will denote indices corresponding to vectors by outgoing arrows, while indices corresponding to linear functionals by incoming arrows. Moreover, to distinguish between the different indices ot the tensor, we will label the lines with vector spaces. For outgoing arrow, the label is the corresponding tensor component. For incoming arrow, the label is the vector space the given tensor component is the dual of. For example, a vector $v\in V$, a linear functional $f\in V^*$ and a rank-two tensor $A\in V\otimes V^*$ is denoted by
\begin{equation*}
  v =
 \ .
\end{equation*}
The first depiction of $A$ suggests to interpret it as a linear map $\hat{A}:V \to V$, while the last equation suggests to interpret it as a linear map $V^* \to V^*$; this linear map is $\hat{A}^T$. Such rank-two tensors are  sometimes labeled by the linear map $\hat{A}$; when this is the case, we will try to be consistent and label the vector spaces by $V$ and not by $V^*$.

As we will depict tensor networks in two dimensions,  sometimes  it will be convenient to rotate tensors. This means that vectors don't always point to the left, and thus we actually need the arrows to distinguish between the two indices of a rank-two tensor. Such a rotation is, for example, the following:
\begin{equation*}
    \ .
\end{equation*}

\section{Coalgebras and matrix product states}\label{sec:MPS_coalg}

In this section we define coalgebras and matrix product states and show that MPS can be thought of as representations of coalgebras. We also show that this correspondence holds the other way around as well: given an MPS tensor, one can construct a coalgebra such that the MPS forms a representation of the constructed coalgebra. This observation makes thus MPS and coalgebras completely equivalent. We elaborate on a special case: when the coalgebra is cosemisimple, the corresponding MPS tensor is a sum of injective tensors (see \cref{def:injectivity}), and vice versa, given an MPS that is a sum of injective tensors, the constructed coalgebra is cosemisimple. We also define the notion of cocentral and non-degenerate coalgebra elements and show how these properties are reflected in the MPS representation.

We start by defining coalgebras.

\begin{definition}[Coalgebra]
  The triple $(\mathcal{C},\Delta,\epsilon)$ is a \emph{coalgebra} if $\mathcal{C}$ is a finite dimensional vector space over $\mathbb{C}$, $\Delta:\mathcal{C}\to \mathcal{C}\otimes\mathcal{C}$ is a linear map called comultiplication such that it is associative:
  \begin{equation*}
      (\Delta\otimes\id ) \circ \Delta = (\id \otimes \Delta) \circ \Delta,
  \end{equation*}
  and $\epsilon\in\mathcal{C}^*$ is a linear functional called \emph{counit}, such that
    \begin{equation*}
      (\epsilon\otimes\id)\circ\Delta = (\id\otimes\epsilon)\circ \Delta = \id.
    \end{equation*}
\end{definition}
Coalgebras emerge as the  dual of algebras: Given a finite dimensional algebra $\mathcal{A}$ with product $\mu_{\mathcal{A}}:\mathcal{A}\otimes\mathcal{A}\to \mathcal{A}$ and unit $1$, we can define a coproduct on $\mathcal{A}^*$ by defining $\Delta_{\mathcal{A}^*}:\mathcal{A}^*\to\mathcal{A}^*\otimes \mathcal{A}^*$ as $\Delta_{\mathcal{A}^*} = \mu_{\mathcal{A}}^T$, that is, by
\begin{equation*}
  \Delta_{\mathcal{A}^*}(f)(x\otimes y) = f(xy),
\end{equation*}
where $x,y\in \mathcal{A}$ and $f\in \mathcal{A}^*$: Associativity of $\Delta_{\mathcal{A}^*} = \mu_{\mathcal{A}}^T$ is equivalent to the associativity of $\mu_{\mathcal{A}}$, and the map given by $f\mapsto f(1)$ defines the counit of $\mathcal{A}^*$. Vice versa, if $\mathcal{C}$ is a coalgebra with coproduct $\Delta_{\mathcal{C}}$ and counit $\epsilon$, then we can  naturally give $\mathcal{C}^*$ an algebra structure by defining the product via $\mu_{\mathcal{C}^*} = \Delta_{\mathcal{C}}^T$, i.e.\ by
\begin{equation*}
  (fg)(x) = (f\otimes g)\circ \Delta_{\mathcal{C}} (x),
\end{equation*}
where $f,g\in\mathcal{C}^*$ and $x\in \mathcal{C}$; the unit of $\mathcal{C}^*$ is then $\epsilon$.

Associativity of the coproduct allows us to write $\Delta^2(x)$ instead of $(\Delta\otimes\id ) \circ \Delta$, and  $\Delta^n(x)$ for $n$ repeated application of $\Delta$ on $x$. In the following we will use Sweedler's notation of the coproduct and write
\begin{equation*}
  \Delta^n(x) = \sum_{(x)} x_{(1)} \otimes x_{(2)} \otimes \dots \otimes x_{(n+1)}.
\end{equation*}
We will show below that this shorthand notation actually hides a more complicated sum that has a special structure called matrix product state.
\begin{definition}[Matrix product states]
  Let $(V_i)_{i=1}^n$ and $(W_j)_{j=0}^n$ be two collections of finite dimensional vector spaces over $\mathbb{C}$. An MPS is given by tensors $A_i \in V_i \otimes W_{i-1} \otimes W_i^*$ ($i=1,\ldots, n$), $A_i = \sum_k \ket{k} \otimes A_i^k$ and a matrix $X\in W_n \otimes W_0^*$; the state generated by the MPS is given by
  \begin{equation*}
    \ket{\Psi} = 	\sum_k \tr \left( X \cdot A_1^{k_1} \cdots A_n^{k_n} \right)\  \ket{k_1 \dots k_n} = \
    	  \begin{tikzpicture}
    	   \draw[virtual] (0.5,0) rectangle (5.5,-0.5);
         \node[tensor,label=below:$X$] (x) at (1,0) {};
    	    \foreach \x/\t in {2/1,3/2,5/n}{
    	      \node[tensor,label=below:$A_\t$] (t\x) at (\x,0) {};
    	      \draw[->-] (t\x) --++ (0,0.5);
    	    }
    	    \node[fill=white] (dots) at (4,0) {$\dots$};
          \draw[virtual,->-] (t2) -- (x);
          \draw[virtual,->-] (t3) -- (t2);
          \draw[virtual,->-] (dots) -- (t3);
          \draw[virtual,->-] (t5) -- (dots);
          \draw[virtual,->-] (0.5,-0.5) -- (5.5,-0.5);
    	  \end{tikzpicture}  \ .
  \end{equation*}
  We say that the MPS is translation invariant with open boundary condition if $V_1 = \dots = V_n$, $W_0=\dots = W_n$ and $A_1 = \dots = A_n$.
\end{definition}

Let us now show that in any coalgebra $\mathcal{C}$ the repeated coproduct of an element $x\in\mathcal{C}$, $\Delta^{n-1}(x)$, can be represented as an MPS on $n$ sites.
\begin{theorem}[MPS representation of coalgebras]\label{thm:coalg_to_mps}
  Let $\mathcal{C}$ be a coalgebra, $V_i$ be finite dimensional vector spaces over $\mathbb{C}$  and $\phi_i : \mathcal{C}\to V_i$  linear maps ($i=1,\ldots, n$). Let $W$ be a vector space and $\psi: \mathcal{C}^* \to \End(W)$ be an injective representation of the algebra $\mathcal{C}^*$. Let $A_i \in V_i \otimes \End(W)$ be defined by
  \begin{equation*}
    \begin{tikzpicture}
      \node[tensor, label = below: $A_i$] (t) at (0,0) {};
      \draw[virtual, -<-] (t) --++ (0.5,0);
      \draw[virtual, ->-] (t) --++ (-0.5,0);
      \draw[->-] (t) --++ (0,0.5);
    \end{tikzpicture} =
    \sum_{x\in B} \phi_i(x) \otimes \psi(\delta_x),
  \end{equation*}
  where $B$ is a basis of $\mathcal{C}$, and $\delta_x$ denotes the dual basis elements (i.e.\ $\delta_x\in \mathcal{C}^*$ and $\delta_x(y) = \delta_{x,y}$ for  any $x,y\in B$). Then for all $x\in\mathcal{C}$ there exists a matrix $b(x)\in \End(W)$ such that for all $n>0$,
  \begin{equation*}
    (\phi_1 \otimes \dots \otimes \phi_n) \circ \Delta^{n-1} (x) =
    	  \begin{tikzpicture}
    	    \draw[virtual] (0.5,0) rectangle (5.5,-0.5);
         \node[tensor,label=below:$b(x)$] (x) at (1,0) {};
    	    \foreach \x/\t in {2/1,3/2,5/n}{
    	      \node[tensor,label=below:$A_\t$] (t\x) at (\x,0) {};
    	      \draw[->-] (t\x) --++ (0,0.5);
    	    }
    	    \node[fill=white] (dots) at (4,0) {$\dots$};
          \draw[virtual,->-] (t2) -- (x);
          \draw[virtual,->-] (t3) -- (t2);
          \draw[virtual,->-] (dots) -- (t3);
          \draw[virtual,->-] (t5) -- (dots);
          \draw[virtual,->-] (0.5,-0.5) -- (5.5,-0.5);
    	  \end{tikzpicture}  \ .
  \end{equation*}
\end{theorem}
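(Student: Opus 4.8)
The plan is to first pin down the boundary matrix $b(x)$, and then verify the MPS identity by a direct contraction that exploits the fact that $\psi$ is an algebra homomorphism and that the product on $\mathcal{C}^*$ is dual to the coproduct $\Delta$. The existence of $b(x)$ is the only genuinely non-trivial point; the rest is bookkeeping with Sweedler's notation.

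First I would construct $b(x)$. Inspecting the case $n=1$, where $\Delta^{0}=\id$ and the right-hand side reads $\sum_{x_1\in B}\phi_1(x_1)\,\tr(b(x)\psi(\delta_{x_1}))$, suggests the defining requirement that $b(x)\in\End(W)$ satisfy $\tr(b(x)\,\psi(f))=f(x)$ for every $f\in\mathcal{C}^*$. I claim such a $b(x)$ exists for each $x$. Consider the linear map $\Phi:\End(W)\to(\mathcal{C}^*)^*\cong\mathcal{C}$ sending $M$ to the functional $f\mapsto\tr(M\psi(f))$. Its transpose sends $f\in\mathcal{C}^*$ to the functional $M\mapsto\tr(M\psi(f))$ on $\End(W)$, which, under the nondegenerate trace pairing $\End(W)\cong\End(W)^*$, is identified with $\psi(f)$. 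Since $\psi$ is injective, this transpose is injective, hence $\Phi$ is surjective; thus every $x\in\mathcal{C}$, viewed as an element of $(\mathcal{C}^*)^*$, lies in the image of $\Phi$, which is exactly the existence of $b(x)$. (One can moreover pick a linear section, making $x\mapsto b(x)$ linear, though the statement needs only existence.) \textbf{This existence step, resting on injectivity of $\psi$ together with nondegeneracy of the trace form, is the main obstacle}; once it is cleared, everything is computation.

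With $b(x)$ in hand, I would contract the MPS. Expanding each $A_i=\sum_{x_i\in B}\phi_i(x_i)\otimes\psi(\delta_{x_i})$, the right-hand side of the claimed identity becomes
\[
  \sum_{x_1,\dots,x_n\in B}\phi_1(x_1)\otimes\cdots\otimes\phi_n(x_n)\;\tr\!\big(b(x)\,\psi(\delta_{x_1})\cdots\psi(\delta_{x_n})\big).
\]
Because $\psi$ is a representation of the algebra $\mathcal{C}^*$, the product of matrices collapses to $\psi(\delta_{x_1}\cdots\delta_{x_n})$, and the defining property of $b(x)$ then turns the trace into $(\delta_{x_1}\cdots\delta_{x_n})(x)$.

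Finally, I would use that the product on $\mathcal{C}^*$ is the transpose of the coproduct, established earlier as $\mu_{\mathcal{C}^*}=\Delta_{\mathcal{C}}^{T}$. Iterating $(fg)(y)=(f\otimes g)\circ\Delta(y)$ and invoking associativity (i.e.\ using $\Delta^{n-1}$) gives $(\delta_{x_1}\cdots\delta_{x_n})(x)=\sum_{(x)}\delta_{x_1}(x_{(1)})\cdots\delta_{x_n}(x_{(n)})$. Substituting this and carrying out the sum over each basis index separately, the resolution of the identity $\sum_{x_i\in B}\delta_{x_i}(x_{(i)})\,\phi_i(x_i)=\phi_i(x_{(i)})$ reassembles each Sweedler leg, yielding $\sum_{(x)}\phi_1(x_{(1)})\otimes\cdots\otimes\phi_n(x_{(n)})=(\phi_1\otimes\cdots\otimes\phi_n)\circ\Delta^{n-1}(x)$, as required. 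The mild surprise worth highlighting is that the single matrix $b(x)$, fixed already by the $n=1$ constraint, works simultaneously for all $n$.
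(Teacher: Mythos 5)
Your proposal is correct and follows essentially the same route as the paper: both rest on the existence of $b(x)$ with $\tr\bigl(b(x)\,\psi(f)\bigr)=f(x)$ (guaranteed by injectivity of $\psi$ and nondegeneracy of the trace pairing), and both reduce the contraction to the duality $\mu_{\mathcal{C}^*}=\Delta_{\mathcal{C}}^{T}$ together with the dual-basis resolution $\sum_{y\in B}\delta_{y}(z)\,\phi_i(y)=\phi_i(z)$. The only difference is cosmetic: you spell out the surjectivity argument that the paper merely asserts, and you contract against the boundary first rather than establishing the open-chain identity $\sum_{x\in B}(\phi_1\otimes\cdots\otimes\phi_n)\circ\Delta^{n-1}(x)\otimes\psi(\delta_x)$ before closing it.
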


Let us remark that the tensor $A_i$ is independent of the concrete choice of the basis $B$ of $\mathcal{C}$, as the expression $\sum_{x \in B} x\otimes \delta_x$ is independent of $B$. Let us also remark that might be many different choices for the matrix $b(x)$ satisfying the required equation.

\begin{proof}
  Notice that
  \begin{equation*}
    \sum_{x,y} f(x) \cdot g(y) \cdot \delta_x \delta_y = fg = \sum_{z\in B} fg(z) \cdot \delta_z = \sum_{z\in B} \sum_{(z)} f(z_{(1)}) \cdot g(z_{(2)}) \cdot \delta_z,
  \end{equation*}
  where in both equations we have used that for any $f\in\mathcal{C}^*$, $ f = \sum_{x\in B} f(x)\cdot \delta_x$. As this is true for all $f,g\in\mathcal{C}^*$, we conclude that
  \begin{equation*}
    \sum_{x,y\in B} x\otimes y \otimes \delta_x \delta_y = \sum_{z\in B} \sum_{(z)} z_{(1)} \otimes z_{(2)} \otimes \delta_z = \sum_{z\in B} \Delta(z) \otimes \delta_z.
  \end{equation*}
  This means that
  \begin{equation*}
  \ .
\end{equation*}

Let us now show that the construction of \cref{thm:coalg_to_mps} can be reversed in the translation invariant case: given a translation invariant MPS with open boundary condition, one can construct a coalgebra $\mathcal{C}$ such that the MPS becomes a representing MPS of $\mathcal{C}$: Indeed, let $A\in V\otimes W \otimes W^*$ be an MPS tensor and let us write $A = \sum_i \ket{i} \otimes A^i$ ($i=1,\ldots, d$ with $d=\mathrm{dim}(V)$) with $A^i \in \End(W)$. Let us define the algebra $\mathcal{A}$ as
\begin{equation*}
  \mathcal{A} = \operatorname{Span}  \bigcup_{n\in \mathbb{N}} \left\{ A^{i_1} A^{i_2} \cdots A^{i_n} \middle | (i_1, \ldots, i_n) \in \{ 1,\ldots, d\}^{\times n} \right\},
\end{equation*}
and let $\mathcal{C} = \mathcal{A}^*$. Let us fix now a basis $B$ of $\mathcal{C}$; elements of this basis are denoted by $x,y,\ldots$. This basis then also fixes a basis (the dual basis) on $\mathcal{C}^* = \mathcal{A}$. Elements of this basis are denoted by $\delta_x,\delta_y,\ldots$. By definition the matrices $A^i$ are elements of $\mathcal{A}$, and thus one can expand them in this (dual) basis. One can thus write
$A = \sum_{x\in B} \ket{v_x} \otimes  \delta_x$ for some vectors $\ket{v_x}\in V$. This tensor then can be interpreted as a map $\phi:\mathcal{C}\to V$ by $\phi(x) = \sum_{y\in B} \delta_y(x) \ket{v_y}$. Note that -- by the definition of the dual basis -- for any $x\in B$, $\phi(x) = \ket{v_x}$. This implies that the following equation also holds: $A  = \sum_{x\in B} \phi(x) \otimes \psi(\delta_x)$, with $\psi=\id$. Finally, as any element $x\in\mathcal{C}$ is a linear functional on $\mathcal{A}$ ($\mathcal{C} = \mathcal{A}^*$), one can find a matrix $b(x)$ such that $x(m) = \tr (b(x) m)$ for all matrices $m\in \mathcal{A}$. With this, we have obtained that for any $x\in \mathcal{C}$,
\begin{equation*}
  \phi^{\otimes n} \circ \Delta^{n-1} (x) =
  \begin{tikzpicture}
    \draw[virtual] (0.5,0) rectangle (5.5,-0.5);
    \node[tensor,label=below:$b(x)$] (x) at (1,0) {};
    \foreach \x/\t in {2/1,3/2,5/n}{
      \node[tensor,label=below:$A$] (t\x) at (\x,0) {};
      \draw[->-] (t\x) --++ (0,0.5);
    }
    \node[fill=white] (dots) at (4,0) {$\dots$};
    \draw[virtual,->-] (t2) -- (x);
    \draw[virtual,->-] (t3) -- (t2);
    \draw[virtual,->-] (dots) -- (t3);
    \draw[virtual,->-] (t5) -- (dots);
    \draw[virtual,->-] (0.5,-0.5) -- (5.5,-0.5);
  \end{tikzpicture}  \ ,
\end{equation*}
i.e.\ the MPS defined above forms a representation of $\mathcal{C}$ with the properties listed in \cref{thm:coalg_to_mps}.

\subsection{Cosemisimplicity and injectivity of the representing MPS}

In this section we introduce cosemisimple coalgebras as well as injective MPS and examine the connection between these two properties: the MPS representation of a cosemisimple coalgebra decomposes into a sum of injective MPS, and conversely, given an MPS that decomposes into a direct sum of injective MPS,  the corresponding coalgebra is cosemisimple.

Let $\mathcal{C}$ be a coalgebra (finite dimensional, over $\mathbb{C}$). As we have seen in the previous section, $\mathcal{C}^*$ has a natural algebra structure. In the following we will talk about representations of this algebra $\mathcal{C}^*$, as the MPS construction in the previous section uses the representations of that algebra. Recall that two representations $\psi_1: \mathcal{C}^*\to \End(W_1)$ and $\psi_2: \mathcal{C}^*\to \End(W_2)$ are called equivalent if there is an invertible linear map $Z:W_1\to W_2$ such that $\psi_1(f) = Z^{-1} \cdot \psi_2(f) \cdot Z$ for all $f\in\mathcal{C}^*$. In particular, the dimension of the two equivalent representations coincide, $\mathrm{dim}(W_1) = \mathrm{dim}(W_2)$. The set of irreducible representation (irrep) equivalence classes of the algebra $\mathcal{C}^*$ is denoted by $\irr(\mathcal{C}^*)$, and the elements of this set (i.e.\ the different irrep equivalence classes) will be denoted by small Roman letters $a,b,c,\ldots$. The dimension of (all) irreps from the class $a$ will be denoted by $D_a$. For convenience, let us fix a concrete representation $\psi_{a}$ on vector space $W_a$ from each irrep class $a$. Recall that by the density theorem \cite{Etingof}, $\psi_a(\mathcal{C}^*) = \End(W_a) \simeq \mathcal{M}_{D_a}$ for all irreps $\psi_a$, where $\mathcal{M}_{D_a}$ denotes the set of $D_a\times D_a$ matrices over $\mathbb{C}$; in fact $\phi(\mathcal{C}^*) = \bigoplus_{a\in I \subseteq \irr(\mathcal{C}^*)} \mathcal{M}_{D_a} \otimes \id_{m_a}$ for all representations of the form $\phi(x) = \bigoplus_{a\in I \subseteq \irr(\mathcal{C}^*)} \phi_a(x)\otimes \id_{m_a}$.

We say that the coalgebra $\mathcal{C}$ is \emph{cosemisimple} if the algebra $\mathcal{C}^*$ is semisimple, i.e.\ if $\mathcal{C}^*\simeq \bigoplus_{a\in \irr(\mathcal{C}^*)} \mathcal{M}_{D_a}$. In particular, $\mathcal{C}^*$ has finitely many irrep classes. If $\mathcal{C}^*$ is semisimple, then any representation $\psi$ of it, up to a basis transformation, is of the form
\begin{equation*}
  \psi(x) \simeq \bigoplus_{a\in I \subseteq \irr(\mathcal{C}^*)} \psi_a (x) \otimes \id_{m_a},
\end{equation*}
where $a$ runs over a subset $I$ of the equivalence classes of irreps of $\mathcal{C}^*$ and $\psi_a$ are the previously fixed representatives from the class $a$, and the numbers $m_a$ denote the multiplicity of the irrep $\psi_a$ in the decomposition of $\psi$. The representation $\psi$ is injective if and only if all irrep classes are present in this decomposition\footnote{Note that even if $\mathcal{C}^*$ arises as the group algebra $\mathbb{C}[G]$ of some finite group $G$, we are talking about the representation of the group algebra, and not the representation of the group; that is, we require the map $\phi: \mathbb{C}[G]\to \End(V)$ to be injective, not the map $\phi|_G: G\to \End(V)$. }, i.e.\ if $I=\irr(\mathcal{C}^*)$. As $\psi$ is a direct sum of irreps, the density theorem applies and thus $\psi(\mathcal{C}^*) = \bigoplus_{a\in I \subseteq \irr(\mathcal{C}^*)} \mathcal{M}_{D_a} \otimes \id_{m_a}$.

The MPS tensor $A$ in \cref{thm:coalg_to_mps} is constructed using a representation $\psi$ of $\mathcal{C}^*$ that is assumed to be injective. Therefore, if $\mathcal{C}$ is cosemisimple, $A$ decomposes as
\begin{equation*}
  A \simeq \sum_{x\in B} \phi(x) \otimes \bigoplus_{a\in \irr(\mathcal{C}^*)} \psi_a (\delta_x) \otimes \id_{m_a}.
\end{equation*}
As the defining property of $b(x)$ is that $\tr \left(b(x) \psi(f)\right) = f(x)$, the matrix $b(x)$  can also be chosen w.l.o.g.\  in this form, i.e.\ such that in the same basis as $\psi$, it reads
\begin{equation*}
   b(x) \simeq \bigoplus_{a\in \irr(\mathcal{C}^*)} b_a (x) \otimes \id_{m_a}.
\end{equation*}
If $b(x)$ is in this form, then it is uniquely defined by the equation  $\tr \left(b(x) \psi(f)\right) = f(x)$ and the map $x\mapsto b(x)$ is linear and a bijection between $\mathcal{C}$ and $\bigoplus_{a\in \irr(\mathcal{C}^*)} \mathcal{M}_{D_a}$.

We have thus obtained that the MPS representing a cosemisimple coalgebra decomposes into a sum of MPSs with smaller bond dimension,
\begin{equation*}
   \tikzsetnextfilename{61048b72-7ab1-44b0-a7b8-c3db25636324}
  \begin{tikzpicture}
    \draw[virtual] (0.5,0) rectangle (5.5,-0.5);
    \node[tensor,label=below:$b(x)$] (x) at (1,0) {};
    \foreach \x/\t in {2/1,3/2,5/n}{
      \node[tensor] (t\x) at (\x,0) {};
      \draw[->-] (t\x) --++ (0,0.5);
    }
    \node[fill=white] (dots) at (4,0) {$\dots$};
    \draw[virtual,->-] (t2) -- (x);
    \draw[virtual,->-] (t3) -- (t2);
    \draw[virtual,->-] (dots) -- (t3);
    \draw[virtual,->-] (t5) -- (dots);
    \draw[virtual,->-] (0.5,-0.5) -- (5.5,-0.5);
  \end{tikzpicture}  \ =
  \sum_{a\in \irr(\mathcal{C}^*)} m_a \cdot
  \tikzsetnextfilename{385cccbc-93b8-48cf-a742-436e2cc9f5fe}
  \begin{tikzpicture}
    \draw[virtual] (0.5,0) rectangle (5.5,-0.5);
    \node[tensor,label=below:$b_a(x)$] (x) at (1,0) {};
    \foreach \x/\t in {2/1,3/2,5/n}{
      \node[tensor] (t\x) at (\x,0) {};
      \draw[->-] (t\x) --++ (0,0.5);
    }
    \node[fill=white] (dots) at (4,0) {$\dots$};
    \draw[virtual,->-] (t2) -- (x) node[midway,black,irrep,above] {$W_a$};
    \draw[virtual,->-] (t3) -- (t2) node[midway,black,irrep,above] {$W_a$};
    \draw[virtual,->-] (dots) -- (t3) node[midway,black,irrep,above] {$W_a$};
    \draw[virtual,->-] (t5) -- (dots) node[midway,black,irrep,above] {$W_a$};
    \draw[virtual,->-] (0.5,-0.5) -- (5.5,-0.5);
  \end{tikzpicture}  \ ,
\end{equation*}
where
\begin{equation*}
  \begin{tikzpicture}
    \node[tensor] (t) at (0,0) {};
    \draw[virtual, -<-] (t) --++ (0.7,0) node[irrep] {$W_a$};
    \draw[virtual, ->-] (t) --++ (-0.7,0) node[midway,black,irrep,above] {$W_a$};
    \draw[->-] (t) --++ (0,0.5);
  \end{tikzpicture} =
  \sum_{x\in B} \phi(x) \otimes \psi_a(\delta_x),
\end{equation*}
for the previously fixed irrep representatives $\psi_a$ in the irrep class $a$. In the following, we will choose the representation $\psi$ such that it contains exactly one irrep from each irrep class (i.e.\ such that $m_a=1$). The previously fixed irrep representatives $\psi_a$ determine the vector spaces $W_a$ for each $a$, and thus from now on, we do not display the labels $W_a$ in graphical representation, only the label $a$. Therefore we will write
\begin{equation*}
   \tikzsetnextfilename{61048b72-7ab1-44b0-a7b8-c3db25636324}
  \begin{tikzpicture}
    \draw[virtual] (0.5,0) rectangle (5.5,-0.5);
    \node[tensor,label=below:$b(x)$] (x) at (1,0) {};
    \foreach \x/\t in {2/1,3/2,5/n}{
      \node[tensor] (t\x) at (\x,0) {};
      \draw[->-] (t\x) --++ (0,0.5);
    }
    \node[fill=white] (dots) at (4,0) {$\dots$};
    \draw[virtual,->-] (t2) -- (x);
    \draw[virtual,->-] (t3) -- (t2);
    \draw[virtual,->-] (dots) -- (t3);
    \draw[virtual,->-] (t5) -- (dots);
    \draw[virtual,->-] (0.5,-0.5) -- (5.5,-0.5);
  \end{tikzpicture}  \ =
  \sum_{a\in \irr(\mathcal{C}^*)} m_a \cdot
  \tikzsetnextfilename{6c743af8-405a-477f-9390-12a2d8c5064f}
  \begin{tikzpicture}
    \draw[virtual] (0.5,0) rectangle (5.5,-0.5);
    \node[tensor,label=below:$b_a(x)$] (x) at (1,0) {};
    \foreach \x/\t in {2/1,3/2,5/n}{
      \node[tensor] (t\x) at (\x,0) {};
      \draw[->-] (t\x) --++ (0,0.5);
    }
    \node[fill=white] (dots) at (4,0) {$\dots$};
    \draw[virtual,->-] (t2) -- (x) node[midway,black,irrep,above] {$a$};
    \draw[virtual,->-] (t3) -- (t2) node[midway,black,irrep,above] {$a$};
    \draw[virtual,->-] (dots) -- (t3) node[midway,black,irrep,above] {$a$};
    \draw[virtual,->-] (t5) -- (dots) node[midway,black,irrep,above] {$a$};
    \draw[virtual,->-] (0.5,-0.5) -- (5.5,-0.5);
  \end{tikzpicture}  \ ,
\end{equation*}
with
\begin{equation}\label{eq:small_mps_tensor}
  \begin{tikzpicture}
    \node[tensor] (t) at (0,0) {};
    \draw[virtual, -<-] (t) --++ (0.7,0) node[midway,black,irrep,above] {$a$};
    \draw[virtual, ->-] (t) --++ (-0.7,0) node[midway,black,irrep,above] {$a$};
    \draw[->-] (t) --++ (0,0.5);
  \end{tikzpicture} =
  \sum_{x\in B} \phi(x) \otimes \psi_a(\delta_x).
\end{equation}
The MPS tensors in \cref{eq:small_mps_tensor}, provided that $\phi$ has certain properties, are special:
\begin{definition}[Injective and normal MPS tensor]\label{def:injectivity}
  An MPS tensor $A\in V\otimes \End(W)$, $A = \sum_i \ket{i}\otimes A^i$ is \emph{normal} if there is an $n\in\mathbb{N}$ such that
  \begin{equation*}
    \operatorname{Span} \left\{ A^{i_1} \cdots A^{i_n} \middle| (i_1, \ldots, i_n) \in \{1, \ldots, \mathrm{dim}(V) \}^{\times n}  \right\} = \End(W).
  \end{equation*}
  The tensor is called \emph{injective} if it is normal with $n=1$.
\end{definition}
It is immediate to see that the MPS tensors defined in \cref{eq:small_mps_tensor} are injective if and only if $\phi$ is injective and they are normal if and only if there is an $n$ such that $\phi^{\otimes n} \circ \Delta^{n-1}$ is injective. We will call such a linear map \emph{normal}.

We have thus obtained that if $\mathcal{C}$ is a cosemisimple coalgebra and $\phi$ is a $\mathcal{C}\to V$ linear map such that $\phi^{\otimes n} \circ \Delta^{n-1}$ is injective for some $n$, then the coproduct of an element $x$ has a special MPS representation of the form
\begin{equation*}
  \phi^{\otimes n} \circ \Delta^{n-1} (x) =
  \sum_{a\in \irr(\mathcal{C}^*)}
  \begin{tikzpicture}
    \draw[virtual] (0.5,0) rectangle (5.5,-0.5);
    \node[tensor,label=below:$b_a(x)$] (x) at (1,0) {};
    \foreach \x/\t in {2/1,3/2,5/n}{
      \node[tensor] (t\x) at (\x,0) {};
      \draw[->-] (t\x) --++ (0,0.5);
    }
    \node[fill=white] (dots) at (4,0) {$\dots$};
    \draw[virtual,->-] (t2) -- (x) node[midway,black,irrep,above] {$a$};
    \draw[virtual,->-] (t3) -- (t2) node[midway,black,irrep,above] {$a$};
    \draw[virtual,->-] (dots) -- (t3) node[midway,black,irrep,above] {$a$};
    \draw[virtual,->-] (t5) -- (dots) node[midway,black,irrep,above] {$a$};
    \draw[virtual,->-] (0.5,-0.5) -- (5.5,-0.5);
  \end{tikzpicture}  \ ,
\end{equation*}
where each MPS tensor is normal. This statement now can be reversed: let us consider a set $\mathcal{S}$ of injective MPS tensors $A\in V \otimes \End(W_A)$ for each $A\in \mathcal{S}$ such that no two of them are related to each other with a basis transformation. One then can construct a \emph{cosemisimple} coalgebra $\mathcal{C}$ (using the construction from the previous section), a map $\phi: \mathcal{C}\to V$ and a bijection $B: \bigoplus \End(W_A) \to \mathcal{C}$, the inverse of the map $x\mapsto b(x)$, such that
\begin{equation*}
  \sum_{A\in\mathcal{S}}
  \begin{tikzpicture}
    \draw[virtual] (0.5,0) rectangle (5.5,-0.5);
    \node[tensor,label=below:$X_A$] (x) at (1,0) {};
    \foreach \x/\t in {2/1,3/2,5/n}{
      \node[tensor,label=below:$A$] (t\x) at (\x,0) {};
      \draw[->-] (t\x) --++ (0,0.5);
    }
    \node[fill=white] (dots) at (4,0) {$\dots$};
    \draw[virtual,->-] (t2) -- (x);
    \draw[virtual,->-] (t3) -- (t2);
    \draw[virtual,->-] (dots) -- (t3);
    \draw[virtual,->-] (t5) -- (dots);
    \draw[virtual,->-] (0.5,-0.5) -- (5.5,-0.5);
  \end{tikzpicture}  \ =
  \phi^{\otimes n} \circ \Delta^{n-1} \circ B(\oplus_A X_A).
\end{equation*}

\subsection{Cocentral and non-degenerate elements} \label{sec:MPS_cocentral}

In this section we define the co-center of a coalgebra and show that in a cosemisimple coalgebra the set of cocentral elements have special MPS representations -- they are translation invariant MPS with periodic boundary condition.

Let $\mathcal{C}$ be a coalgebra with coproduct $\Delta$. Then the map $\Delta_{\mathrm{op}}:\mathcal{C}\to \mathcal{C}\otimes\mathcal{C}$, $x\mapsto \Delta_{\mathrm{op}}(x)$ defined by swapping the two components of the tensor product in $\Delta(x)$,
\begin{equation*}
  \Delta_{\mathrm{op}}(x) = \sum_{(x)} x_{(2)} \otimes x_{(1)},
\end{equation*}
is also a coproduct. Using the opposite coproduct, we can define cocentral elements as:
\begin{definition}
  An element $x\in\mathcal{C}$ is called \emph{cocentral} or \emph{trace-like} if it satisfies
  $\Delta_{\mathrm{op}}(x) = \Delta(x)$.
\end{definition}

Due to the definition of the product in $\mathcal{C}^*$, if $x\in \mathcal{C}$ is cocentral, then for all $f,g\in\mathcal{C}^*$,
\begin{equation*}
  (fg)(x) = (f\otimes g)\circ \Delta(x) = (g\otimes f)\circ \Delta_{\mathrm{op}}(x) = (g\otimes f)\circ \Delta(x) = (gf)(x).
\end{equation*}
This means thus that $x: f\mapsto f(x)$ is a trace-like (cyclic) linear functional on $\mathcal{C}^*$, i.e.\ the set of cocentral elements of $\mathcal{C}$ is exactly the set of trace-like linear functionals of $\mathcal{C}^*$. Due to their cyclicity, repeated coproducts of these elements are translation invariant:
\begin{equation*}
  \sum_{(x)} x_{(n)} \otimes x_{(1)} \otimes \dots \otimes x_{(n-1)} = (\id \otimes \Delta^{n-2}) \circ \Delta_{\mathrm{op}} (x) = (\id \otimes \Delta^{n-2}) \circ \Delta (x)= \Delta^{n-1} (x) = \sum_{(x)} x_{(1)} \otimes x_{(2)} \otimes \dots \otimes x_{(n)}.
\end{equation*}
The corresponding MPS representation is also translation invariant, i.e.\ $b(x)$ is such that
\begin{equation*}
 \ ;
\end{equation*}
indeed, the boundary matrix $b(x) = \oplus_a b_a(x)$ is defined by $\tr(b(x) \psi(f)) = f(x)$ for all $x$, and as $x$ is trace-like, $\tr(b_a(x) \psi_a(f)\psi_a(g)) = \tr(\psi_a(f) b_a(x) \psi_a(g))$ for all $f,g\in\mathcal{C}^*$ and $a\in \irr(\mathcal{C}^*)$. As $\psi_a$ is an irrep, both $\psi_a(f)$ and $\psi_a(g)$ span the full matrix algebra and thus $b_a(x)$ is necessarily proportional to the identity.

Let $\tau_a$ denote the character of the irrep class $a$. As $\tau_a$ is a linear functional on $\mathcal{C}^*$, it can also be viewed as an element of the coalgebra $\mathcal{C}$.  By definition, $\tau_a(f) = \tr(\psi_a(f))$, and thus it can be written as $\tau_a(x) = \tr(b(\tau_a) \psi(f))$ with $b_b(\tau_a) = \delta_{ab} \cdot \id_a$. That is, the MPS representation of $\tau_a$ is
\begin{equation*}
  \phi^{\otimes n }\circ\Delta^{n-1}(\tau_a) =
  \begin{tikzpicture}
    \draw[virtual] (1.5,0) rectangle (5.5,-0.5);
    \foreach \x/\t in {2/1,3/2,5/n}{
      \node[tensor] (t\x) at (\x,0) {};
      \draw[->-] (t\x) --++ (0,0.5);
    }
    \node[fill=white] (dots) at (4,0) {$\dots$};
    \draw[virtual,->-] (t3) -- (t2) node[midway,black,irrep,above] {$a$};
    \draw[virtual,->-] (dots) -- (t3) node[midway,black,irrep,above] {$a$};
    \draw[virtual,->-] (t5) -- (dots) node[midway,black,irrep,above] {$a$};
    \draw[virtual,->-] (1.5,-0.5) -- (5.5,-0.5);
  \end{tikzpicture} \ ;
\end{equation*}

Another set of special elements in the coalgebra are those that have full-rank coproduct:

\begin{definition}\label{def:non-degenerate}
  An element $x\in\mathcal{C}$ is called \emph{non-degenerate} if its coproduct $\Delta(x)$ is full rank.
\end{definition}
Equivalently, $x$ is non-degenerate if and only if
  \begin{equation*}
    (\id\otimes \mathcal{C}^*)\circ\Delta(x) = (\mathcal{C}^*\otimes \id) \circ \Delta (x) = \mathcal{C},
  \end{equation*}
or, with other words, if for all $y\in\mathcal{A}$ there exist linear functionals $f$ and $g$ such that
\begin{equation*}
  y = (\id\otimes f)\circ \Delta(x) = (g\otimes \id) \circ \Delta(x).
\end{equation*}
Let us now show that if $\mathcal{C}$ is cosemisimple, then $x$ is non-degenerate if and only if its MPS representation is of the form
\begin{equation*}
  \phi^{\otimes n }\circ\Delta^{n-1}(x) =
  \sum_{a\in \irr(\mathcal{C}^*)} \
  \begin{tikzpicture}
    \draw[virtual] (0.5,0) rectangle (5.5,-0.5);
    \foreach \x/\t in {2/1,3/2,5/n}{
      \node[tensor] (t\x) at (\x,0) {};
      \draw[->-] (t\x) --++ (0,0.5);
    }
    \node[tensor,label=below:$b_a(x)$] (b) at (1,0) {};
    \node[fill=white] (dots) at (4,0) {$\dots$};
    \draw[virtual,->-] (t2) -- (b) node[midway,black,irrep,above] {$a$};
    \draw[virtual,->-] (t3) -- (t2) node[midway,black,irrep,above] {$a$};
    \draw[virtual,->-] (dots) -- (t3) node[midway,black,irrep,above] {$a$};
    \draw[virtual,->-] (t5) -- (dots) node[midway,black,irrep,above] {$a$};
    \draw[virtual,->-] (1.5,-0.5) -- (5.5,-0.5);
  \end{tikzpicture}  \ ,
\end{equation*}
where all $b_a(x)$ are invertible. To prove this, note first that for any linear functional $f\in \mathcal{C}^*$ the element $y = (f \otimes \id)\circ \Delta(x)$ is described by the MPS
\begin{equation*}
  \phi^{\otimes n }\circ\Delta^{n-1}(y) =
  \sum_{a\in \irr(\mathcal{C}^*)} \
  \begin{tikzpicture}
    \draw[virtual] (0.0,0) rectangle (5.5,-0.5);
    \foreach \x/\t in {2/1,3/2,5/n}{
      \node[tensor] (t\x) at (\x,0) {};
      \draw[->-] (t\x) --++ (0,0.5);
    }
    \node[tensor,label=below:$b_a(x)\cdot \psi_a(f)$] (b) at (1,0) {};
    \node[fill=white] (dots) at (4,0) {$\dots$};
    \draw[virtual,->-] (t2) -- (b) node[midway,black,irrep,above] {$a$};
    \draw[virtual,->-] (t3) -- (t2) node[midway,black,irrep,above] {$a$};
    \draw[virtual,->-] (dots) -- (t3) node[midway,black,irrep,above] {$a$};
    \draw[virtual,->-] (t5) -- (dots) node[midway,black,irrep,above] {$a$};
    \draw[virtual,->-] (1.5,-0.5) -- (5.5,-0.5);
  \end{tikzpicture}  \ ,
\end{equation*}
i.e.\ the boundary describing $y=(f \otimes \id)\circ \Delta(x)$ is given by $b_a(y) = b_a(x) \cdot \psi_a(f)$ in every sector $a\in\irr(\mathcal{C}^*)$. Similarly, the boundary describing $y = (\id\otimes g)\circ\Delta(x)$ is given by $b_a(y) = \psi_a(g) \cdot b_a(x)$. Therefore $x$ is non-degenerate if and only if for all $y\in\mathcal{C}$ there are $f,g\in\mathcal{C}^*$ such that
\begin{equation*}
  b_a(y) = \psi_a(g) \cdot b_a(x) = b_a(x) \cdot \psi_a(f).
\end{equation*}
As $\psi_a(f)$ can be any matrix, this is equivalent with the invertibility of $b_a(x)$.

As a particular case of the previous statement, let us consider a cocentral coalgebra element $x  = \sum_a \lambda_a \tau_a$, where $\tau_a\in\mathcal{C}$ are the irrep characters of $\mathcal{C}^*$. Then, $x$ is non-degenerate if and only if $\lambda_a \neq 0$ for all $a\in\irr(\mathcal{C}^*)$. For example, the cocentral element $\Theta = \sum_a \tau_a$ with MPS representation
\begin{equation*}
  \phi^{\otimes n }\circ\Delta^{n-1}(\Theta) =
  \sum_{a\in \irr(\mathcal{C}^*)} \
  \begin{tikzpicture}
    \draw[virtual] (1.5,0) rectangle (5.5,-0.5);
    \foreach \x/\t in {2/1,3/2,5/n}{
      \node[tensor] (t\x) at (\x,0) {};
      \draw[->-] (t\x) --++ (0,0.5);
    }
    \node[fill=white] (dots) at (4,0) {$\dots$};
    \draw[virtual,->-] (t3) -- (t2) node[midway,black,irrep,above] {$a$};
    \draw[virtual,->-] (dots) -- (t3) node[midway,black,irrep,above] {$a$};
    \draw[virtual,->-] (t5) -- (dots) node[midway,black,irrep,above] {$a$};
    \draw[virtual,->-] (1.5,-0.5) -- (5.5,-0.5);
  \end{tikzpicture}  \ ,
\end{equation*}
is a non-degenerate cocentral element. Using this element $\Theta$, one can interpret $b(x)$ (more precisely, $\psi^{-1}(b(x))$) as the linear functional that satisfies $(\psi^{-1}(b(x))\otimes \id)\circ\Delta(\Theta) = x$.

\section{Pre-bialgebras and their matrix product operator representations}\label{sec:bialgebra}

In this section we define pre-bialgebras (bialgebras without any property imposed on the unit and counit) as well as matrix product operators. We show that the MPS representations of coalgebras from the previous section naturally generalize to pre-bialgebras providing an MPO representation for them. We further investigate this MPO representation for cosemisimple pre-bialgebras.

\begin{definition}[Pre-bialgebra]
  $\mathcal{A}$ is a \emph{pre-bialgebra} if it is both an algebra and a coalgebra such that the coproduct $\Delta$ is multiplicative, i.e.\ for all $x,y\in\mathcal{A}$
  \begin{equation*}
    \Delta(xy) = \Delta(x) \Delta(y).
  \end{equation*}
\end{definition}
In this equation the multiplication in $\mathcal{A}\otimes \mathcal{A}$ is taken component-wise, i.e.\ $(x\otimes y ) \cdot (z\otimes w) = xz \otimes yw$. If $\mathcal{A}$ is a pre-bialgebra with product $\mu_{\mathcal{A}}$ and coproduct $\Delta_{\mathcal{A}}$, then $\mathcal{A}^*$ is also a pre-bialgebra with product $\mu_{\mathcal{A}^*} =   \Delta_{\mathcal{A}}^T$ and coproduct $\Delta_{\mathcal{A}^*} = \mu_{\mathcal{A}}^T$. It is clear from context whether we refer to the coproduct of $\mathcal{A}$ or that of $\mathcal{A}^*$, and thus in the following we drop the subscript $\mathcal{A}$ and $\mathcal{A}^*$, and simply write $\Delta$ for both $\Delta_{\mathcal{A}}$ and $\Delta_{\mathcal{A}^*}$. That is, the product and coproduct in $\mathcal{A}^*$ are such that for all $f,g\in\mathcal{A}^*$ and $x,y\in\mathcal{A}$,
\begin{equation*}
  (fg)(x) = (f\otimes g) \circ \Delta (x) \quad \text{and} \quad \left[\Delta(f)\right](x\otimes y) = f(xy).
\end{equation*}
The unit of $\mathcal{A}^*$ is the counit of $\mathcal{A}$ and the counit of $\mathcal{A}^*$ is the unit of $\mathcal{A}$.

In the following, we will talk about representations of the pre-bialgebra $\mathcal{A}$. These representations should be understood as representations of the algebraic structure of $\mathcal{A}$ (i.e.\ disregarding the coalgebra structure). The extra structure given by the coproduct allows us to define the tensor product of representations. Let $\phi_1:\mathcal{A}\to \End(V_1)$ and $\phi_2:\mathcal{A}\to \End(V_2)$ be two representations of $\mathcal{A}$. Then, as the coproduct $\Delta$ is multiplicative, $(\phi_1\otimes \phi_2) \circ \Delta: \mathcal{A} \to \End(V_1\otimes V_2)$ is also multiplicative. This map is not a representation on $V_1\otimes V_2$, however, since $(\phi_1\otimes \phi_2) \circ \Delta(1)$ is not the identity, unless $\Delta(1)= 1\otimes 1$. The element $(\phi_1\otimes \phi_2) \circ \Delta(1)$ is a projector and is absorbed by any element from $(\phi_1\otimes \phi_2) \circ \Delta(\mathcal{A})$, as $a1 = 1a = a$ for all $a\in\mathcal{A}$. This means that all operators in $(\phi_1\otimes \phi_2) \circ \Delta(\mathcal{A})$ can be restricted to the range of $(\phi_1\otimes \phi_2) \circ \Delta(1)$. Let $V_1\boxtimes V_2$ denote this subspace of $V_1\otimes V_2$. By definition, $(\phi_1\otimes \phi_2) \circ \Delta(1)$ restricted to $V_1\boxtimes V_2$, $(\phi_1\otimes \phi_2) \circ \Delta(1)|_{V_1\boxtimes V_2}$, is the identity, and thus the map defined by $(\phi_1 \boxtimes \phi_2)(x) = (\phi_1\otimes \phi_2) \circ \Delta(x)|_{V_1\boxtimes V_2}$ is a representation of $\mathcal{A}$.  This representation is then called the tensor product of the representations $\phi_1$ and $\phi_2$. Using associativity of the coproduct, one can define the $n$-fold tensor product of representations for any $n>2$ integer as well: given $\phi_i:\mathcal{A}\to \End(V_i)$ representation for $i=1\dots n$, the tensor product representation $\phi_1 \boxtimes \dots \boxtimes\phi_n$ is given by the restriction of $(\phi_1\otimes \dots \otimes\phi_n)\circ \Delta^{n-1}(x)$ onto the range of $(\phi_1\otimes \dots \otimes\phi_n)\circ \Delta^{n-1}(1)$ in the vector space $V_1 \otimes \dots \otimes V_n$.

Just as in the previous section, using the coalgebra structure of  $\mathcal{A}$, one can form MPS representations of $\mathcal{A}$. As $\mathcal{A}$ has an algebra structure as well, it is natural to choose the linear map $\phi$ used at the construction of the MPS to be a representation of the algebra. The resulting MPS is then interpreted as an operator, and in fact, this structure is called a matrix product operator (MPO):

\begin{definition}[Matrix product operators]
  Let $(V_i)_{i=1}^n$ and $(W_j)_{j=0}^n$ be two collections of finite dimensional vector spaces over $\mathbb{C}$. An MPO is given by tensors $A_i \in V_i \otimes V_i^* \otimes W_{i-1} \otimes W_i^*$ $(i=1, \ldots, n)$, $A_i = \sum_{kl} \ket{k}\bra{l} \otimes A_i^{kl}$ and a matrix $X\in W_n \otimes W_0^*$; the operator generated by the MPO is given by
  \begin{equation*}
    O = 	\sum_{kl} \tr \left( X \cdot A_1^{k_1 l_1} \cdots A_n^{k_n l_n} \right)\  \ket{k_1 \ldots k_n}\bra{l_1 \ldots l_n} = \
 \ .
\end{equation*}

\subsection{Cosemisimplicity}

Let $\mathcal{A}$ be a pre-bialgebra, then $\mathcal{A}^*$ is a pre-bialgebra as well.  As such, one can form tensor products of its representations, i.e.\ if $\psi_1:\mathcal{A}^*\to \End(W_1)$ and $\psi_2:\mathcal{A}^*\to \End(W_2)$ are representations of $\mathcal{A}^*$, then $(\psi_1\otimes\psi_2)\circ \Delta_{\mathcal{A}^*}$ restricted to the range of $(\psi_1\otimes\psi_2)\circ \Delta_{\mathcal{A}^*}(\epsilon)$ is a representation\footnote{Remember that the unit of $\mathcal{A}^*$ is $\epsilon$} of $\mathcal{A}^*$ denoted by $\psi_1\boxtimes \psi_2$.

Just as for coalgebras, we say that a pre-bialgebra $\mathcal{A}$ is cosemisimple if the algebra $\mathcal{A}^*$ is semisimple, i.e.\ if every representation of it decomposes into a direct sum of irreps. In particular, let us fix irreps $\psi_a:\mathcal{A}^*\to \End(W_a)$ for each irrep class $a\in\irr(\mathcal{A}^*)$; then given two of these irreps, $\psi_a$ and $\psi_b$, their tensor product $\psi_a\boxtimes \psi_b$ decomposes into a direct sum of irreps as follows:
\begin{equation*}
  \psi_a\boxtimes \psi_b \simeq \bigoplus_{c\in \irr(\mathcal{A}^*)} \psi_c \otimes \id_{N_{ab}^c},
\end{equation*}
i.e.\ in the decomposition of $\psi_a\boxtimes \psi_b$ the irrep $\psi_c$ appears $N_{ab}^c$ times. These $N_{ab}^c$  non-negative integers are called \emph{fusion multiplicities}. Note that $N_{ab}^c$ might be $0$; in that case the irrep $\psi_c$ does not appear in the decomposition. The above equation holds up to a basis transformation, i.e.\ for all $a,b,c\in \irr(\mathcal{A}^*)$ there are  invertible operators $Z_{ab}: \bigoplus_c W_c \otimes \mathbb{C}^{N_{ab}^c} \to W_a \boxtimes W_b$ such that for all $f\in\mathcal{A}^*$
\begin{equation}\label{eq:irrep_product_decomposition}
  \left(\psi_a\boxtimes \psi_b\right) (f) =  Z_{ab} \left( \bigoplus_{c\in \irr(\mathcal{A}^*)} \psi_c(f) \otimes \id_{N_{ab}^c}\right) \left(Z_{ab}\right)^{-1},
\end{equation}
or equivalently, for all $a,b,c\in \irr(\mathcal{A}^*)$ there are $N_{ab}^c$ operators $Z_{ab}^{c\mu}: V_c \to V_{a}\boxtimes V_b$ and $Y_{ab}^{c\mu}: V_a\boxtimes V_b\to V_c$ (here $\mu=1\dots N_{ab}^c$ are integers) such that
\begin{equation*}
  \left(\psi_a\boxtimes \psi_b\right) (f) = \sum_{c\in \irr(\mathcal{A}^*)} \sum_{\mu=1}^{N_{ab}^c} Z_{ab}^{c\mu} \cdot \psi_c(f) \cdot Y_{ab}^{c\mu} \quad \text{and} \quad Y_{ab}^{c\mu} \cdot Z_{ab}^{d\nu} = \delta_{cd}\cdot \delta_{\mu\nu} \cdot \id_{V_c}.
\end{equation*}
These operators then can be extended\footnote{Note that as $\epsilon$ is a projector, $V_a\otimes V_b = \mathrm{Ker}\{(\psi_a \boxtimes \psi_b)(\epsilon)\} \oplus \mathrm{Im}(\psi_{a}\boxtimes\psi_{b})(\epsilon)$. The extension of the operators $Y_{ab}^{c\mu}$ is such that they act as zero on the space $\mathrm{Ker}\{(\psi_a \boxtimes \psi_b)(\epsilon)\}$.} to map from (and to) $W_a \otimes W_b$ instead of $W_{a}\boxtimes W_b$, i.e.\ there are linear maps $V_{ab}^{c\mu}: W_c \to W_{a}\otimes W_b$ and $W_{ab}^{c\mu}: W_a \otimes W_b\to W_c$, called \emph{fusion tensors}, such that
\begin{equation}\label{eq:splitting_algebraic}
  (\psi_a \otimes \psi_b)\circ\Delta(f) = \sum_{c} \sum_{\mu=1}^{N_{ab}^c} V_{ab}^{c\mu} \cdot \psi_c(f) \cdot W_{ab}^{c\mu} \quad \text{and} \quad W_{ab}^{c\mu} \cdot V_{ab}^{d\nu} = \delta_{cd} \cdot\delta_{\mu\nu} \cdot \id_{W_c},
\end{equation}
for all $a,b,c\in \irr(\mathcal{A}^*)$ and  $f\in\mathcal{A}^*$. The fusion tensors are rank-four tensors; the index $\mu$, however, plays a very different role than the rest of its indices. This allows us to think of the fusion tensors as a set for rank-three tensors instead, and write
\begin{equation*}
  V_{ab}^{c\mu} =
 = \delta_{cd} \cdot \delta_{\mu\nu} \cdot \id_{c} \ .
\end{equation}
Let us stress again that the fusion tensors $V_{ab}^{c\mu}$ (and $W_{ab}^{c\mu}$) do not map to (and from) the whole space $W_a\otimes W_b$, instead only to (and from) the subspace $W_a\boxtimes W_b$. A projector onto this subspace is $(\psi_a\otimes\psi_{b})\circ\Delta(\epsilon) \neq \id_a\otimes \id_b$, that has the graphical representation
\begin{equation}\label{eq:delta_epsilon}
  (\psi_a \otimes \psi_b) \circ \Delta(\epsilon) =
  \sum_{c\mu}
  \tikzsetnextfilename{90905da3-7b87-4642-934e-14bd68f2d1b2}
  \begin{tikzpicture}[baseline=-1mm]
    \node[fusion tensor] (v) at (0,0) {};
      \node[anchor=south] at (v.north) {$\mu$};
    \node[fusion tensor] (w) at (0.5,0) {};
      \node[anchor=south] at (w.north) {$\mu$};
    \draw[virtual,-<-] (v)--(w) node[midway,above,black,irrep] {$c$};
    \draw[virtual,->-] let \p1=(v.west) in (\x1,0.25)--++(-0.5,0) node[midway,above,black,irrep] {$a$};
    \draw[virtual,->-] let \p1=(v.west) in (\x1,-0.25)--++(-0.5,0) node[midway,above,black,irrep] {$b$};
    \draw[virtual,-<-] let \p1=(w.east) in (\x1,0.25)--++(0.5,0) node[midway,above,black,irrep] {$a$};
    \draw[virtual,-<-] let \p1=(w.east) in (\x1,-0.25)--++(0.5,0) node[midway,above,black,irrep] {$b$};
  \end{tikzpicture} \ .
\end{equation}

The operators $V_{ab}^{c\mu}$ and $W_{ab}^{c\mu}$ are not unique, instead only their tensor product $\sum_\mu V_{ab}^{c\mu} \otimes W_{ab}^{c\mu}$ is fixed. This allows for a basis change: $\sum_\mu V_{ab}^{c\mu} \otimes W_{ab}^{c\mu} = \sum_{\mu} \hat{V}_{ab}^{c\mu} \otimes \hat{W}_{ab}^{c\mu}$ if and only if $\hat{V}_{ab}^{c\mu} = \sum_\nu K_{\mu\nu} V_{ab}^{c\nu}$ and $\hat{W}_{ab}^{c\mu} = \sum_\nu (K^{-1})_{\nu\mu} W_{ab}^{c\nu}$ for some $N_{ab}^c \times N_{ab}^c$ complex invertible  matrix $K$. Note as well that the value of $\sum_\mu V_{ab}^{c\mu} \otimes W_{ab}^{c\mu}$ depends on the choices of the irrep representatives $\psi_a$ that we have made at the beginning of this section; different choices will lead to a gauge transformation of $V_{ab}^{c\mu}$ and $W_{ab}^{c\mu}$, i.e.\ if the irrep representatives are chosen to be $\hat{\psi}_a = U_a \psi_a U_a^{-1}$ instead of $\psi_a$ for each $a\in \irr(\mathcal{A}^*)$ (here $U_a$ are general invertible matrices), then the corresponding $\hat{V}_{ab}^{c\mu}$ and $\hat{W}_{ab}^{c\mu}$ are given by
\begin{equation*}
  \hat{V}_{ab}^{c\mu} = (U_a \otimes U_b) \cdot V_{ab}^{c\mu} \cdot U_c^{-1}
  \quad \text{and} \quad
  \hat{W}_{ab}^{c\mu} = U_c \cdot W_{ab}^{c\mu}  \cdot (U_a^{-1} \otimes U_b^{-1}).
\end{equation*}

Let us now show using this graphical language how the so-called $F$-symbols of the representation category of $\mathcal{A}^*$ emerge. Using \cref{eq:splitting}, associativity of the coproduct of $\mathcal{A}^*$ implies that for all $f\in\mathcal{A}^*$ and $a,b,c,e\in \irr(\mathcal{A}^*)$,
\begin{equation*}
  \sum_{d\mu\nu}
  \tikzsetnextfilename{3fb83239-fee9-499e-9d4f-9ab56f90b882}
 \ .
\end{equation*}
Standard arguments show that these $F$-symbols satisfy a consistency condition called the pentagon equation.

Let us now investigate the MPO representations of a cosemisimple pre-bialgebra $\mathcal{A}$. Just as in the MPS case, the MPOs representing $\mathcal{A}$ decompose into a sum of MPOs with smaller bond dimension  corresponding to the irreps of $\mathcal{A}^*$. We denote these MPOs by writing the label $a\in \irr(\mathcal{A}^*)$ on the virtual indices of the tensor:
\begin{equation*}
  \phi^{\otimes n} \circ \Delta^{n-1} (x)  = \sum_{a\in \irr(\mathcal{A}^*)}
      \tikzsetnextfilename{4e3993a8-1c3f-4491-9515-97fe2d99d1ed}
  	  \begin{tikzpicture}
  	    \draw[virtual] (0.5,0) rectangle (5.5,-0.7);
       \node[tensor,label=below:$b_a(x)$] (x) at (1,0) {};
  	    \foreach \x/\t in {2/1,3/2,5/n}{
  	      \node[tensor] (t\x) at (\x,0) {};
  	      \draw[->-] (t\x) --++ (0,0.5);
  	      \draw[-<-] (t\x) --++ (0,-0.5);
  	    }
  	    \node[fill=white] (dots) at (4,0) {$\dots$};
        \draw[virtual,->-] (t2) -- (x) node[midway,black,irrep,above] {$a$};
        \draw[virtual,->-] (t3) -- (t2) node[midway,black,irrep,above] {$a$};
        \draw[virtual,->-] (dots) -- (t3) node[midway,black,irrep,above] {$a$};
        \draw[virtual,->-] (t5) -- (dots) node[midway,black,irrep,above] {$a$};
        \draw[virtual,->-] (0.5,-0.7) -- (5.5,-0.7);
  	  \end{tikzpicture}  \quad \text{with} \quad
      \begin{tikzpicture}
        \node[tensor] (t) at (0,0) {};
        \draw[virtual, -<-] (t) --++ (0.7,0) node[midway,black,irrep,above] {$a$};
        \draw[virtual, ->-] (t) --++ (-0.7,0) node[midway,black,irrep,above] {$a$};
        \draw[->-] (t) --++ (0,0.5);
        \draw[-<-] (t) --++ (0,-0.5);
      \end{tikzpicture} =
      \sum_{x\in B} \phi(x) \otimes \psi_a(\delta_x).
\end{equation*}
If $\phi$ is injective, then each smaller MPO tensor is injective and if there is a positive integer $n$ such that $\phi^{\otimes n} \circ \Delta^{n-1}$ is injective, then they are normal. The above decomposition is thus nothing but the decomposition of the MPO into its normal components.

Let us now consider the product of two MPOs each representing an element of $\mathcal{A}$, such as in \cref{eq:mpo_product}. The l.h.s.\ of this equation then decomposes to injective blocks as described above. On the r.h.s.\ both MPOs can be decomposed into injective blocks, therefore their product can also be decomposed into a sum:
\begin{equation}\label{eq:MPO_product_blocks}
 =
    \sum_{x,y\in B} \phi(x)\cdot \phi(y) \otimes (\psi_a(\delta_x) \otimes \psi_b(\delta_y)) =
    \sum_{x\in B} \phi(x) \otimes (\psi_a\otimes \psi_b) \circ \Delta(\delta_x) ,
  \end{equation*}
where the last equality, analogous to the proof of \cref{thm:coalg_to_mps}, holds because for every $f\in\mathcal{A}^*$,
\begin{equation*}
  \sum_{x,y\in B} f(xy) \cdot \delta_x \otimes \delta_y = \sum_{x,y\in B, (f)} f_{(1)}(x) \cdot f_{(2)}(y) \cdot \delta_x\otimes \delta_y  = \sum_{(f)} f_{(1)} \otimes f_{(2)}  = \sum_{x\in B} f(x) \cdot \Delta(\delta_x),
\end{equation*}
and thus $\sum_{x,y\in B} xy \otimes \delta_x \otimes \delta_y = \sum_{x\in B} x\otimes \Delta(\delta_x)$.

This form  of the MPO tensor involves thus the tensor product of the irreps $\psi_a$ and $\psi_b$. In \cref{eq:splitting} we have seen how to decompose such a representation into irreps. Using that equation, the product of the MPO tensors satisfy
  \begin{equation}\label{eq:MPO_tensor_product}
    \tikzsetnextfilename{8d3b82ce-416a-42a6-9339-608dbf3db7fb}
 \ ,
\end{equation*}
where the first equation is \cref{eq:MPO_product_result} using that $b_a(\tau_c) = \delta_{ac} \id_c$ for all $a,c\in \irr(\mathcal{A}^*)$, and the second is just the orthogonality relations from \cref{eq:MPO_tensor_product} together with the fact that $\mu$ runs from 1 to $N_{ab}^c$. This equation then reads as
\begin{equation*}
  \phi^{\otimes n} \circ \Delta^{n-1} (\tau_a) \cdot \phi^{\otimes n} \circ \Delta^{n-1} (\tau_a) =  \sum_c N_{ab}^c \cdot \phi^{\otimes n} \circ \Delta^{n-1} (\tau_c),
\end{equation*}
and thus it implies, as $\phi^{\otimes n} \circ \Delta^{n-1}$ is a homomorphism and it is w.l.o.g.\ injective, that
\begin{equation}\label{eq:character_multiplication}
  \tau_a \cdot \tau_b = \sum_c N_{ab}^c \cdot \tau_c.
\end{equation}
Let us remark here that to obtain this well-known result, one does not have to consider MPO representations. Instead, the same result can be obtained directly from the decomposition of the tensor product representation into irreps: let $\psi_a$ be an irrep on a vector space $V_a$ such that it is from the irrep class $a$ (and thus its character is $\tau_a$), and $\psi_b$  an irrep on a vector space $V_b$ such that it is from the irrep class $b$ (and thus its character is $\tau_b$). Then, by definition of the product in $\mathcal{A}^*$,
\begin{equation*}
  (\tau_a \cdot \tau_b)(f) = (\tau_a\otimes \tau_b) \circ \Delta (f) = (\tr_{V_a} \otimes \tr_{V_b}) \circ (\psi_a\otimes \psi_b) \circ \Delta (f) = \tr \left( (\psi_a\otimes \psi_b) \circ \Delta (f) \right).
\end{equation*}
In this last trace the operator $(\psi_a\otimes \psi_b) \circ \Delta(f)$ is supported on $V_a \boxtimes V_b$ instead of the whole tensor product space $V_a\otimes V_b$, and thus restricting $(\psi_a\otimes \psi_b) \circ \Delta(f)$ to $V_a \boxtimes V_b$ does not change its trace:
\begin{equation*}
(\tau_a \cdot \tau_b)(f) = \tr  (\psi_a\otimes \psi_b) \circ \Delta (f) = \tr  (\psi_a\boxtimes \psi_b)(f).
\end{equation*}
Finally, this representation decomposes into irreps (see \cref{eq:irrep_product_decomposition}), and thus the trace can be evaluated:
\begin{equation*}
  (\tau_a \cdot \tau_b)(f) = \tr  (\psi_a\boxtimes \psi_b)(f) = \sum_c N_{ab}^c \cdot \tr \psi_c (f) = \sum_c N_{ab}^c \cdot \tau_c (f),
\end{equation*}
that is equivalent to \cref{eq:character_multiplication}.
We have thus seen that
\begin{proposition}
  In a finite dimensional cosemisimple pre-bialgebra $\mathcal{A}$ over $\mathbb{C}$ the irrep characters of $\mathcal{A}^*$ correspond to the injective blocks in the MPO representation of $\mathcal{A}$. For $a\in \irr(\mathcal{A}^*)$, the irrep character $\tau_a\in\mathcal{A}$ has the following MPO representation:
  \begin{equation*}
    \phi^{\otimes n} \circ \Delta^{n-1} (\tau_a)  = \
        \tikzsetnextfilename{678f2fd1-c392-49e2-bcd7-e83aff322062}
 \ .
\end{equation*}
This ring is then called the Grothendieck ring of $\mathcal{A}$.
\end{proposition}

\subsection{Duality}

As we have seen before, the dual $\mathcal{A}^*$ of a pre-bialgebra $\mathcal{A}$ is a pre-bialgebra as well. As such, it also has MPO representations. Let us fix a representation $\psi$ of $\mathcal{A}^*$ on a vector space $W$, and a representation $\phi$ of $\mathcal{A}^{**} = \mathcal{A}$ on a vector space $V$. Then the previous construction leads to the following MPO representation:
\begin{equation}\label{eq:dual_mpo}
  \psi^{\otimes n} \circ \Delta^{n-1} (f)  =
  \tikzsetnextfilename{0ceb3e17-d811-4514-ba57-97948fe43443}
 = \delta_{\gamma\delta} \cdot \delta_{mn} \cdot \id_{\gamma} \ .
  \end{equation*}
These linear maps give rise to a set of $F$-symbols (also satisfying the pentagon equations) that are, in general, different from the ones in the previous sections.

\section{Weak bialgebras and weak Hopf algebras}\label{sec:WBAWHA}

In this section we introduce weak bialgebras, weak Hopf algebras, pivotal weak Hopf algebras, spherical weak Hopf algebras and $C^*$-weak Hopf algebras. All these structures are all special pre-bialgebras, and as such, we use their MPO representations to reason about the structure of these objects; this is possible as the MPO representations are w.l.o.g.\ injective. Our main result is \cref{thm:wha_special_integral}, where we construct a special normalized integral $\Lambda$ in any cosemisimple weak Hopf algebra over $\mathbb{C}$. This integral $\Lambda$ is then used to prove that in a cosemisimple co-pivotal weak Hopf algebra over $\mathbb{C}$, there is a cocommutative projector with a property reminiscent to the definition of an integral that will allow us to define MPO-injective PEPS. We then further specialize these results to spherical- and $C^*$-weak Hopf algebras.

\subsection{Weak bialgebras}

In this section we define weak bialgebras (WBA) and show that the MPO representation of the unit of a cosemisimple WBA $\mathcal{A}$ has the following property: for all irreps $a$ of $\mathcal{A}^*$, $b_a(1)$ is either $0$ or rank-one. We show, moreover, that the Grothendieck ring of $\mathcal{A}$ has a unit. This unit can be written in the form $\tau_E:= \sum_{e\in E} \tau_e$, where $E\subseteq \irr(\mathcal{A}^*)$ consists of the irreps $a$ for which  $b_a(1)\neq 0$.

\begin{definition}[Weak bialgebra]
  A weak bialgebra (WBA) is a pre-bialgebra $\mathcal{A}$ such that the unit $1\in \mathcal{A}$ and counit $\epsilon\in \mathcal{A}^*$ satisfy
    \begin{align}
         &\sum_{(1)} 1_{(1)} \otimes 1_{(2)} \otimes 1_{(3)} = \sum_{(1)(1')} 1_{(1)} \otimes 1_{(2)} 1'_{(1)} \otimes 1'_{(2)} = \sum_{(1)(1')} 1_{(1)} \otimes  1'_{(1)} 1_{(2)} \otimes 1'_{(2)}, \label{eq:unit_axiom} \\
          & \sum_{(\epsilon)} \epsilon_{(1)} \otimes \epsilon_{(2)} \otimes \epsilon_{(3)} = \sum_{(\epsilon)(\epsilon')} \epsilon_{(1)} \otimes \epsilon_{(2)} \epsilon'_{(1)} \otimes \epsilon'_{(2)} = \sum_{(\epsilon)(\epsilon')} \epsilon_{(1)} \otimes  \epsilon'_{(1)} \epsilon_{(2)} \otimes \epsilon'_{(2)}. \label{eq:counit_axiom}
    \end{align}
    We will refer to \cref{eq:unit_axiom} as the unit axiom and to \cref{eq:counit_axiom} as the counit axiom.
\end{definition}
In the equations above $1$ and $\epsilon$ appears twice in the same Sweedler notation. To distinguish between the two appearances, we added a prime to one of them. One can also write the unit axiom, \cref{eq:unit_axiom}, as
\begin{equation*}
  \Delta^2(1) = (1\otimes \Delta(1)) \cdot (\Delta(1) \otimes 1) = (\Delta(1) \otimes 1) \cdot (1\otimes \Delta(1)),
\end{equation*}
while the counit axiom, \cref{eq:counit_axiom}, is more convenient to think of as
\begin{equation*}
  \epsilon(xyz) = \sum_{(y)} \epsilon(xy_{(1)}) \epsilon(y_{(2)}z) = \sum_{(y)} \epsilon(xy_{(2)}) \epsilon(y_{(1)}z).
\end{equation*}
As pre-bialgebras are self-dual and the above two axioms are the dual of one another, weak bialgebras are also self-dual: if $\mathcal{A}$ is a WBA, then $\mathcal{A}^*$ is also a WBA.

Let $\mathcal{A}$ be a cosemisimple WBA. The graphical representation of the unit axiom, \cref{eq:unit_axiom}, is
\begin{equation*}
  \sum_c \
  \ ,
  \end{equation*}
  where in the last equation we have used \cref{eq:unit_axiom_intermediate}. Again, as $\phi\otimes\phi$ is w.l.o.g.\ injective, we conclude that $\Delta(1) \cdot \Delta_{\mathrm{op}}(1) = \Delta(\tau_E)$. Similar calculation shows that $\Delta_{\mathrm{op}}(1) \cdot \Delta(1) = \Delta(\tau_E)$ as well, and thus we have proven\footnote{The same result can be obtained by direct calculation as well \cite{Nikshych2003}: using the unit axiom \cref{eq:unit_axiom} twice, $\Delta(\tau_E) = \sum_{(1)}1_{(1)}1_{(3)} \otimes 1_{(2)}1_{(4)} =\sum_{(1)} \Delta(1) \cdot (1_{(2)} \otimes 1_{(1)} 1_{(3)}) = \Delta(1) \cdot \Delta_{\mathrm{op}}(1) \cdot \Delta(1)$. Finally, again due to the unit axiom, $\Delta(1)\cdot \Delta_{\mathrm{op}}(1) = \Delta_{\mathrm{op}}(1)\cdot \Delta(1)$, that leads to the desired result.
  } that
  \begin{equation*}
    \Delta(\tau_E) = \Delta(1) \cdot \Delta_{\mathrm{op}}(1) = \Delta_{\mathrm{op}}(1) \cdot \Delta(1).
  \end{equation*}
  This, in fact, implies that $\tau_E$ is the unit of the Grothendieck ring of $\mathcal{A}^*$ (and of the character algebra as well), as for any co-central $\tau\in \mathcal{A}$,
  \begin{equation*}
    \Delta(\tau_E \cdot \tau) = \Delta_{\mathrm{op}}(1) \cdot \Delta(1) \cdot \Delta(\tau) = \Delta_{\mathrm{op}}(1) \cdot \Delta(\tau) = \Delta_{\mathrm{op}}(1)\cdot \Delta_{\mathrm{op}}(\tau) =\Delta_{\mathrm{op}}(\tau) = \Delta(\tau),
  \end{equation*}
  i.e.\ $\tau_E \cdot \tau = \tau$, and similarly $\tau\cdot \tau_E = \tau$ as well. Notice that as $\tau_E$ decomposes into a sum of irrep characters, given any irrep character $\tau_a$ ($a\in \irr(\mathcal{A}^*)$) the following equations hold:
  \begin{equation*}
    \tau_a = \tau_E \tau_a = \sum_{e\in E,b} N_{ea}^b \tau_b \quad \text{and} \quad \tau_a =  \tau_a \tau_E = \sum_{e\in E,b} N_{ae}^b \tau_b.
  \end{equation*}
  Here both $N_{ea}^b\geq 0$ and $N_{ae}^b\geq 0$ are integers, and thus, as the equations above are equivalent to $\sum_{e\in E} N_{ea}^b =\delta_{ab}$ and $\sum_{e\in E} N_{ae}^b =\delta_{ab}$, there are unique labels $l_a,r_a\in E$ such that for all $e\in E$ and $a,b\in \irr(\mathcal{A}^*)$
  \begin{equation}\label{eq:N_unit}
    N_{ea}^b = \begin{cases}
      0 & \text{if $e \neq l_a$,} \\
      \delta_{ab} & \text{if $e=l_a$,}
    \end{cases}
    \quad \text{and} \quad
    N_{ae}^b = \begin{cases}
      0 & \text{if $e \neq r_a$,} \\
      \delta_{ab} & \text{if $e=r_a$.}
    \end{cases}
  \end{equation}
  Using this property of the fusion multiplicities $N_{ab}^c$, we notice that in \cref{eq:unit_boundary_fusion_intermediate} in the sum over $e\in E$ the summand is non-zero only for $e=l_a$ on the l.h.s.\ and $e=r_a$ on the r.h.s.\ and in these cases the sum over $\mu$ is trivial (because $N_{l_aa}^c = N_{ar_a}^c = \delta_{ac}$). We can thus simplify \cref{eq:unit_boundary_fusion_intermediate} to
	\begin{align}\label{eq:WBA_fusion_tensor_characteristic}
    \tikzsetnextfilename{fe0af022-9fdd-4b7e-a4ed-c0ffd367c271}
	  \begin{tikzpicture}[baseline=-1mm]
      \node[fusion tensor] (v) at (0,0) {};
      \node[fusion tensor] (w) at (1.5,0) {};
      \draw[virtual,->-] let \p1=(v.west) in (\x1,-0.25)--++(-0.5,0) node[midway,black,irrep] {$a$};
      \draw[virtual,->-] let \p1=(v.west) in (\x1,0.25)--++(-0.5,0) node[black,t] {} node[midway,black,irrep] {$e$};
      \draw[virtual,-<-] let \p1=(w.east) in (\x1,-0.25)--++(0.5,0) node[midway,black,irrep] {$a$};
      \draw[virtual,-<-=0.5] let \p1=(w.east) in (\x1,0.25)--++(0.5,0) node[black,t] {} node[midway,black,irrep] {$e$};
      \draw[virtual,-<-] let \p1=(v.east) in (\x1,0)--++(0.5,0) node[midway,black,irrep] {$c$};
      \draw[virtual,->-] let \p1=(w.west) in (\x1,0)--++(-0.5,0) node[midway,black,irrep] {$c$};
	  \end{tikzpicture} \ =  \delta_{e,l_a} \delta_{a,c} \cdot (\id_c\otimes \id_c) \quad \text{and} \quad
    \tikzsetnextfilename{ce733b2c-ca05-403c-ad4e-782bec4d348e}
	  \begin{tikzpicture}[baseline=-1mm]
      \node[fusion tensor] (v) at (0,0) {};
      \node[fusion tensor] (w) at (1.5,0) {};
      \draw[virtual,->-] let \p1=(v.west) in (\x1,0.25)--++(-0.5,0) node[midway,black,irrep] {$a$};
      \draw[virtual,->-] let \p1=(v.west) in (\x1,-0.25)--++(-0.5,0) node[black,t] {} node[midway,black,irrep] {$e$};
      \draw[virtual,-<-] let \p1=(w.east) in (\x1,0.25)--++(0.5,0) node[midway,black,irrep] {$a$};
      \draw[virtual,-<-=0.5] let \p1=(w.east) in (\x1,-0.25)--++(0.5,0) node[black,t] {} node[midway,black,irrep] {$e$};
      \draw[virtual,-<-] let \p1=(v.east) in (\x1,0)--++(0.5,0) node[midway,black,irrep] {$c$};
      \draw[virtual,->-] let \p1=(w.west) in (\x1,0)--++(-0.5,0) node[midway,black,irrep] {$c$};
	  \end{tikzpicture} \ =  \delta_{e,r_a}\delta_{a,c} \cdot (\id_c\otimes \id_c) ,
	\end{align}
that hold for all $e\in E$, and $a,c\in \irr(\mathcal{A}^*)$. Let us note that \cref{eq:WBA_fusion_tensor_characteristic} together with \cref{eq:N_unit} implies \cref{eq:unit_axiom_intermediate}, and thus it is easy to check that the element defined by \cref{eq:unit_mpo} is indeed the unit of $\mathcal{A}$ and that it satisfies the unit axiom. We have thus seen that
\begin{proposition}
  In a finite dimensional cosemisimple pre-bialgebra $\mathcal{A}$ over $\mathbb{C}$ the unit axiom \cref{eq:unit_axiom} is equivalent to the following:
  \begin{itemize}
    \item there is a set $E\subseteq \irr(\mathcal{A}^*)$ such that $\tau_E = \sum_{e\in E} \tau_e$ is the unit of the Grothendieck ring of $\mathcal{A}^*$, or equivalently, for all $a\in\mathcal{A}$ there are unique labels $l_a,r_a\in E$ such that for all $e\in E$ and $b\in \irr(\mathcal{A}^*)$
      \begin{equation*}
        N_{ea}^b = \begin{cases}
          0 & \text{if $e \neq l_a$,} \\
          \delta_{ab} & \text{if $e=l_a$,}
        \end{cases}
        \quad \text{and} \quad
        N_{ae}^b = \begin{cases}
          0 & \text{if $e \neq r_a$,} \\
          \delta_{ab} & \text{if $e=r_a$,}
        \end{cases}
      \end{equation*}
    \item  and for all $e\in E$ there are vectors $\tikz{\draw[virtual,->-] (0,0) node[t,black] {} --++(-0.5,0) node[midway,black,irrep] {$e$};} \in V_e$ and linear functionals $\tikz{\draw[virtual,-<-] (0,0) node[t,black] {} --++(0.5,0) node[midway,black,irrep] {$e$};} \in V_e^*$ such that for all $e\in E$ and $a,c\in \irr(\mathcal{A}^*)$,
  	\begin{align*}
      \tikzsetnextfilename{0f657a05-04ed-4be7-9760-61135f81441e}
  	  \begin{tikzpicture}[baseline=-1mm]
        \node[fusion tensor] (v) at (0,0) {};
        \node[fusion tensor] (w) at (1.5,0) {};
        \draw[virtual,->-] let \p1=(v.west) in (\x1,-0.25)--++(-0.5,0) node[midway,black,irrep] {$a$};
        \draw[virtual,->-] let \p1=(v.west) in (\x1,0.25)--++(-0.5,0)
          node[black,t] {} node[midway,black,irrep] {$e$};
        \draw[virtual,-<-] let \p1=(w.east) in (\x1,-0.25)--++(0.5,0) node[midway,black,irrep] {$a$};
        \draw[virtual,-<-=0.5] let \p1=(w.east) in (\x1,0.25)--++(0.5,0)
          node[black,t] {} node[midway,black,irrep] {$e$};
        \draw[virtual,-<-] let \p1=(v.east) in (\x1,0)--++(0.5,0) node[midway,black,irrep] {$c$};
        \draw[virtual,->-] let \p1=(w.west) in (\x1,0)--++(-0.5,0) node[midway,black,irrep] {$c$};
  	  \end{tikzpicture} \ =  \delta_{e,l_a} \delta_{a,c} \cdot (\id_c\otimes \id_c) \quad \text{and} \quad
      \tikzsetnextfilename{01304eff-262b-4af5-85d4-cceb809f917a}
  	  \begin{tikzpicture}[baseline=-1mm]
        \node[fusion tensor] (v) at (0,0) {};
        \node[fusion tensor] (w) at (1.5,0) {};
        \draw[virtual,->-] let \p1=(v.west) in (\x1,0.25)--++(-0.5,0) node[midway,black,irrep] {$a$};
        \draw[virtual,->-] let \p1=(v.west) in (\x1,-0.25)--++(-0.5,0)
          node[black,t] {} node[midway,black,irrep] {$e$};
        \draw[virtual,-<-] let \p1=(w.east) in (\x1,0.25)--++(0.5,0) node[midway,black,irrep] {$a$};
        \draw[virtual,-<-=0.5] let \p1=(w.east) in (\x1,-0.25)--++(0.5,0)
          node[black,t] {} node[midway,black,irrep] {$e$};
        \draw[virtual,-<-] let \p1=(v.east) in (\x1,0)--++(0.5,0) node[midway,black,irrep] {$c$};
        \draw[virtual,->-] let \p1=(w.west) in (\x1,0)--++(-0.5,0) node[midway,black,irrep] {$c$};
  	  \end{tikzpicture} \ =  \delta_{e,r_a}\delta_{a,c} \cdot (\id_c\otimes \id_c) .
  	\end{align*}
  \end{itemize}
\end{proposition}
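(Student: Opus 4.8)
The statement is an equivalence, and its forward implication is essentially contained in the computation preceding it; the plan is therefore to record how that computation assembles into the two items, and then to supply the converse. Throughout I use that cosemisimplicity makes every MPO representation w.l.o.g.\ injective and that $x\mapsto b(x)$ is a linear bijection onto $\bigoplus_a \mathcal{M}_{D_a}$.

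For the direction ``unit axiom $\Rightarrow$ the two items'' I would simply collect the chain derived above. Writing $\Delta^2(1)$ in block form and decomposing the products of adjacent MPO tensors with the fusion tensors of \cref{eq:MPO_tensor_product}, the unit axiom \cref{eq:unit_axiom} becomes \cref{eq:unit_rank_1_intermediate_1} and \cref{eq:unit_rank_1_intermediate_2} after pairing the three legs with arbitrary functionals and using that $\psi_c(\mathcal{A}^*)$ is the full matrix algebra $\mathcal{M}_{D_c}$. Comparing these two identities forces $b_c(1)$ to be rank-one in every sector where it is non-zero, which is \cref{eq:unit_rank_1} and defines $E=\{c : b_c(1)\neq 0\}$. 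Evaluating $\sum_{(1)}1_{(1)}1_{(2)}$ and $\Delta(1)\,\Delta_{\mathrm{op}}(1)$ with the injectivity of $\phi^{\otimes n}\circ\Delta^{n-1}$ identifies $\tau_E=\sum_{e\in E}\tau_e$ as the unit of the Grothendieck ring, which is equivalent to the multiplicity constraints \cref{eq:N_unit}; feeding these into the unit property \cref{eq:product_boundary_unit} collapses the $e$-sum of \cref{eq:unit_boundary_fusion_intermediate} and yields the rank-one fusion relations \cref{eq:WBA_fusion_tensor_characteristic}. These are precisely the two items.

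For the converse I would run the same identities backwards. Given $E$ and vectors/functionals satisfying \cref{eq:WBA_fusion_tensor_characteristic}, define a candidate element $u\in\mathcal{A}$ by declaring $b_e(u)$ to be the associated rank-one operator for $e\in E$ and $b_a(u)=0$ otherwise; this is legitimate because $b$ is onto $\bigoplus_a\mathcal{M}_{D_a}$. Summing \cref{eq:WBA_fusion_tensor_characteristic} over $e\in E$ and using \cref{eq:N_unit} recovers \cref{eq:unit_boundary_fusion_intermediate}, which is \cref{eq:product_boundary_unit} written in rank-one form; inserting this into the product formula \cref{eq:product_boundary} gives $b_c(ux)=b_c(x)=b_c(xu)$ for every $x$, so $u$ is a two-sided identity and hence $u=1$ by uniqueness of the algebra unit. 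Finally, \cref{eq:WBA_fusion_tensor_characteristic} together with \cref{eq:N_unit} gives \cref{eq:unit_axiom_intermediate}, and re-fusing the blocks in the reverse of the decomposition above reproduces the three equal expressions constituting \cref{eq:unit_axiom}.

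The delicate point of the forward direction is the rank-one conclusion \cref{eq:unit_rank_1}: it is the only place where two separately derived identities must be compared rather than merely chained. For the converse, the main obstacle is bookkeeping --- one must verify that the purely \emph{local} rank-one data constrained by \cref{eq:WBA_fusion_tensor_characteristic}, once the $e$-sums are collapsed via \cref{eq:N_unit}, really does re-fuse into $\Delta^2(1)$ in all three of its forms, so that the \emph{global} unit axiom, and not merely the two-sided unit property, is recovered.
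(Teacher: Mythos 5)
Your proposal is correct and takes essentially the same route as the paper: the forward direction is exactly the paper's derivation (decomposing the unit axiom via \cref{eq:MPO_tensor_product}, comparing \cref{eq:unit_rank_1_intermediate_1,eq:unit_rank_1_intermediate_2} to force the rank-one form \cref{eq:unit_rank_1}, identifying $\tau_E$ as the Grothendieck unit to obtain \cref{eq:N_unit}, and collapsing the $e$-sum in \cref{eq:unit_boundary_fusion_intermediate} to \cref{eq:WBA_fusion_tensor_characteristic}). Your converse --- defining $u$ by the rank-one boundaries, showing via \cref{eq:product_boundary} that it is a two-sided unit so $u=1$, and then recovering \cref{eq:unit_axiom} by running the fusion decomposition backwards from \cref{eq:unit_axiom_intermediate} --- is precisely the argument the paper compresses into its remark that the element of \cref{eq:unit_mpo} ``is indeed the unit of $\mathcal{A}$ and satisfies the unit axiom,'' made explicit.
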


We will refer to the subset $E$ of $\irr(\mathcal{A}^*)$ as \emph{vacuum} and the irreps in $E$ as \emph{vacuum irrep}.

Let us now apply \cref{eq:N_unit} for the case that $a\in E$. As in this case both equations in \cref{eq:N_unit} apply, we obtain that for all $a\in\irr(\mathcal{A})$ and $e,f\in E$,
\begin{equation}\label{eq:unit_unit_fusion}
  N_{ef}^a = \begin{cases}
    1 & \text{if $e=f=a$,}\\
    0 & \text{otherwise.}
  \end{cases}
\end{equation}
Stating otherwise, we have obtained that for any $e\in E$, $l_e=r_e=e$ holds. Note the difference between the MPO representation of the unit of the algebra and the element $\tau_E$.
  The MPO representation of the unit of the algebra $1\in \mathcal{A}$ is
  \begin{equation*}
  \phi^{\otimes n} \circ \Delta^{n-1} (1) =  \sum_{e\in E} \
  \begin{tikzpicture}
    \node[t] (v) at (0.5,0) {};
    \node[t] (w) at (4.5,0) {};
    \foreach \x in {1,2,4}{
      \node[tensor] (t\x) at (\x,0) {};
      \draw[->-] (t\x)--++(0,0.4);
      \draw[-<-] (t\x)--++(0,-0.4);
    }
    \node[fill=white] (dots) at (3,0) {$\dots$};
    \draw[virtual,->-] (w) -- (t4) node[irrep] {$e$};
    \draw[virtual,->-] (t4) -- (dots) node[irrep] {$e$};
    \draw[virtual,->-] (dots) -- (t2) node[irrep] {$e$};
    \draw[virtual,->-] (t2) -- (t1) node[irrep] {$e$};
    \draw[virtual,->-] (t1) -- (v) node[irrep] {$e$};
  \end{tikzpicture}  \  ,
  \end{equation*}
  and the MPO representation of the unit of the Grothendieck ring, $\tau_E$, is
  \begin{equation*}
  \phi^{\otimes n} \circ \Delta^{n-1} (\tau_E) =  \sum_{e\in E} \
  \tikzsetnextfilename{83dfe25e-5482-4794-a347-5d4aea0cab93}
  \begin{tikzpicture}
    \coordinate (v) at (0.5,0);
    \coordinate (w) at (4.5,0);
    \foreach \x in {1,2,4}{
      \node[tensor] (t\x) at (\x,0) {};
      \draw[->-] (t\x)--++(0,0.4);
      \draw[-<-] (t\x)--++(0,-0.4);
    }
    \node[fill=white] (dots) at (3,0) {$\dots$};
    \draw[virtual,->-] (w) -- (t4) node[irrep] {$e$};
    \draw[virtual,->-] (t4) -- (dots) node[irrep] {$e$};
    \draw[virtual,->-] (dots) -- (t2) node[irrep] {$e$};
    \draw[virtual,->-] (t2) -- (t1) node[irrep] {$e$};
    \draw[virtual,->-] (t1) -- (v) node[irrep] {$e$};
    \draw[virtual] (v)--++(0,-0.5) coordinate (c);
    \draw[virtual] (w)--++(0,-0.5) coordinate (d);
    \draw[virtual,->-] (c)--(d);
  \end{tikzpicture}  \  .
  \end{equation*}
These two MPOs coincide if and only if the irrep $e$ is one-dimensional for all $e\in E$.

Until now, we have only used the unit axiom, and not the counit axiom: the unit $\tau_E\in \mathcal{A}$ of the Grothendieck ring exists and the MPO representation of the unit $1\in\mathcal{A}$ is rank-one in each injective block even for pre-bialgebras that satisfy the unit axiom but not the counit axiom. Let us now derive a consequence of the counit axiom that we will use later on. First, note that the counit axiom can equally be written as
\begin{equation*}
  \Delta^2(\epsilon) = (1\otimes \Delta(\epsilon)) \cdot (\Delta(\epsilon) \otimes \epsilon) = (\Delta(\epsilon) \otimes \epsilon) \cdot (\epsilon\otimes \Delta(\epsilon)),
\end{equation*}
and therefore the graphical representation of the counit axiom is
\begin{equation*}
  \sum_{ed\mu\nu}
  \tikzsetnextfilename{3288b985-658d-4b95-b3da-5dbe72580f7b}
 \ , & \text{if $N_{ab}^d\neq 0$,} \\
    0,                     & \text{if $N_{ab}^d = 0$}.
  \end{cases}
\end{equation}
This equation, however, is not a sufficient condition for the unit of the pre-bialgebra to satisfy the counit axiom \cref{eq:counit_axiom}.

\subsection{Weak Hopf algebras}

In this section we define weak Hopf algebras. In these algebras we define as set of elements called integrals, and show that a cosemisimple weak Hopf algebra over $\mathbb{C}$ has a special integral $\Lambda$ such that is a projector and is such that $\Lambda(fg) = \Lambda(S^2(g)f)$ holds for all $f,g\in\mathcal{A}^*$.

\begin{definition}[Weak Hopf algebra]\label{def:wha}
  A weak Hopf algebra (WHA) is a WBA $\mathcal{A}$ together with a linear map $S:\mathcal{A}\to\mathcal{A}$, called antipode, such that
  \begin{gather*}
     \sum_{(x)} S(x_{(1)})x_{(2)}= \sum_{(1)}  1_{(1)} \epsilon(x1_{(2)}), \\
     \sum_{(x)} x_{(1)}S(x_{(2)})= \sum_{(1)} \epsilon ( 1_{(1)} x)  1_{(2)}, \\
     \sum_{(x)} S(x_{(1)})x_{(2)}S(x_{(3)}) = S(x).
	\end{gather*}
\end{definition}

Given a WHA $\mathcal{A}$ with antipode $S_{\mathcal{A}}$, it is easy to check that the map $S_{\mathcal{A}^*}:\mathcal{A}^* \to \mathcal{A}^*$ defined by
\begin{equation}\label{eq:antipode_dual}
  S_{\mathcal{A}^*}(f) = f\circ S_{\mathcal{A}},
\end{equation}
for all $f\in \mathcal{A}^*$ (i.e.\ $S_{\mathcal{A}^*} = S_{\mathcal{A}}^T$), satisfies the antipode axioms as well, and thus $\mathcal{A}^*$ is a  WHA too. From now on, we do not differentiate between the antipode of $\mathcal{A}$ and that of $\mathcal{A}^*$, and denote both by $S$. The antipode $S$ of a WHA is an anti-homomorphism ($S(xy) = S(y)S(x)$), an anti-cohomomorphism ($\Delta\circ S = (S\otimes S) \circ \Delta_{\mathrm{op}}$) and a bijection (see \cite{Bohm1999} for a proof). In fact, an equivalent characterization of the antipode (of $\mathcal{A}^*$) is that it is a bijective anti-homomorphism of $\mathcal{A}^*$ such that for every $f\in\mathcal{A}^*$
\begin{align}
  \sum_{(f)} S(f_{(1)}) f_{(2)} \otimes f_{(3)} = \epsilon_{(1)} \otimes f \epsilon_{(2)}, \label{eq:antipode_ax_1} \\
  \sum_{(f)} f_{(1)} \otimes f_{(2)} S(f_{(3)})  = \epsilon_{(1)} f \otimes \epsilon_{(2)}. \label{eq:antipode_ax_2}
\end{align}

Let $\mathcal{A}$ be a cosemisimple WHA and $\psi:\mathcal{A}^*\to \End(W)$ be a representation of $\mathcal{A}^*$ on a vector space $W$. The antipode $S$ of $\mathcal{A}^*$ is an anti-homomorphism, and thus the linear map $\bar{\psi}:\mathcal{A}^* \to \End(W^*)$ defined by
\begin{equation*}
  \bar{\psi}(f) =(\psi\circ S(f))^T
\end{equation*}
is a representation of $\mathcal{A}^*$ on the vector space $W^*$. As $S$ is a bijection, $\bar{\psi}$ is an irrep if $\psi$ is an irrep. Let $\irr(\mathcal{A}^*)$, as above, denote the irrep equivalence classes of $\mathcal{A}^*$ and for every $a\in \irr(\mathcal{A}^*)$ let us fix a representation $\psi_a$ on the vector space $V_a$ from the irrep class $a$. For each irrep $\psi_a$ the representation $\bar{\psi}_a$ is an irrep in another class, that we denote by $\bar{a}$ (note that $\bar{a}$ might coincide with $a$). As $\bar{\psi}_a$ and $\psi_{\bar{a}}$ are in the same irrep class, they are related to each other by a basis transformation, i.e.\ there are linear maps $Z_{a}: W_{\bar{a}} \to W_{a}^*$ such that
\begin{equation*}
  \overline{\psi_a}(f) = Z_{a} \cdot \psi_{\bar{a}}(f)\cdot Z_{a}^{-1}.
\end{equation*}
Let us denote this equation using the graphical notation of tensor networks. We have to be careful with the notation, as $\overline{\psi_a}(f)$ is not one of the representations that we have fixed previously. We thus have to use a more verbose notation and display the representation itself and that it is a linear map from $W_a^*$ to $W_a^*$:
  \begin{equation*}
 \ =
  \sum_y \phi\circ S(y) \otimes \psi_a(\delta_y) =
  \sum_x \phi(x) \otimes \psi_a \circ S(\delta_x),
\end{equation*}
where in the second equation we have used that if $B$ is a basis with basis elements $y$, then the set $x=S(y) \ (y\in B)$ forms a basis as well and its dual basis is $\delta_x = S^{-1}(\delta_y) \ (y\in B)$. Note that we have oriented the virtual indices of this tensor in the opposite direction as in the MPO tensor given by black dots. This is because $S$ is an anti-homomorphism; more precisely, we can relate the two tensors using \cref{eq:SZ}:
\begin{equation}\label{eq:S_MPO_tensor}
  \tikzsetnextfilename{43d1a570-2798-45f6-b732-30d466b95000}
  \begin{tikzpicture}[baseline=-1mm]
    \node[tensor,gray] (t) at (0,0) {};
    \draw[virtual,->-] (t)--++(0.5,0);
    \draw[virtual,-<-] (t)--++(-0.5,0);
    \node[irrep,anchor=south] at (-0.3,0) {$a$};
    \node[irrep,anchor=south] at (0.3,0) {$a$};
    \draw[->-] (t) --++ (0,0.4);
    \draw[-<-] (t) --++ (0,-0.4);
  \end{tikzpicture} \ =
  \tikzsetnextfilename{79b7bc84-2727-409d-b5eb-8cd86ccf7514}
  \begin{tikzpicture}[baseline=-1mm]
    \node[tensor] (t) at (0,0) {};
    \node[tensor,label=below:$Z_{a}^{-1}$] (zinv) at (0.6,0) {};
    \node[tensor,label=below:$Z_{a}$] (z) at (-0.6,0) {};
    \draw[virtual,->-] (zinv)--++(0.5,0);
    \draw[virtual,->-] (zinv)--(t);
    \draw[virtual,->-] (t)--(z);
    \draw[virtual,-<-] (z)--++(-0.5,0);
    \node[irrep,anchor=south] at (-0.3,0) {$\bar{a}$};
    \node[irrep,anchor=south] at (-0.85,0) {$a$};
    \node[irrep,anchor=south] at (0.3,0) {$\bar{a}$};
    \node[irrep,anchor=south] at (0.85,0) {$a$};
    \draw[->-] (t) --++ (0,0.4);
    \draw[-<-] (t) --++ (0,-0.4);
  \end{tikzpicture} \ .
\end{equation}
Using this gray tensor, one can rewrite \cref{eq:antipode_mpo_1} as
\begin{equation}\label{eq:S_x_MPO}
    \phi^{\otimes n} \circ \Delta^{n-1} \circ S(x) = \sum_a \
  \begin{tikzpicture}
    \foreach \x in {1,2,4}{
      \node[tensor,gray] (t\x) at (\x,0) {};
      \draw[->-] (t\x)--++(0,0.4);
      \draw[-<-] (t\x)--++(0,-0.4);
    }
    \node[fill=white] (dots) at (3,0) {$\dots$};
    \node[tensor,label=below:$b_a(x)$] (b) at (0,0) {};
    \draw[virtual,-<-] (t4)--(dots) node[midway,black,irrep] {$a$};
    \draw[virtual,-<-] (dots)--(t2) node[midway,black,irrep] {$a$};
    \draw[virtual,-<-] (t2)--(t1) node[midway,black,irrep] {$a$};
    \draw[virtual,-<-] (t1)--(b) node[midway,black,irrep] {$a$};
    \draw[virtual] (b)--++(-0.5,0)--++(0,-0.6) coordinate (d);
    \draw[virtual] (t4)--++(0.5,0) --++(0,-0.6) coordinate (e);
    \draw[virtual,-<-] (d)--(e);
  \end{tikzpicture} \ .
\end{equation}
This MPO, by definition, is $\phi^{\otimes n} \circ S^{\otimes n} \circ \Delta_{\mathrm{op}}^{n-1} (x)$ (read from left to right), as the virtual index of the MPO tensor is written in the opposite direction than as usual. As $\phi$ is w.l.o.g.\ injective, we have obtained the relation
\begin{equation*}
  \Delta^{n-1} \circ S (x) = S^{\otimes n}\circ \Delta_{\mathrm{op}}^{n-1} (x),
\end{equation*}
that is, in fact, a simple consequence of the fact that $S$ is an anti-cohomomorphism ($\Delta\circ S = (S\otimes S) \circ \Delta_{\mathrm{op}}$).

Using the tensors $Z_a$ that we defined in \cref{eq:SZ}, the antipode axioms of $\mathcal{A}^*$, \cref{eq:antipode_ax_1,eq:antipode_ax_2},  can be expressed as
  \begin{align*}
    \sum_{b\mu \, d\nu} \
    \tikzsetnextfilename{b9d1b674-e215-455e-af5d-85c7f8197922}
 \ .
  \end{align*}
Let us explain briefly how to arrive at these graphical representations of \cref{eq:antipode_ax_1} and \cref{eq:antipode_ax_2}. In \cref{eq:antipode_ax_1}, on the l.h.s.\ we first need to take two repeated coproduct of $f$: this is achieved by the two pairs of fusion tensors (see \cref{eq:splitting}). Then the antipode of the first component is taken (see \cref{eq:SZ_2}) and it is multiplied with the second component of the coproduct. On the r.h.s.\ of the same equation, we multiply $f$ with the second component of $\Delta(\epsilon)$ (see \cref{eq:delta_epsilon}). The second equation is obtained similarly.
As these equations hold for any $f \in\mathcal{A}^*$, we conclude that
  \begin{align*}
    \sum_{d\mu\nu} \
  	\tikzsetnextfilename{34ec966c-594a-4b78-9ed8-b6ff6f552de0}
 \ .
  	\end{equation}
Let us now show that $C_{ab}^d$ and $\hat{C}_{ab}^d$ are both invertible. In both equations the tensors on both sides are linearly independent, therefore $N_{ab}^d\leq N_{\bar{a}d}^b$ and $N_{ab}^d \leq N_{d\bar{b}}^a$. Applying these inequalities twice, we obtain that $N_{ab}^d\leq N_{\bar{a}d}^b \leq N_{\bar{\bar{a}}b}^d$, and that $N_{ab}^d \leq N_{d\bar{b}}^d \leq N_{a\bar{\bar{b}}}^d$. As $a\mapsto \bar{\bar{a}}$ is a permutation of $\irr(\mathcal{A}^*)$, repeating these equations a few\footnote{the order of the permutation $a\mapsto \bar{\bar{a}}$} times we obtain that $N_{ab}^d\leq N_{\bar{\bar{a}}b}^d\leq \dots \leq N_{ab}^d$ and that $N_{ab}^d \leq N_{a\bar{\bar{b}}}^d \leq \dots \leq N_{ab}^d$, i.e.\ in all of the inequalities above equality holds. In particular, for all $a,b,c\in \irr(\mathcal{A}^*)$, the numbers $N_{ab}^d$ possess the following symmetries:
\begin{equation*}
  N_{ab}^d =  N_{\bar{a}d}^b \quad \text{and} \quad N_{ab}^d = N_{d\bar{b}}^a.
\end{equation*}
This implies that in both equation in \cref{eq:fold-back} we transform $N_{ab}^d$ linearly independent vectors to $N_{ab}^d$ linearly independent vectors, i.e.\ both $C_{ab}^d$ and $\hat{C}_{ab}^d$ are invertible. It is easy to check that the argument above can be repeated backwards, i.e.\ \cref{eq:fold-back} implies the antipode axioms. We have thus seen that
\begin{proposition}\label{thm:foldback}
  Let $\mathcal{A}$ be a finite dimensional cosemisimple WBA. Then the map $S:\mathcal{A}^*\to \mathcal{A}^*$ given by
  \begin{equation*}
    \tikzsetnextfilename{eb24378a-362f-4935-b8fa-74019ac142db}
 \ .
  	\end{equation*}
\end{proposition}
As it is interesting on its own, let us restate here the symmetry properties of the fusion multiplicities $N_{ab}^c$ together with some of the consequences of these symmetries:
\begin{proposition}\label{thm:N_symmetry_consequence}
  Let $\mathcal{A}$ be a finite dimensional cosemisimple weak Hopf algebra, and $a,b,c\in \irr(\mathcal{A}^*)$. Then
  \begin{equation}\label{eq:N_symmetries}
    N_{ab}^d =  N_{\bar{a}d}^b \quad \text{and} \quad N_{ab}^d = N_{d\bar{b}}^a.
  \end{equation}
  Moreover, the following equations hold:
  \begin{itemize}
  \item  $\bar{\bar{a}} = a$ for all $a\in\irr(\mathcal{A}^*)$,
  \item  $l_a = r_{\bar{a}}$ and $r_a=l_{\bar{a}}$ for all $a\in \irr(\mathcal{A}^*)$, where $l_a,r_a\in E$ are as in \cref{eq:N_unit},
  \item  $e = \bar{e}$ for all $e\in E$.
  \end{itemize}
\end{proposition}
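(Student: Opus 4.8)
The symmetries \eqref{eq:N_symmetries} themselves are already obtained in the discussion leading to \cref{thm:foldback}: invertibility of $C_{ab}^d$ and $\hat C_{ab}^d$ forces $N_{ab}^d=N_{\bar a d}^b$ and $N_{ab}^d=N_{d\bar b}^a$, and iterating the resulting inequalities around the finite-order permutation $a\mapsto\bar{\bar a}$ pins down equality. So the task is to deduce the three bullet points, which I would treat as combinatorial consequences of these two symmetries together with the vacuum structure encoded in \cref{eq:N_unit,eq:unit_unit_fusion}.

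For $\bar{\bar a}=a$, the plan is to promote the fusion-level identity to a character identity. Applying the first symmetry twice gives $N_{ab}^d=N_{\bar a d}^b=N_{\bar{\bar a}b}^d$ for all $b,d$, i.e.\ the left-multiplication operators by $\tau_a$ and $\tau_{\bar{\bar a}}$ on the Grothendieck ring coincide (recall \cref{eq:character_multiplication}). Evaluating both on the unit $\tau_E$ and using that $\tau_E$ is a two-sided unit gives $\tau_a=\tau_a\tau_E=\tau_{\bar{\bar a}}\tau_E=\tau_{\bar{\bar a}}$; since distinct irrep characters are linearly independent, $a=\bar{\bar a}$. The point I want to stress is that ``equal fusion coefficients'' must be upgraded to ``equal label'', and the unit $\tau_E$ is exactly what makes this upgrade non-circular.

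For the relations $l_a=r_{\bar a}$ and $r_a=l_{\bar a}$, I would simply rotate the defining equations of $l_a,r_a$ using the two symmetries. Starting from $N_{l_a a}^a=1$, the second symmetry yields $N_{a\bar a}^{l_a}=1$ and then the first yields $N_{\bar a l_a}^{\bar a}=1$; since $l_a\in E$ and, by \cref{eq:N_unit}, $r_{\bar a}$ is the unique vacuum irrep with $N_{\bar a e}^{\bar a}=1$, we conclude $l_a=r_{\bar a}$. Symmetrically, $N_{a r_a}^a=1$ gives $N_{r_a\bar a}^{\bar a}=1$, identifying $r_a=l_{\bar a}$.

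For $e=\bar e$ on $E$, the combinatorics alone are not enough; I first need that $a\mapsto\bar a$ preserves the vacuum, $\bar E=E$. This does not follow from the fusion symmetries and must be read off from the antipode boundary relation \cref{eq:boundary_S}: specializing it to $x=1$ and using $S(1)=1$ (a standard WHA fact, see \cite{Bohm1999}) shows that $b_{\bar a}(1)$ is obtained from $b_a(1)$ by pre- and post-composition with the invertible maps $Z_a^{\pm1}$, so $b_{\bar a}(1)\neq0$ iff $b_a(1)\neq0$, i.e.\ $\bar a\in E\iff a\in E$. Granting $\bar E=E$, for $e\in E$ the relation $N_{ee}^e=1$ from \cref{eq:unit_unit_fusion} gives $N_{\bar e e}^e=1$ by the first symmetry; as $\bar e\in E$ and $l_e=e$ is the unique vacuum irrep fusing with $e$ into $e$, I get $\bar e=l_e=e$. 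The main obstacle is precisely this last input: establishing $\bar E=E$ requires leaving the purely ring-theoretic setting and invoking the explicit antipode action on the boundary, whereas everything else is bookkeeping with the two symmetries and the uniqueness of the vacuum units.
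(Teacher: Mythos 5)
Your proof is correct, and for the symmetries themselves and the second bullet it coincides with the paper's argument (rotate the defining relations $N_{l_a a}^a=1$, $N_{ar_a}^a=1$ through the two symmetries and invoke the uniqueness in \cref{eq:N_unit}). Your derivation of $\bar{\bar a}=a$ is a mild variant: the paper stays at the level of a single coefficient, computing $1=N_{ar_a}^a=N_{\bar a a}^{r_a}=N_{\bar{\bar a}r_a}^a$ and reading off $\bar{\bar a}=a$ from \cref{eq:N_unit}, whereas you upgrade the identity $N_{ab}^d=N_{\bar{\bar a}b}^d$ to $\tau_a=\tau_{\bar{\bar a}}$ in the Grothendieck ring via the unit $\tau_E$ and linear independence of characters. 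Both work; yours uses slightly heavier machinery (that distinct irrep labels have distinct characters, which here follows from $b_b(\tau_a)=\delta_{ab}\,\id_a$) but makes the logic of ``equal coefficients implies equal label'' explicit.

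The one place where you genuinely diverge is the third bullet, and there your diagnosis of an obstruction is mistaken. You claim the combinatorics cannot yield $e=\bar e$ without first knowing $\bar E=E$, and you import that from \cref{eq:boundary_S} at $x=1$ with $S(1)=1$. That detour is itself valid (invertibility of $Z_a$ gives $b_{\bar a}(1)\neq0$ iff $b_a(1)\neq0$), but it is unnecessary: \cref{eq:N_unit} has two halves, and the one you need constrains $N_{ae}^b$ through its \emph{second} index. Applied to $N_{\bar e e}^e=1$ (where the second index $e$ is already known to lie in $E$, with no hypothesis on $\bar e$), it forces $e=r_{\bar e}$ and then equals $\delta_{\bar e e}$, so $\bar e=e$ directly --- this is exactly the paper's proof. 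Your version only reached for the first half of \cref{eq:N_unit}, which indexes the vacuum label in the first slot and hence needs $\bar e\in E$ as an input; switching to the second half removes the need for the antipode boundary relation entirely, and $\bar E=E$ comes out as a corollary rather than a prerequisite.
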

\begin{proof}
We have already proven \cref{eq:N_symmetries} above. Let us now show that $a=\bar{\bar{a}}$ for all irrep labels $a\in \irr(\mathcal{A}^*)$. Applying the left side of \cref{eq:N_symmetries} twice with $d=a$ and $b=r_a$, we obtain that
\begin{equation*}
  1 = N_{ar_a}^a = N_{\bar{a}a}^{r_a} = N_{\bar{\bar{a}}r_a}^a.
\end{equation*}
Using now \cref{eq:N_unit}, this latter fusion multiplicity can only be non-zero if $a = \bar{\bar{a}}$.

Let us now show that for all $a\in\irr(\mathcal{A})$, $l_{\bar{a}} = r_a$ holds. Using the left  equation in \cref{eq:N_symmetries} with $b=r_a$ and $d=a$, we obtain that
\begin{equation*}
  1 = N_{ar_a}^a = N_{\bar{a}a}^{r_a} = N_{r_a\bar{a}}^{\bar{a}},
\end{equation*}
where the last equation is the right equation in \cref{eq:N_symmetries}. As $r_a\in E$ and the above fusion number is non-zero, \cref{eq:N_unit} implies that for all $a\in \irr(\mathcal{A})$, $r_a = l_{\bar{a}}$. The equation $r_{\bar{a}} = l_{a}$ can be shown in a similar way.

Let us finally prove that the permutation $a\mapsto \bar{a}$ leaves the set $E$ invariant, and in fact for every $e\in E$, $\bar{e} = e$. Let us use \cref{eq:N_symmetries} with $a=b=d=e\in E$. We obtain that
\begin{equation*}
  1 = N_{ee}^e = N_{\bar{e}e}^e,
\end{equation*}
and thus, as $e\in E$, using \cref{eq:N_unit}, this latter fusion number can only be non-zero if $e= \bar{e}$.
\end{proof}

Let us show now that in a cosemisimple WHA the matrices $Z_a$ can be expressed with the help of the fusion tensors and the unit of the underlying WBA. Therefore, if in a cosemisimple WBA there exists an antipode making it a WHA, this antipode is uniquely determined by the WBA structure. This statement, in fact, is more general than what we show here, and also holds in the non-cosemisimple case (see e.g. \cite{Bohm1999}).

Let $c\in\irr(\mathcal{A}^*)$ and let us set $\bar{a}=\bar{c}$, $d=c$ and $b=r_c\in E$ in the left equation in \cref{eq:fold-back}, and $d=\bar{c}$, $b = \bar{c}$ and $a=r_c\in E$ in the right equation in \cref{eq:fold-back} and use that $\bar{\bar{c}} = c$. As $N_{\bar{c}c}^{r_{c}} = 1$, we obtain that there are non-zero complex numbers $C_{cr_c}^c$ and $\hat{C}_{r_c\bar{c}}^{\bar{c}}$ such that
  	\begin{equation*}
    C_{cr_c}^c \
  \ .
\end{equation}
We have thus obtained that the matrices $Z_c$ can be expressed with the help of the fusion tensors and the vectors describing the unit of the WBA, and thus the  antipode  of a cosemisimple WHA is unique and completely determined by the WBA structure.

Let us now define a linear functional that will play a central role in the rest of the paper. Before the definition, note that while the number $w_c$ defined in \cref{eq:w_a} depend on the concrete choice\footnote{We could have chosen $\lambda Z_c$ to describe the antipode instead of $Z_c$, for any $\lambda\in\mathbb{C}$. This is in fact the only freedom that we have in the choice of $Z_c$.} of the matrices $Z_c$ and $Z_{\bar{c}}^{-1}$, the number $w_c w_{\bar{c}}$ is independent from this choice, and it is also independent from all the choices we have made before (the fusion tensors and the vectors describing the unit of the algebra). This is because all these objects appear in pairs in the following equation:
\begin{equation*}
  \tikzsetnextfilename{2b4f3e77-e90c-4325-ba64-d85f812ece7b}
  \   =
  \phi^{\otimes 2}\circ (S^{-2}\otimes\id) \circ \Delta(x).
\end{equation}
One can easily see that all elements that satisfy the above cocommutation relation are in fact of this form. Such algebra elements will be called \emph{q-traces}:
\begin{definition}[Q-trace]
  Let $\mathcal{A}$ be a weak Hopf algebra. The algebra element $x\in\mathcal{A}$ is called a q-trace if
  \begin{equation*}
    \Delta_{\mathrm{op}}(x) = (S^{-2} \otimes \id) \circ \Delta(x).
  \end{equation*}
\end{definition}
In the following we will construct an algebra element that is not only a q-trace, but it is also a \emph{left integral}:
\begin{definition}[Integrals]
  A left integral of a WHA $\mathcal{A}$ is an element $\Lambda\in \mathcal{A}$ such that for all $x\in \mathcal{A}$
  \begin{equation*}
    (1\otimes x) \cdot\Delta(\Lambda) = (S(x)\otimes 1) \cdot\Delta(\Lambda).
  \end{equation*}
  The element $\Lambda\in \mathcal{A}$ is called a right integral if $S(\Lambda)$ is a left integral.
  An integral that is both left and right is a two-sided integral. A non-degenerate (see \cref{def:non-degenerate}) integral $\Lambda$ is normalized if $\Lambda^2 = \Lambda$. Finally, an integral is called a Haar integral if it is two-sided, non-degenerate  and normalized.
\end{definition}

We will now prove the main result of this section: that in a cosemisimple WHA there is a special integral that is non-degenerate, normalized and a q-trace:

\begin{theorem}\label{thm:wha_special_integral}
In a finite dimensional cosemisimple WHA $\mathcal{A}$ over $\mathbb{C}$ the element $\Lambda\in\mathcal{A}$ defined\footnote{Remember that $\phi$ is w.l.o.g.\ injective and thus the value of $\phi^{\otimes n} \circ \Delta^{n-1} (\Lambda)$ defines $\Lambda$ uniquely.} by
\begin{equation}\label{eq:spec_integral}
  \phi^{\otimes n} \circ \Delta^{n-1} (\Lambda)  = \sum_{a\in \irr(\mathcal{A}^*)} \frac{d_a}{\sum_{x:\, l_x = l_a} d^2_x }\cdot
      \tikzsetnextfilename{c3c5b78d-5bca-4541-ba3c-a7e24f65608e}
    \ .
     \end{equation*}
\end{restatable}

Investigating the trace of $B_{ab}^c$ we can prove that the numbers $d_a^2$ are positive and thus that the denominator in the r.h.s.\ of \cref{eq:spec_integral} is non-zero. We postpone the proof of this lemma to \cref{sec:foldback_lemma_proof} as well.

\begin{restatable}{lemma}{Tlemma}\label{cor:T}
  For all $a\in \irr(\mathcal{A}^*)$, $d_a^2 = w_a w_{\bar{a}}>0$. Let moreover $T_{ab}^c$ be defined by $T_{ab}^c = \sum_{\mu} \left(B_{ab}^c\right)_{\mu\mu}$. Then the following equations hold:
  \begin{equation*}
    \sum_b T_{ab}^c \cdot d_b = d_a \cdot \delta_{l_a l_c} \cdot d_c \quad \text{and} \quad \sum_{x:l_x = l_a} d_x^2 = \sum_{x:l_x =r_a} d_x^2.
  \end{equation*}
\end{restatable}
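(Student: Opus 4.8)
Throughout I will lean on \cref{lem:foldback}, which I may assume: it provides the involution $B_{ab}^c$ (with $(B_{ab}^c)^2=\id_{N_{ab}^c}$, so $T_{ab}^c=\tr B_{ab}^c$ is an integer with $|T_{ab}^c|\le N_{ab}^c$), the pass-through rule \cref{eq:B} for $g$ through the fusion tensors, and the master identity relating $w_a$, $\tfrac{d_ad_b}{d_c}$ and $B_{ab}^c$. The computational engine for the entire lemma is a single operator identity on $W_c$: combining the first relation of \cref{eq:B}, i.e.\ $(\psi_a(g)\otimes\psi_b(g))\,V_{ab}^{c\mu}=\sum_\nu(B_{ab}^c)_{\mu\nu}\,V_{ab}^{c\nu}\,\psi_c(g)$, with the orthogonality $W_{ab}^{c\mu}V_{ab}^{c\nu}=\delta_{\mu\nu}\id_{W_c}$ from \cref{eq:MPO_tensor_product}, I obtain
\[
\sum_\mu W_{ab}^{c\mu}\,(\psi_a(g)\otimes\psi_b(g))\,V_{ab}^{c\mu}=T_{ab}^c\,\psi_c(g).
\]
Every numerical statement of the lemma will be read off from this by closing index loops with $g$-insertions and applying the trace formulas \cref{eq:tr_g}, \cref{eq:tr_g_inv}.

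For the first trace identity I close the $b$-line. Using the symmetry $N_{ab}^c=N_{\bar a c}^b$ of \cref{eq:N_symmetries} together with the fold-back \cref{eq:fold-back}, I rewrite the $a\otimes b\to c$ fusion as a $\bar a\otimes c\to b$ fusion (at the cost of $Z_a$-factors, which recombine with the $g$'s into further $g$-conjugations controlled by $B$); this turns the $b$-channel into an output that I can close into a loop. Summing over $b$ is then exactly the completeness relation \cref{eq:delta_epsilon} for $\psi_{\bar a}\otimes\psi_c$, and tracing the closed $b$-loop against $g$ gives $\tr\psi_b(g)=d_b\,\epsilon_{r_{\bar b}}(1)$ by \cref{eq:tr_g}, while the multiplicity contracts to $T_{ab}^c$ through the displayed operator identity. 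Read the other way, the same diagram is the $g$-twisted trace of the projector onto $W_{\bar a}\boxtimes W_c$; closing the residual $\bar a$- and $c$-loops through the vacuum, \cref{eq:WBA_fusion_tensor_characteristic} collapses the vacuum fusion tensors and forces the selection rule $l_c=l_a$. A careful bookkeeping of the vacuum normalisations (all the $\epsilon_r(1)$ factors are pinned by \cref{eq:tr_g}/\cref{eq:tr_g_inv} and \cref{eq:WBA_fusion_tensor_characteristic} and cancel between the two sides) leaves the bare $d$'s, i.e.\ $\sum_b T_{ab}^c d_b=d_a\,\delta_{l_a l_c}\,d_c$. Running the mirror computation (closing the $c$-line, which by the second symmetry $N_{ab}^c=N_{d\bar b}^a$ is the left–right reflection of the first) yields the companion identity $\sum_c T_{ab}^c d_c=d_a d_b\,\delta_{r_a l_b}$, whose selection $\delta_{r_a l_b}$ is just the composability condition, automatic whenever some $N_{ab}^c\neq0$.

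The $d^2$-sum identity then follows by algebra. Write $D_e:=\sum_{x:\,l_x=e}d_x^2$. Multiplying $\sum_b T_{ab}^c d_b=d_ad_c\delta_{l_a l_c}$ by $d_c$ and summing over $c$,
\[
\sum_b d_b\Big(\sum_c T_{ab}^c d_c\Big)=d_a\sum_{c:\,l_c=l_a}d_c^2=d_a D_{l_a},
\]
and substituting the companion identity $\sum_c T_{ab}^c d_c=d_a d_b\delta_{r_a l_b}$ into the left side turns it into $d_a\sum_{b:\,l_b=r_a}d_b^2=d_a D_{r_a}$. Since $d_a^2=w_aw_{\bar a}\neq0$ (both $w$'s are nonzero by \cref{eq:w_a}) I may cancel $d_a$ to get $D_{l_a}=D_{r_a}$, which by $l_{\bar a}=r_a$ from \cref{thm:N_symmetry_consequence} is precisely $\sum_{x:\,l_x=l_a}d_x^2=\sum_{x:\,l_x=r_a}d_x^2$. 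Note this uses only $d_a\neq0$, so this part is independent of positivity.

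It remains to upgrade $d_a^2=w_aw_{\bar a}\neq0$ to $d_a^2>0$, and this is where the involution $B$ is indispensable and where I expect the main difficulty. Taking the \emph{ordinary} (untwisted) trace of the master identity of \cref{lem:foldback} — closing all three loops with the plain dimensions, so the right-hand side collapses to $\tfrac{d_ad_b}{d_c}\,T_{ab}^c\,D_c$ with $D_c=\dim W_c>0$ — expresses $w_aw_{\bar a}$ through the integer traces $T_{ab}^c$ and strictly positive dimensions. Specialising to $b=\bar a$ and $c=l_a$, where $N_{a\bar a}^{l_a}=1$ so that $B_{a\bar a}^{l_a}$ is a $1\times1$ involution with $T_{a\bar a}^{l_a}=\pm1$, isolates $d_a^2$ as a ratio of a closed $Z_a$-loop value (evaluated via \cref{eq:w_a}) and $D_{l_a}$. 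The genuine obstacle is the sign: I must show this ratio is strictly positive. I plan to resolve it by fixing the signs of the distinguished vacuum data (the $1\times1$ traces $T_{a\bar a}^{l_a}$ and the loop/vacuum normalisations $\epsilon_e(1)$ for $e\in E$) as a base case, and then propagating positivity across the whole fusion graph using the integrality of every $T_{ab}^c$ together with the already-established relations; integrality is what prevents signs from drifting. Once $d_a^2>0$ holds for all $a$, the denominator $\sum_{x:\,l_x=l_a}d_x^2$ in \cref{eq:spec_integral} is a strictly positive real, the form in which the lemma is used.
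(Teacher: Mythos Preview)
Your derivation of the first trace identity and the $d^2$-sum identity is essentially correct, though the route differs from the paper's. The paper obtains $\sum_b T_{ab}^c d_b=d_ad_c\delta_{l_al_c}$ by a direct diagrammatic computation that explicitly invokes the counit axiom via \cref{eq:counit_consequence} to collapse the vacuum fusion tensors; you instead close loops with $g$-insertions and rely on the trace formulas \cref{eq:tr_g}, \cref{eq:tr_g_inv}. Your ``companion identity'' $\sum_c T_{ab}^c d_c=d_ad_b\delta_{r_al_b}$ and the double-sum argument for $D_{l_a}=D_{r_a}$ is a valid alternative to the paper's route, which instead establishes the symmetry $T_{ab}^c=T_{\bar b\bar a}^{\bar c}$ (via $B_{ab}^c=w_a\tfrac{d_c}{d_ad_b}(C_{\bar ac}^b)^T(C_{ab}^c)^{-1}$) and sums directly.

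The genuine gap is your positivity argument. Specialising to $b=\bar a$, $c=l_a$ and then ``propagating signs across the fusion graph using integrality'' is not a proof: integrality of $T_{ab}^c$ by itself does not force a sign, and you give no mechanism by which a base-case sign at the vacuum would determine the signs of all $d_x^2$. The paper's argument (following Etingof--Nikshych--Ostrik) is completely different and avoids this problem entirely: since each $T_{ab}^c\in\mathbb{Z}$, the matrix $(T_a)_b^c:=T_{ab}^c$ is real, hence $T_a^TT_a$ is positive semidefinite. One then checks, using only the already-established relation $\sum_b T_{ab}^c d_b=d_ad_c\delta_{l_al_c}$ together with $T_{\bar ac}^b=T_{ab}^c$ and $r_{\bar a}=l_a$, that the nonzero vector $v_b:=\delta_{r_al_b}d_b$ satisfies $(T_{\bar a}T_a)v=d_ad_{\bar a}\,v$. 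Thus $d_a^2=d_ad_{\bar a}$ is an eigenvalue of a positive semidefinite operator and therefore $\ge0$; since $w_a,w_{\bar a}\neq0$ it is strictly positive. This eigenvalue/PSD trick is the missing idea.
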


With these two  lemmas in hand, we can proceed to the proof of \cref{thm:wha_special_integral}.

\begin{proof}[Proof of \cref{thm:wha_special_integral}]
  Let us define $L$ as
  \begin{equation*}
    \phi^{\otimes n} \circ \Delta^{n-1} (L)  = \sum_{a\in \irr(\mathcal{A}^*)} d_a \cdot
  \  .
  \end{equation*}
  In fact, $\Lambda = N^{-1} L = LN^{-1}$.  Therefore, in order to prove that $\Lambda$ is a normalized left integral, we only have to prove that $L$ is a left integral, and that $L^2 = N L$. Note here that $\Lambda$ is of the form \cref{eq:q-trace}, and thus it is a q-trace. It is also non-degenerate as $g$ is invertible and the normalization constant is non-zero.

  Let us prove now that $L$ is a left integral.   Let us start with the following simple corollary of \cref{lem:foldback}:
  \begin{equation*}
    \sum_{\mu} \frac{d_ad_b}{d_c}\
    \tikzsetnextfilename{7ad583f6-f93f-4dfe-a121-6354b60de6ac}
  \ ,
  \end{equation*}
  or equivalently, using that the representation $\phi$ is w.l.o.g.\ injective, that  $(1\otimes x) \cdot \Delta(L) = (S(x)\otimes 1) \cdot \Delta(L)$, i.e.\ that $L$ is a left integral. This integral is automatically non-degenerate (see \cref{sec:MPS_cocentral}) as $b_a(L) = d_a\psi_a(g)$ is invertible for all $a\in \irr(\mathcal{A}^*)$.

  Let us now show that $L^2 = N L$, and thus that $\Lambda$ is a normalized integral. The matrices $b_c(L^2)$ describing $L^2$ are of the form (see \cref{eq:product_boundary})
  \begin{equation*}
    b_c(L^2) =
    \sum_{ab\mu} d_a d_b\cdot
    \tikzsetnextfilename{6ddc9766-1632-45dd-8592-95d914016254}
    \begin{tikzpicture}[baseline=-1mm]
      \node[fusion tensor] (v) at (0,0) {};
      \node[fusion tensor] (w) at (1,0) {};
      \node[tensor, label = below: $g$] (xb) at (0.5,-0.25) {};
      \node[tensor, label = above: $g$] (xa) at (0.5,0.25) {};
      \node[anchor=north,inner sep=3pt,font=\tiny] at (v.south) {$\mu$};
      \node[anchor=north,inner sep=3pt,font=\tiny] at (w.south) {$\mu$};
      \draw[virtual,->-] let \p1=(w.west) in (\x1,-0.25)--(xb) node[black,irrep,midway] {$b$};
      \draw[virtual,-<-] let \p1=(v.east) in (\x1,-0.25)--(xb) node[black,irrep,midway] {$b$};
      \draw[virtual,->-] let \p1=(w.west) in (\x1,0.25)--(xa) node[black,irrep,midway] {$a$};
      \draw[virtual,-<-] let \p1=(v.east) in (\x1,0.25)--(xa) node[black,irrep,midway] {$a$};
      \draw[virtual,-<-] (w.east)  --++ (0.5,0) node[black,irrep,midway] {$c$};
      \draw[virtual,->-] (v.west)  --++ (-0.5,0) node[black,irrep,midway] {$c$};
    \end{tikzpicture} =
    \sum_{ab} d_a T_{ab}^c d_b \cdot
    \begin{tikzpicture}[baseline=-1mm]
      \node[tensor,label=below:$g$] (x) at (0,0) {};
      \draw[virtual,-<-] (x)--++(0.6,0) node[black,irrep,midway] {$c$};
      \draw[virtual,->-] (x)--++(-0.6,0) node[black,irrep,midway] {$c$};
    \end{tikzpicture} = \sum_{a:\; l_a = l_c} d_a^2\cdot
    \begin{tikzpicture}[baseline=-1mm]
      \node[tensor,label=below:$g_c$] (x) at (0,0) {};
      \draw[virtual,-<-] (x)--++(0.6,0) node[black,irrep,midway] {$c$};
      \draw[virtual,->-] (x)--++(-0.6,0) node[black,irrep,midway] {$c$};
    \end{tikzpicture}  = b_c(NL)\ ,
  \end{equation*}
  where in the second equality we have used the definition of $T_{ab}^c$, and in the third, the last point of \cref{cor:T}.
\end{proof}

A few remarks are in place. First, it is known that every cosemisimple WHA is finite dimensional (see \cite{Nikshych2003} and Theorem 3.13 in \cite{Bohm1999}), and thus the assumption on $\mathcal{A}$ being finite dimensional is redundant. Second, the normalized integral $\Lambda$ provides a separability element $I\in \mathcal{A}\otimes \mathcal{A}$ via $I = (S^{-1}\otimes\id) \circ \Delta(\Lambda)$ for the algebra $\mathcal{A}$. Therefore $\mathcal{A}$ is separable, and in particular, it is semisimple, i.e.\ we have obtained Theorem 2.26.\ of \cite{Etingof2002}. Finally, let us note that $\Lambda$ is closely related to the canonical left integral $\hat{L}$ defined in \cite{Nikshych2003} (see also \cite{Bohm1999} and \cite{Etingof2002}). In \cite{Nikshych2003}, the canonical left integral is defined as
\begin{equation*}
  \hat{L}(f) = \sum_{x\in B} x( f S^2(\delta_x)),
\end{equation*}
where $B$ is a basis of $\mathcal{A}$ and $\delta_x$ denotes the dual basis. In a finite dimensional  cosemisimple weak Hopf algebra, this expression can be re-expressed using the matrix $b(x)$:
\begin{equation*}
  \hat{L}(f) = \sum_{x\in B} \sum_{a\in \irr(\mathcal{A}^*)} \tr \{ b_a(x) \cdot \psi_{a}(f) \cdot \psi_a ( S^2(\delta_x))\}
\end{equation*}
Using now \cref{eq:S2}, we can also write
\begin{equation*}
  \hat{L}(f) =  \sum_{a\in \irr(\mathcal{A}^*)} \sum_{x\in B} \tr \{ b_a(x) \cdot \psi_{a}(f) \cdot \psi_{a}(g)  \cdot \psi_a (\delta_x) \cdot \psi_a(g^{-1})\}
\end{equation*}
As we sum over $x$, the matrix $\psi_a (\delta_x)$ takes all possible values in $\End(W_a)$. Conversely, iterating over elements of $\bigoplus_{a\in \irr(\mathcal{A}^*)} \End(V_a)$ defines a basis in $\mathcal{A}^*$ through the representation $\psi = \bigoplus_a \psi_a$. Let us fix a basis $\ket{ai}$ in each $W_a$ ($i=1, \ldots, \mathrm{dim}(W_a)$), and choose the basis $B$ such that for all $x$, there is $a\in \irr(\mathcal{A}^*)$ and $i,j=1, \ldots, \mathrm{dim}(W_a)$ such that $\psi(\delta_x) = \ket{ai}\bra{aj}$. It is easy to see that then $b_a(x) = \ket{aj}\bra{ai}$. Therefore one can also write
\begin{equation*}
  \hat{L}(f) =  \sum_{a\in \irr(\mathcal{A}^*)} \sum_{ij} \tr \{ \ket{aj}\bra{ai} \cdot \psi_{a}(fg) \cdot \ket{ai}\bra{aj} \cdot \psi_a(g^{-1})\} = \sum_{a\in \irr(\mathcal{A}^*)} \sum_i \bra{ai} \psi_{a}(fg)  \ket{ai} \cdot \sum_j \bra{aj} \psi_a(g^{-1}) \ket{aj}.
\end{equation*}
Using the trace formula \cref{eq:tr_g_inv}, $\tr \psi_a(g^{-1}) = d_a \epsilon_{r_a}(1)$, $\hat{L}(f)$ can be further written as
\begin{equation*}
  \hat{L}(f) =  \sum_{a\in \irr(\mathcal{A}^*)} d_a \epsilon_{r_a}(1) \cdot \tr \{\psi_{a}(gf) \},
\end{equation*}
i.e.\ $\hat{L}$ is described by the boundary $\epsilon_{r_a}(1) \cdot d_a \psi_{a}(g)$, or equivalently, the MPO representation of $\hat{L}$ is given by
\begin{equation*}
  \phi^{\otimes n} \circ \Delta^{n-1} (\hat{L})  = \sum_{a\in \irr(\mathcal{A}^*)} d_{a} \cdot \epsilon_{r_a}(1) \cdot
  	  \begin{tikzpicture}
  	    \draw[virtual] (0.5,0) rectangle (5.5,-0.7);
       \node[tensor,label=below:$g$] (x) at (1,0) {};
  	    \foreach \x/\t in {2/1,3/2,5/n}{
  	      \node[tensor] (t\x) at (\x,0) {};
  	      \draw[->-] (t\x) --++ (0,0.5);
  	      \draw[-<-] (t\x) --++ (0,-0.5);
  	    }
  	    \node[fill=white] (dots) at (4,0) {$\dots$};
        \draw[virtual,->-] (t2) -- (x) node[midway,black,irrep,above] {$a$};
        \draw[virtual,->-] (t3) -- (t2) node[midway,black,irrep,above] {$a$};
        \draw[virtual,->-] (dots) -- (t3) node[midway,black,irrep,above] {$a$};
        \draw[virtual,->-] (t5) -- (dots) node[midway,black,irrep,above] {$a$};
        \draw[virtual,->-] (0.5,-0.7) -- (5.5,-0.7);
  	  \end{tikzpicture}\ .
\end{equation*}

\subsection{Pivotal weak Hopf algebras} \label{sec:pivotal_WHA}

In certain WHAs one can define a special cocentral element using the integral $\Lambda$ defined in the previous section. This element satisfies an equation reminiscent to the definition of an integral, but here, instead of the antipode, another operation appears. The WHAs where the construction works are called \emph{pivotal} WHAs, and their defining property is that the antipode is special: the square of the antipode is an inner automorphism of the algebra realized by a \emph{group-like} element (see the definition below). While such a structure seems to be restrictive, it is conjectured that any semisimple WHA is actually pivotal \cite{Etingof2002}.

\begin{definition}[Group-like elements]\label{def:grouplike}
  Let $\mathcal{A}$ be a pre-bialgebra. $G\in\mathcal{A}$ is group-like if it is invertible and
  \begin{equation*}
    \Delta(G) = (G\otimes G) \cdot \Delta(1) =\Delta(1) \cdot (G\otimes G).
  \end{equation*}
\end{definition}
Group-like elements of a WHA $\mathcal{A}$ form a group: if $G,H\in \mathcal{A}$ are group-like, then $GH$ is also group-like; $1\in \mathcal{A}$ is also group-like, and finally, if $G\in\mathcal{A}$, then $G^{-1}$ is group-like as well. In a WHA $S(G) = G^{-1}$ for any group-like $G$ \cite{Bohm1999}.

$\mathcal{A}^*$ is a pre-bialgebra if and only if $\mathcal{A}$ is a pre-bialgebra; the unit of $\mathcal{A}^*$ is $\epsilon$, and thus a linear functional $k\in \mathcal{A}^*$ is group-like if it is invertible and
\begin{equation*}
  \Delta(k) = (k\otimes k) \cdot \Delta(\epsilon) =\Delta(\epsilon) \cdot (k\otimes k).
\end{equation*}
In a cosemisimple WHA $\mathcal{A}$, group-like elements of $\mathcal{A}^*$ have a nice characterization using the graphical representation of the fusion tensors. An element $k\in\mathcal{A}^*$ is group-like if and only if it is invertible and
\begin{equation*}
  \sum_{c\mu}
  \tikzsetnextfilename{c2fa75b6-6964-4113-aca7-53a522a6ea80}
 \ .
\end{equation*}
In the previous section (\cref{cor:T}) we have seen a group-like element: the element $g^2\in \mathcal{A}^*$ is group-like, where $g$ is defined by \cref{eq:x_def}. Conjugation by this element describes $S^4$. A \emph{pivotal} WHA is one where not only the fourth power, but also the square of the antipode is realized as a conjugation by a group-like element:
\begin{definition}[Pivotal WHA]
  A WHA $\mathcal{A}$ is pivotal if there is a group-like element $G$ such that for all $x\in\mathcal{A}$,
  \begin{equation*}
    S^2(x) = G \cdot x \cdot G^{-1}.
  \end{equation*}
  Such a group-like element $G$ is called a pivotal element of $\mathcal{A}$.
\end{definition}

If a WHA $\mathcal{A}$ is such that $\mathcal{A}^*$ is pivotal, then, as $S^2$ is an inner automorphism of $\mathcal{A}^*$  described by the linear functional $g$ defined in \cref{eq:x_def}, all pivotal elements $k$ of $\mathcal{A}^*$ are of the form $k=\xi g$ for some $\xi\in\mathcal{A}^*$ central element. That is, there are numbers $\xi_a$ such that for all $a$,
\begin{equation}
    \psi_a(k) = \xi_a
 \ .
\end{equation}
Let us now notice that the matrix $B_{ab}^c$ defined in \cref{cor:T} is proportional to the identity, more precisely, $B_{ab}^c = \xi_c/(\xi_a \xi_b) \cdot \id_{N_{ab}^c}$, and thus that $T_{ab}^c = N_{ab}^c\cdot \xi_c/(\xi_a \xi_b)$. We thus obtain that
\begin{equation}\label{eq:N_eigenvector}
  \sum_b N_{ab}^c \cdot \frac{d_b}{\xi_b} = \delta_{l_a l_c} \cdot \frac{d_{\bar{a}}}{\xi_{\bar{a}}} \cdot \frac{d_c}{\xi_c}.
\end{equation}
With these statements we can now prove the following theorem:
\begin{theorem}\label{thm:main}
  Let $\mathcal{A}$ be a finite dimensional weak Hopf algebra over $\mathbb{C}$ such that $\mathcal{A}^*$ is semisimple and pivotal with pivotal element $k\in\mathcal{A}^*$ such that $k_a = \xi_a g_a$. Then the element $\Omega\in \mathcal{A}$ defined by
  \begin{equation*}
    \phi^{\otimes n} \circ \Delta^{n-1} (\Omega)  = \sum_{a\in \irr(\mathcal{A}^*)} \frac{1}{\sum_{x:\, l_x = l_a} d_x^2 }\cdot \frac{d_a}{\xi_a}  \cdot
    \tikzsetnextfilename{fd27623e-cad2-4d0a-835a-14cc67ee452e}
    \begin{tikzpicture}
      \draw[virtual] (1.5,0) rectangle (5.5,-0.7);
      \foreach \x/\t in {2/1,3/2,5/n}{
        \node[tensor] (t\x) at (\x,0) {};
        \draw[->-] (t\x) --++ (0,0.5);
        \draw[-<-] (t\x) --++ (0,-0.5);
      }
      \node[fill=white] (dots) at (4,0) {$\dots$};
      \draw[virtual,->-] (t3) -- (t2) node[midway,black,irrep,above] {$a$};
      \draw[virtual,->-] (dots) -- (t3) node[midway,black,irrep,above] {$a$};
      \draw[virtual,->-] (t5) -- (dots) node[midway,black,irrep,above] {$a$};
      \draw[virtual,->-] (1.5,-0.7) -- (5.5,-0.7);
    \end{tikzpicture}\
  \end{equation*}
  is non-degenerate and cocommutative, it is a projector, and there exists a linear map $T:\mathcal{A}\to \mathcal{A}$ such that
  \begin{equation*}
    (1\otimes x) \cdot \Delta(\Omega) = (T(x)\otimes 1) \cdot \Delta(\Omega).
  \end{equation*}
  This\footnote{As $\Omega$ is non-degenerate, $T$ is uniquely defined, see \cref{sec:pulling_through_alg}.} map $T$ moreover satisfies
  \begin{equation*}
    \Delta\circ T = (T\otimes k^{-1} \otimes T) \circ \Delta_{\mathrm{op}}^2.
  \end{equation*}
\end{theorem}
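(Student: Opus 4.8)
The plan is to treat the four assertions in increasing order of difficulty, exploiting that $\Omega=\sum_{a}c_a\tau_a$ with $c_a=\tfrac{d_a}{\xi_a\sum_{x:\,l_x=l_a}d_x^2}$ is a cocentral element (a combination of the irrep characters $\tau_a$ of $\mathcal A^*$). Cocommutativity, $\Delta_{\mathrm{op}}(\Omega)=\Delta(\Omega)$, is then immediate since every $\tau_a$ is cocentral. Non-degeneracy follows from the characterization in \cref{sec:MPS_cocentral}: the boundary of $\Omega$ is $b_a(\Omega)=c_a\,\id_a$, and $c_a\neq 0$ because $d_a^2=w_aw_{\bar a}>0$ by \cref{cor:T}, $\xi_a\neq 0$ (as $k$ is invertible), and the denominator $\sum_{x:\,l_x=l_a}d_x^2>0$. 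Hence $b_a(\Omega)$ is invertible in every sector, which is exactly non-degeneracy.

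For the projector property I would compute $\Omega^2$ in the Grothendieck ring using $\tau_a\tau_b=\sum_c N_{ab}^c\tau_c$, so that the coefficient of $\tau_c$ in $\Omega^2$ is $\sum_{a,b}c_ac_bN_{ab}^c$. I would first record the vacuum combinatorics forced by $\tau_E$ being the unit (\cref{eq:N_unit}): whenever $N_{ab}^c\neq 0$ one has $l_a=l_c$ and $l_b=r_a$; combined with $\sum_{x:\,l_x=l_a}d_x^2=\sum_{x:\,l_x=r_a}d_x^2$ from \cref{cor:T}, all three denominators attached to $a$, $b$, $c$ coincide. Pulling this common factor out and summing over $b$ by the eigenvector identity \cref{eq:N_eigenvector}, namely $\sum_b N_{ab}^c\tfrac{d_b}{\xi_b}=\delta_{l_al_c}\tfrac{d_{\bar a}}{\xi_{\bar a}}\tfrac{d_c}{\xi_c}$, and using $d_{\bar a}=d_a$, $\xi_{\bar a}=1/\xi_a$, the remaining sum over $a$ collapses to $\sum_{a:\,l_a=l_c}d_a^2$, which cancels one power of the denominator and leaves exactly $c_c$. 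This yields $\Omega^2=\Omega$.

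The substantive part is the pulling-through map $T$. I would construct it graphically, paralleling the proof that $L$ is a left integral in \cref{thm:wha_special_integral}, but with the pivotal folding \cref{eq:foldback_pivotal} (where $B_{ab}^c=\tfrac{\xi_c}{\xi_a\xi_b}\,\id_{N_{ab}^c}$) in place of \cref{lem:foldback}. Concretely, I would act with $\phi(x)$ on the second site of the two-site trace MPO representing $\Delta(\Omega)$, fuse the two MPO tensors via \cref{eq:MPO_tensor_product}, and use \cref{eq:foldback_pivotal} together with the $Z_a$-tensors to transport $x$ around the periodic boundary onto the first site; the operator appearing there, read off through \cref{eq:SZ}, defines $T(x)$ as a $k$-twisted, order-reversing, antipode-type map. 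This establishes $(1\otimes x)\cdot\Delta(\Omega)=(T(x)\otimes 1)\cdot\Delta(\Omega)$ for some linear $T$, and uniqueness of $T$ is then automatic from the non-degeneracy of $\Omega$ (\cref{sec:pulling_through_alg}).

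The hardest step will be the coproduct relation $\Delta\circ T=(T\otimes k^{-1}\otimes T)\circ\Delta_{\mathrm{op}}^2$. I would derive it by demanding consistency of the pulling-through at two adjacent sites: applying the move on the three-site MPO $\Delta^2(\Omega)$, transporting $x$ past the central tensor picks up precisely one pivotal weight $k^{-1}$ on the middle leg, while the antipode/$Z$-folding reverses the order of the outer two legs, producing $\Delta_{\mathrm{op}}^2$. The delicate bookkeeping is to verify that exactly one factor of $k^{-1}$ lands on the central tensor and that the two surviving outer insertions are genuine copies of $T$ rather than some other twist; here the scalar $\xi_c/(\xi_a\xi_b)$ in \cref{eq:foldback_pivotal} is exactly the sector-wise value contributed by inserting $k^{-1}$ on the middle leg, and the identities $\xi_{\bar a}=1/\xi_a$, $d_{\bar a}=d_a$ together with the group-likeness of $k^2$ (\cref{lem:foldback}) are what make the weights collapse correctly. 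Once the middle insertion is identified with $k^{-1}$ and the outer insertions with $T$, coassociativity of $\Delta$ finishes the proof.
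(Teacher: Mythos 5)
Your treatment of the first three assertions is correct. Cocommutativity and non-degeneracy are handled exactly as in the paper (the boundary $b_a(\Omega)=c_a\,\id_a$ with $c_a\neq 0$ by \cref{cor:T} and invertibility of $k$). For the projector property you compute $\Omega^2$ directly in the character ring, using that $N_{ab}^c\neq 0$ forces $l_c=l_a$ and $l_b=r_a$ so that all three normalization denominators coincide, and then applying \cref{eq:N_eigenvector}; the paper instead factors $\Omega=N^{-1}\hat\Omega$ with $N$ the central element of \cref{eq:normalization} and shows $\hat\Omega^2=N\hat\Omega$. These are the same computation with the same inputs (\cref{eq:N_unit}, \cref{cor:T}, $d_{\bar a}=d_a$, $\xi_{\bar a}=\xi_a^{-1}$), packaged differently; your version is fine. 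Your construction of $T$ via \cref{eq:foldback_pivotal} and \cref{eq:MPO_tensor_product}, mirroring the integral argument of \cref{thm:wha_special_integral}, is also precisely the paper's route, and it produces the explicit formula \cref{eq:T} for $T$ in terms of the $Z_a$ matrices.

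The gap is in your plan for the coproduct relation $\Delta\circ T=(T\otimes k^{-1}\otimes T)\circ\Delta_{\mathrm{op}}^2$. Deriving it ``by demanding consistency of the pulling-through at two adjacent sites'' on $\Delta^2(\Omega)$ cannot work as stated: iterating the two-site relation only yields $(\Delta(T(x))\otimes 1)\cdot\Delta^2(\Omega)=(1\otimes 1\otimes x)\cdot\Delta^2(\Omega)$, which constrains $\Delta(T(x))$ only through its pairing with $\Delta^2(\Omega)$, i.e.\ up to factors absorbed by $\Delta(1)$-type projectors (recall $\Delta(1)\neq 1\otimes 1$ in a WHA, so $(1\otimes 1\otimes x)\Delta^2(\Omega)$ and $(\Delta(1)\otimes x)\Delta^2(\Omega)$ must be distinguished with care, and extracting an intrinsic identity from such an equation requires an extra absorption argument you do not supply). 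Moreover, the key claim that ``transporting $x$ past the central tensor picks up precisely one pivotal weight $k^{-1}$'' is a one-site transport statement that is not a formal consequence of the two-site pulling-through equation --- it is exactly the content one must prove. The paper avoids all of this: it never touches $\Omega$ in this step, but computes the two-site MPO of $\Delta(T(x))$ directly from the explicit formula \cref{eq:T}, and recognizes the pair $Z_a^{-1},Z_{\bar a}$ appearing between the two sites, weighted by $(\xi_a w_a/d_a)^2$, as an insertion of $k^{-1}$ via \cref{eq:x_def} and $k_a=\xi_a g_a$; injectivity of $\phi$ then gives the identity, with $\Delta_{\mathrm{op}}$ appearing because the arrows of the resulting MPO are reversed. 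Since your step four already delivers the explicit formula for $T$, the correct repair of your final step is immediate: replace the consistency argument by this direct computation.
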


The number $d_a/\xi_a$ is the \emph{quantum dimension} of the irrep class $a$. This number depends on the concrete choice of $k$ (equivalently, on the choice of the numbers $\xi_a$).  Before proceeding to the proof, note that $\Omega = (g^{-1}\otimes\id)\circ\Delta(\Lambda)$, where $\Lambda$ is defined in \cref{eq:spec_integral}. From this, using that $g$ is group-like and the properties of $\Lambda$, simple (but tedious) algebraic calculations show all the desired properties. In particular, one can obtain that $T = (S\otimes g) \circ \Delta$. Instead of this algebraic calculation, below we provide a proof based on the graphical notation.

\begin{proof}[Proof of \cref{thm:main}]
  By definition, $\Omega$ is cocommutative and as for all $a\in \irr(\mathcal{A}^*)$, $d_a\neq 0$, it is also non-degenerate. Let us now show that $\Omega$ is a projection. As in the proof of \cref{thm:wha_special_integral}, note that $\Omega$ can be written as $N^{-1} \hat{\Omega} = \hat{\Omega} N^{-1}$, where $\hat{\Omega}$ is given by
  \begin{equation}
    \hat{\Omega} = \sum_{a\in \irr(\mathcal{A}^*)}  \frac{d_a}{\xi_a}  \cdot \tau_a,
  \end{equation}
  and $N$ is given by \cref{eq:normalization}. Then $\Omega$ is a projector if and only if $\hat{\Omega}$ satisfies $\hat{\Omega}^2 = N \hat{\Omega}$. Let us calculate $\hat{\Omega}^2$:
  \begin{equation*}
    \hat{\Omega}^2 = \sum_{ab} \frac{d_ad_b}{\xi_a\xi_b}  \cdot \tau_a\tau_b = \sum_{abc} \frac{d_ad_b}{\xi_a\xi_b}  \cdot N_{ab}^c \tau_c =\sum_{ac} \frac{d_a}{\xi_a} \cdot \left(\sum_b N_{ab}^c \frac{d_b}{\xi_b}\right)  \cdot \tau_c ,
  \end{equation*}
  and thus using \cref{eq:N_eigenvector}, we obtain that
  \begin{equation*}
    \hat{\Omega}^2 = \sum_{ac} \delta_{l_al_c} \frac{d_ad_{\bar{a}}}{\xi_a\xi_{\bar{a}}} \cdot \frac{d_c}{\xi_c}\cdot \tau_c = \sum_{c} \left(\sum_{a:l_a = l_c} d_a^2\right) \cdot \frac{d_c}{\xi_c}\cdot \tau_c = N \hat{\Omega},
  \end{equation*}
  where in the second equation we have used that $\xi_{\bar{a}} = \xi_a^{-1}$ and that $d_a = d_{\bar{a}}$. We have thus obtained that $\Omega = N^{-1} \hat{\Omega}$ is a projector.

  Let us now check that $ (1\otimes x) \cdot \Delta(\Omega) = (T(x)\otimes 1) \cdot \Delta(\Omega) $ holds. Similar to the proof in \cref{thm:wha_special_integral}, we start with \cref{eq:foldback_pivotal}:
  \begin{equation*}
    \sum_{\mu} \frac{d_a/\xi_a \cdot d_b/\xi_b}{d_c/\xi_c}
    \tikzsetnextfilename{642f516f-dd27-418b-a495-1e7c5a16d2bd}
 \
\end{equation*}
we directly obtain that $T(x) = (k^{-1} \otimes \id)\circ \Delta(S(x))$,  or, by using that $S$ is an anti-homomorphism and that $k^{-1}\circ S = k$, that $T(x) = (S\otimes k) \circ \Delta (x)$. The inverse of this relation, as it can be see from \cref{eq:T}, is $S(x) = (k^{-1}\otimes \id)\circ \Delta\circ T(x)$, or $S(x) = (T\otimes k^{-1})\circ \Delta(x)$.

\subsection{Spherical weak Hopf algebras}

In this section we define spherical weak Hopf algebras as pivotal weak Hopf algebras satisfying an additional property. Semisimple spherical weak Hopf algebras are such that their representation category is a spherical multi-fusion category. We specialize \cref{thm:main} to the case where $\mathcal{A}$ is not only pivotal, but also spherical.

\begin{definition}[Spherical WHA]
  A pivotal WHA  $\mathcal{A}$ is called spherical if the following two conditions hold. First, that for any irrep $\delta$ from the vacuum, $\epsilon(1_\delta) \neq 0$, where $1_\delta$ is the projector onto the irrep sector $\delta$. Second,  it has a pivotal element $G$ such that for all $\alpha\in \irr(\mathcal{A})$,
  \begin{equation*}
    \frac{t_\alpha(G)}{\epsilon(1_{\lambda_\alpha})}   = \frac{t_\alpha(G^{-1})}{\epsilon(1_{\rho_\alpha})},
  \end{equation*}
  where $t_\alpha$ is the irrep character of the irrep class $\alpha$, $\lambda_\alpha$ is the unique irrep from the vacuum such that $t_{\lambda_\alpha} t_\alpha \neq 0$, and $\rho_\alpha$ is the unique irrep from the vacuum such that $ t_\alpha t_{\rho_\alpha} \neq 0$. Such a pivotal element is called a spherical element of $\mathcal{A}$.
\end{definition}

Let us now consider a WHA over $\mathbb{C}$ such that $\mathcal{A}^*$ is semisimple and spherical. Let $k$ be a spherical element of $\mathcal{A}^*$ with
\begin{equation*}
    \psi_a(k) = \xi_a
  \begin{tikzpicture}[baseline=-1mm]
    \node[tensor,midway,label=below:$g_a$] (t) {};
    \draw[virtual,-<-] (t)--++(0.5,0) node[irrep] {$a$};
    \draw[virtual,->-] (t)--++(-0.5,0) node[irrep] {$a$};
  \end{tikzpicture}  \quad \text{and} \quad
  \psi_a(k^{-1}) = \frac{1}{\xi_a}
  \begin{tikzpicture}[baseline=-1mm]
    \node[tensor,midway,label=below:$g_a^{-1}$] (t) {};
    \draw[virtual,-<-] (t)--++(0.5,0) node[irrep] {$a$};
    \draw[virtual,->-] (t)--++(-0.5,0) node[irrep] {$a$};
  \end{tikzpicture}  \ .
\end{equation*}
The traces of $\psi_{a}(k)$ and $\psi_a(k^{-1})$ can be then evaluated using \cref{eq:tr_g,eq:tr_g_inv}:
\begin{gather*}
  \tau_a (k^{-1}) = \tr \{\psi_a(k^{-1})\} = \frac{1}{\xi_a} \cdot \tr\{\psi_a(g^{-1})\} = \frac{d_{a}}{\xi_a} \cdot \epsilon_{r_{a}} (1), \\
  \tau_a(k ) = \tr \{\psi_a (k)\} = \xi_a \cdot \tr\{\psi_a(g)\} = \xi_a d_{a} \cdot \epsilon_{r_{\bar{a}}} (1).
\end{gather*}
Sphericity of $k$ implies that $\xi_a^2=1$ for all $a\in\irr(\mathcal{A}^*)$, i.e.\ that $\xi_a = \pm 1$. Together with the fact that $\xi_{\bar{a}} = 1/\xi_a$, this implies that $\xi_a = \xi_{\bar{a}}$.

Let us define now $\Omega$ as in \cref{thm:main}, assuming that the pivotal element $k\in\mathcal{A}^*$ is  spherical. Then $\Omega$, using that $\xi_a = \xi_{\bar{a}}$, satisfies additionally that
\begin{equation*}
  (T\otimes k^{-1})\circ \Delta(\Omega) = S(\Omega) = \Omega.
\end{equation*}
We have thus seen that
\begin{theorem}\label{thm:main_spherical}
  Let $\mathcal{A}$ be a finite dimensional weak Hopf algebra over $\mathbb{C}$ such that $\mathcal{A}^*$ is semisimple and spherical with spherical element $k\in\mathcal{A}^*$ such that $\psi_a(k) = \xi_a \psi_{a}(g)$. Then the element $\Omega\in \mathcal{A}$ defined by
  \begin{equation*}
    \phi^{\otimes n} \circ \Delta^{n-1} (\Omega)  = \sum_{a\in \irr(\mathcal{A}^*)} \frac{1}{\sum_{x:\, l_x = l_a} d_x^2 }\cdot \frac{d_a}{\xi_a}  \cdot
    \tikzsetnextfilename{fd27623e-cad2-4d0a-835a-14cc67ee452e}
    \begin{tikzpicture}
      \draw[virtual] (1.5,0) rectangle (5.5,-0.7);
      \foreach \x/\t in {2/1,3/2,5/n}{
        \node[tensor] (t\x) at (\x,0) {};
        \draw[->-] (t\x) --++ (0,0.5);
        \draw[-<-] (t\x) --++ (0,-0.5);
      }
      \node[fill=white] (dots) at (4,0) {$\dots$};
      \draw[virtual,->-] (t3) -- (t2) node[midway,black,irrep,above] {$a$};
      \draw[virtual,->-] (dots) -- (t3) node[midway,black,irrep,above] {$a$};
      \draw[virtual,->-] (t5) -- (dots) node[midway,black,irrep,above] {$a$};
      \draw[virtual,->-] (1.5,-0.7) -- (5.5,-0.7);
    \end{tikzpicture}\
  \end{equation*}
  is non-degenerate and cocommutative, it is a projector, and there exists a linear map $T:\mathcal{A}\to \mathcal{A}$ such that
  \begin{gather*}
    (1\otimes x) \cdot \Delta(\Omega) = (T(x)\otimes 1) \cdot \Delta(\Omega), \\
    \Delta\circ T = (T\otimes k^{-1}\otimes T) \circ \Delta_{\mathrm{op}}^2.
  \end{gather*}
  Moreover, $\Omega$ satisfies
  \begin{equation*}
    (T\otimes k^{-1})\circ \Delta(\Omega) = S(\Omega) = \Omega.
  \end{equation*}
\end{theorem}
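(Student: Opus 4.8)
The plan is to dispatch the two claimed equalities separately, since only the second requires sphericity. For the first equality, recall that at the close of \cref{sec:pivotal_WHA} we established the identity $S(x) = (T\otimes k^{-1})\circ\Delta(x)$, valid for every $x\in\mathcal{A}$ in a pivotal WHA. Specializing this to $x=\Omega$ gives $(T\otimes k^{-1})\circ\Delta(\Omega) = S(\Omega)$ with no further work; the entire content of the statement is therefore the single fixed-point equation $S(\Omega)=\Omega$.

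To attack $S(\Omega)=\Omega$, first I would rewrite $\Omega$ as an explicit combination of irrep characters. Because the MPO in \cref{thm:main_spherical} carries the periodic (trace-like) boundary in each sector $a$, it is exactly $\phi^{\otimes n}\circ\Delta^{n-1}(\tau_a)$, and injectivity of $\phi^{\otimes n}\circ\Delta^{n-1}$ (w.l.o.g.) yields
\begin{equation*}
  \Omega = \sum_{a\in\irr(\mathcal{A}^*)} c_a\,\tau_a, \qquad c_a = \frac{1}{\sum_{x:\,l_x=l_a} d_x^2}\cdot\frac{d_a}{\xi_a}.
\end{equation*}
Next I would compute the action of $S$ on a single character. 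Using $S(\tau_a)(f)=\tau_a(S(f))=\tr\psi_a(S(f))$ together with the definition $\overline{\psi_a}(f)=(\psi_a\circ S(f))^T$ and the fact that $\overline{\psi_a}$ lies in the class $\bar a$, the trace is invariant under transposition and equivalence, so $S(\tau_a)=\tau_{\bar a}$.

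With these two ingredients the problem collapses to a coefficient identity. Substituting and reindexing $a\mapsto\bar a$ (using $\bar{\bar a}=a$ from \cref{thm:N_symmetry_consequence}),
\begin{equation*}
  S(\Omega) = \sum_a c_a\,\tau_{\bar a} = \sum_b c_{\bar b}\,\tau_b,
\end{equation*}
so $S(\Omega)=\Omega$ is equivalent to $c_{\bar a}=c_a$ for all $a$. Here I would assemble three facts: $d_{\bar a}=d_a$ by the definition of $d_a$; $\xi_{\bar a}=\xi_a$, which is precisely the place where sphericity is used (it forces $\xi_a=\pm1$, and combined with $\xi_{\bar a}=1/\xi_a$ gives $\xi_{\bar a}=\xi_a$); and the denominator identity $\sum_{x:\,l_x=l_{\bar a}}d_x^2=\sum_{x:\,l_x=l_a}d_x^2$. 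The last of these I would obtain by chaining $l_{\bar a}=r_a$ from \cref{thm:N_symmetry_consequence} with the balancing equality $\sum_{x:\,l_x=l_a}d_x^2=\sum_{x:\,l_x=r_a}d_x^2$ from \cref{cor:T}. I expect this denominator identity to be the main obstacle: it is the only step that is not purely formal, and it is where the two previously proved structural results (the $N$-symmetry $l_{\bar a}=r_a$ and the square-norm balancing of \cref{cor:T}) must be combined. Once $c_{\bar a}=c_a$ is established, $S(\Omega)=\Omega$ follows immediately, and hence $(T\otimes k^{-1})\circ\Delta(\Omega)=S(\Omega)=\Omega$.
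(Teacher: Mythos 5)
Your proposal is correct and follows essentially the same route as the paper: the paper likewise obtains everything except the last display from \cref{thm:main} (sphericity being only a special case of pivotality), reads off $(T\otimes k^{-1})\circ\Delta(\Omega)=S(\Omega)$ from the identity $S(x)=(T\otimes k^{-1})\circ\Delta(x)$ of \cref{sec:pivotal_WHA}, and reduces $S(\Omega)=\Omega$ to the symmetry of the coefficients under $a\mapsto\bar a$, using $\xi_a=\xi_{\bar a}$ from sphericity together with $d_a=d_{\bar a}$. Your write-up merely makes explicit two steps the paper leaves implicit — $S(\tau_a)=\tau_{\bar a}$ and the denominator identity $\sum_{x:\,l_x=l_{\bar a}}d_x^2=\sum_{x:\,l_x=l_a}d_x^2$ via $l_{\bar a}=r_a$ (\cref{thm:N_symmetry_consequence}) and the balancing equality of \cref{cor:T} — and both are justified correctly.
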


\subsection{\texorpdfstring{$C^*$}{C*}-weak Hopf algebras} \label{sec:C_star_WHA}

In this section we define $C^*$-weak Hopf algebras. We show how the $*$-operation acts on the representing MPOs. We then show that a $C^*$-weak Hopf algebra is spherical, and thus \cref{thm:main_spherical} applies.

\begin{definition}[$C^*$-WHA]
  A finite dimensional pre-bialgebra $\mathcal{A}$ over $\mathbb{C}$ is a $*$-pre-bialgebra if there is an anti-linear map $*:\mathcal{A}\to\mathcal{A}$ such that it is an involution ($x^{**} = x$), anti-homomorphism ($(xy)^* = y^* x^*$) and cohomomorphism ($\Delta \circ * = (*\otimes *)\circ \Delta$). It is a $C^*$-pre-bialgebra, if it is a $*$-pre-bialgebra and $\mathcal{A}$ has a faithful $*$-representation. A pre-bialgebra $\mathcal{A}$ is a $*$-WHA if it is both a $*$-pre-bialgebra and a WHA, and it is a $C^*$-WHA if it is both a $C^*$-pre-bialgebra and a WHA.
\end{definition}

An equivalent characterization of a finite dimensional $C^*$-WHA is that it is a semisimple $*$-WHA such that it has a complete set of irreps $\phi_\alpha\; (\alpha\in \irr(\mathcal{A}))$ that are $*$-representations, i.e.\ $\phi_\alpha(x^*) = \phi_\alpha(x)^\dagger$. The dual of a $C^*$-WHA is also a $C^*$-WHA with $*$-operation defined by \cite{Bohm1999}
\begin{equation}\label{eq:dual_star}
  f^*(x) = \overline{f\circ *\circ S(x)},
\end{equation}
where $\overline{\lambda}$ denotes the complex conjugate of $\lambda$ for all $\lambda\in\mathbb{C}$. As it is a $C^*$-WHA, $\mathcal{A}^*$ also possesses a complete set of $*$-representations. In this section, $\psi_a\; (a\in \irr(\mathcal{A}^*))$ will always denote such a complete set of irreps.

Let us now observe that the $*$-operation of $\mathcal{A}$ brings the character of $\psi_a$ into the character of $\psi_{\bar{a}}$:
\begin{proposition}
  Let $\mathcal{A}$ be a $C^*$-WHA, and $\tau_a\in \mathcal{A}$ $(a\in \irr(\mathcal{A}^*))$ be the irrep characters of $\mathcal{A}^*$. Then $\tau_a^* = \tau_{\bar{a}}$ for all $a$.
\end{proposition}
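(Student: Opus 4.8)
The plan is to evaluate $\tau_a^{*}$ directly as a linear functional on $\mathcal{A}^{*}$ and then recognize the outcome as the character of the conjugate irrep. First I would record the $*$-operation of $\mathcal{A}$ itself in dual form. Since $\mathcal{A}^{**}=\mathcal{A}$ canonically and the $C^{*}$-structure is self-dual (the dual of a $C^{*}$-WHA is again a $C^{*}$-WHA with $*$-operation given by \eqref{eq:dual_star}), applying \eqref{eq:dual_star} to $\mathcal{A}^{*}$ gives, for every $y\in\mathcal{A}$ and $g\in\mathcal{A}^{*}$,
\[
  y^{*}(g) = \overline{\,y\bigl(S(g)^{*}\bigr)\,},
\]
where now $S$ and $*$ denote the antipode and the $*$-operation of $\mathcal{A}^{*}$.

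Next I would specialize this to $y=\tau_a$ and use that $\tau_a(h)=\tr\psi_a(h)$ for all $h\in\mathcal{A}^{*}$, giving $\tau_a^{*}(g)=\overline{\tr\psi_a\bigl(S(g)^{*}\bigr)}$. Here the defining feature of a $C^{*}$-WHA enters: the irreps $\psi_a$ may be chosen to be $*$-representations, so $\psi_a\bigl(S(g)^{*}\bigr)=\psi_a\bigl(S(g)\bigr)^{\dagger}$. Combining this with $\tr(M^{\dagger})=\overline{\tr M}$ yields
\[
  \tau_a^{*}(g) = \overline{\,\overline{\tr\psi_a\bigl(S(g)\bigr)}\,} = \tr\psi_a\bigl(S(g)\bigr).
\]
To finish I would identify the right-hand side with $\tau_{\bar a}(g)$: by the definition of $\bar a$ the representation $\bar\psi_a(g)=(\psi_a\circ S(g))^{T}$ lies in the class $\bar a$, hence is equivalent to $\psi_{\bar a}$ and has the same character; since transposition preserves the trace, $\tr\psi_a\bigl(S(g)\bigr)=\tr\bar\psi_a(g)=\tau_{\bar a}(g)$. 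As $g\in\mathcal{A}^{*}$ is arbitrary and $\mathcal{A}=(\mathcal{A}^{*})^{*}$ separates points of $\mathcal{A}^{*}$, this gives $\tau_a^{*}=\tau_{\bar a}$.

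The only genuinely delicate point is the first step, namely verifying that dualizing \eqref{eq:dual_star} a second time really returns the original $*$-operation of $\mathcal{A}$ rather than some twisted version. Tracking the definitions, one finds that the involution induced on $\mathcal{A}^{**}$ coincides with $*$ on $\mathcal{A}$ precisely when $S\circ *\circ S=*$, equivalently $*\circ S\circ *=S^{-1}$. This is a standard compatibility in ($C^{*}$-)weak Hopf algebras and can either be quoted from \cite{Bohm1999} or checked directly from the antipode and $*$ axioms. I expect this bookkeeping to be the main obstacle; once the correct formula for $\tau_a^{*}$ is in place, the trace computation and the passage to the character of $\bar\psi_a$ are immediate.
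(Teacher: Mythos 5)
Your proof is correct and takes essentially the same route as the paper: both arguments rest on \cref{eq:dual_star}, the fact that the irreps $\psi_a$ can be chosen as $*$-representations (so that $\tr(M^\dagger)=\overline{\tr M}$ cancels the conjugation), and the identification of $f\mapsto\tr\psi_a(S(f))$ with the character $\tau_{\bar a}$ via $\bar\psi_a\simeq\psi_{\bar a}$. The only difference is bookkeeping: the paper substitutes $x=S^{-1}(\tau_a)$ directly into \cref{eq:dual_star} and thereby sidesteps the double-dualization step you flag, whereas your version needs the standard weak Hopf $*$-algebra compatibility $S\circ *\circ S=*$ (equivalently $*\circ S\circ *=S^{-1}$), which you correctly identify and which indeed holds by \cite{Bohm1999}.
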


\begin{proof}
  Let us evaluate $\tau_a^*$ on a linear functional $f$:
  \begin{equation*}
    \tau_a^*(f) = f(\tau_a^*) = \overline{f^* (S^{-1}(\tau_a))} = \overline{f^*(\tau_{\bar{a}})} =  \overline{\tau_{\bar{a}} (f^*)},
  \end{equation*}
  where in the second equation we have used \cref{eq:dual_star} with $x=S^{-1}(\tau_a)$. Finally note that, as $\psi_a$ is a $*$-representation,
  \begin{equation*}
    \tau_{\bar{a}}(f^*) = \tr \psi_{\bar{a}}(f^*) = \tr \psi_{\bar{a}}  (f)^\dagger = \overline{\tr \psi_{\bar{a}}(f)} =  \overline{\tau_{\bar{a}}(f)},
  \end{equation*}
  and thus for all $f\in\mathcal{A}^*$, $\tau_a^*(f) = \tau_{\bar{a}}(f)$.
\end{proof}

As the MPO representation of $\tau_a$ is the TI MPO defined by the injective block $a$ of the MPO tensor representing the $C^*$-WHA, this proposition states that the injective MPO blocks are permuted under the $*$-operation:
\begin{equation*}
  \left(\
\ .
\end{equation*}
Let us finally note that the gray and the black tensors can be related to each other using \cref{eq:dual_star}. First note that there are two ways of expressing $y^*$:
\begin{equation*}
  \sum_x x \cdot \delta_x(y^*) = y^* = \sum_x x^* \cdot \overline{\delta_x(y)} = \sum_x x^* \cdot \delta_x^*\circ S^{-1}(y^*),
\end{equation*}
where in the last equation we have used the fact that   the conjugation can be expressed with the $*$ operation of $\mathcal{A}$ and $\mathcal{A}^*$ as follows:
\begin{equation*}
  \overline{f(x)} = \overline{f\circ * \circ S \circ S^{-1} \circ * (x)} = f^*\circ S^{-1}(x^*).
\end{equation*}
As this equation holds for all $y$, we have obtained that
\begin{equation*}
  \sum_ x x^* \otimes \delta_x^* = \sum_x x \otimes S(\delta_x),
\end{equation*}
or graphically,
\begin{equation*}
  \begin{tikzpicture}[baseline=-1mm]
    \node[tensor,gray!50!white] (t) at (0,0) {};
    \draw[virtual,->-] (t)--++(0.7,0)  node[midway,irrep,black] {$a$};
    \draw[virtual,-<-] (t)--++(-0.7,0) node[midway,irrep,black] {$a$};
    \draw[->-] (t) --++ (0,0.5);
    \draw[-<-] (t) --++ (0,-0.5);
  \end{tikzpicture} \ =
  \begin{tikzpicture}[baseline=-1mm]
    \node[tensor] (t) at (0,0) {};
    \node[tensor,label=below:$Z_{a}^{-1}$] (zinv) at (0.6,0) {};
    \node[tensor,label=below:$Z_{a}$] (z) at (-0.6,0) {};
    \draw[virtual,->-] (zinv)--++(0.6,0) node[midway,irrep,black] {$a$};
    \draw[virtual,->-] (zinv)--(t) node[midway,irrep,black] {$\bar{a}$};
    \draw[virtual,->-] (t)--(z) node[midway,irrep,black] {$\bar{a}$};
    \draw[virtual,-<-] (z)--++(-0.6,0) node[midway,irrep,black] {$a$};
    \draw[->-] (t) --++ (0,0.5);
    \draw[-<-] (t) --++ (0,-0.5);
  \end{tikzpicture} \ .
\end{equation*}
Note finally that then $b(x)$ and $b(x^*)$ is also related:
\begin{equation*}
  \begin{tikzpicture}[baseline=-1mm]
    \node[tensor,midway,label=below:$b_{\bar{a}}(x^*)$] (t) {};
    \draw[virtual,-<-] (t)--++(0.7,0) node[midway,irrep,black] {$\bar{a}$};
    \draw[virtual,->-] (t)--++(-0.7,0) node[midway,irrep,black] {$\bar{a}$};
  \end{tikzpicture} \ = \
  \begin{tikzpicture}[baseline=-1mm]
    \node[tensor,label=below:$b_{a}(x)^\dagger$] (t) at (0,0) {};
    \node[tensor,label=below:$Z_{a}$] (z) at (0.9,0) {};
    \node[tensor,label=below:$Z_{a}^{-1}$] (zinv) at (-0.9,0) {};
    \draw[virtual,-<-] (z)--++(0.7,0) node[midway,irrep,black] {$\bar{a}$};
    \draw[virtual,-<-] (z)--(t) node[midway,irrep,black] {$a$};
    \draw[virtual,-<-] (t)--(zinv) node[midway,irrep,black] {$a$};
    \draw[virtual,->-] (zinv)--++(-0.7,0) node[midway,irrep,black] {$\bar{a}$};
  \end{tikzpicture} \ .
\end{equation*}

Let us now prove the well-known result that a $C^*$-WHA is pivotal, and in fact, spherical \cite{Bohm1999,Etingof2002}:

\begin{proposition}
  Let $\mathcal{A}$ be a $C^*$-WHA. Then $\mathcal{A}^*$ is also a $C^*$-WHA, it is spherical and the linear functional $g\in\mathcal{A}^*$ defined in \cref{eq:x_def} is a spherical element of it that is also positive.
\end{proposition}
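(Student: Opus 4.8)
The plan is to verify the three assertions in turn, with essentially all the work concentrated in showing that the functional $g$ of \cref{eq:x_def} is a \emph{positive} \emph{group-like} element of $\mathcal{A}^*$ implementing $S^2$ and satisfying the spherical balancing condition. The first assertion—that $\mathcal{A}^*$ is again a $C^*$-WHA with $*$-operation \cref{eq:dual_star}—is the known result of \cite{Bohm1999}, so I would simply invoke it and fix once and for all a complete set of $*$-representations $\psi_a$ ($a\in\irr(\mathcal{A}^*)$). I would record at the outset the identity $S(f)^*=S^{-1}(f^*)$ (equivalently, $*\circ S$ is an involution), which is immediate from \cref{eq:dual_star}. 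Everything then reduces to exhibiting $g$ as a positive pivotal element of $\mathcal{A}^*$ and checking the two sphericity conditions for it; note in particular that proving $g$ positive forces the scalar $\xi_a$ of \cref{thm:main} to equal $1$, so that $g$ itself (and not merely $\xi g$) is the spherical element.

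The technical heart, and the step I expect to be hardest, is the positivity of $g$, i.e.\ that $\psi_a(g)>0$ for every $a$. A purely formal computation already constrains its shape: combining $S(f)^*=S^{-1}(f^*)$ with the fact that conjugation by $g$ realizes $S^2$ (\cref{eq:S2}) gives $\psi_a(g)^\dagger=\lambda_a\psi_a(g)$ with $|\lambda_a|=1$, so $\psi_a(g)$ is Hermitian up to a phase. To pin down the phase and the sign I would use the $C^*$ structure to normalize the intertwiners $Z_a\colon W_{\bar a}\to W_a^*$ of \cref{eq:SZ} as \emph{standard solutions of the conjugate equations}, i.e.\ so that $Z_{\bar a}$ is a positive multiple of $(Z_a^{-1})^\dagger$. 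With this gauge the defining formula \cref{eq:x_def} exhibits $\psi_a(g)$ in the manifestly positive form $\mathrm{const}\cdot (Z_a^{-1})^\dagger Z_a^{-1}$, with the constant $d_a^2/w_{\bar a}=w_a$ positive; this both proves $\psi_a(g)>0$ and, together with \cref{cor:T}, lets me take $d_a=d_{\bar a}>0$. The delicate point is verifying that such a self-consistent choice of the $Z_a$ exists and is compatible with the normalization defining $w_c$ in \cref{eq:w_a}; this is where the $C^*$ (as opposed to merely pivotal) hypothesis is genuinely used.

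Granting positivity, group-likeness of $g$ follows cleanly. By \cref{lem:foldback} the element $g^2$ is group-like, which by the graphical characterization of group-like elements means $\psi_c(g^2)\,W_{ab}^{c\mu}=W_{ab}^{c\mu}\,(\psi_a(g^2)\otimes\psi_b(g^2))$ on every fusion tensor (and the analogous relation for the $V_{ab}^{c\mu}$). Since $g>0$ all three operators are positive, so I apply the continuous functional calculus $x\mapsto\sqrt{x}$ to this intertwining relation: using that $AW=WB$ with $A,B\ge 0$ forces $f(A)W=Wf(B)$ for every continuous $f$ (approximate $\sqrt{\,\cdot\,}$ uniformly by polynomials on the spectra), and that the positive square root of $\psi_a(g^2)\otimes\psi_b(g^2)$ is $\psi_a(g)\otimes\psi_b(g)$, I obtain $\psi_c(g)\,W_{ab}^{c\mu}=W_{ab}^{c\mu}\,(\psi_a(g)\otimes\psi_b(g))$. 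Thus $g$ is group-like, and since it implements $S^2$ by \cref{eq:S2} it is a pivotal element; in particular $\mathcal{A}^*$ is pivotal.

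Finally I would check the two conditions in the definition of a spherical WHA, applied to $\mathcal{A}^*$ with pivotal element $g$; here the characters are $t_a=\tau_a$ with $t_a(g)=\tr\psi_a(g)$, the vacuum is $E\subseteq\irr(\mathcal{A}^*)$, and $\epsilon(1_\delta)=\epsilon_\delta(1)$. For the balancing condition, the attached vacuum irreps are $\lambda_a=l_a$ and $\rho_a=r_a$ (since $\tau_{l_a}\tau_a=\tau_a\neq 0$ and $\tau_a\tau_{r_a}=\tau_a\neq 0$ by \cref{eq:N_unit}), while \cref{eq:tr_g,eq:tr_g_inv} give $\tr\psi_a(g)=d_a\,\epsilon_{r_{\bar a}}(1)$ and $\tr\psi_a(g^{-1})=d_a\,\epsilon_{r_a}(1)$. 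Using $l_a=r_{\bar a}$ from \cref{thm:N_symmetry_consequence}, both quotients $\tr\psi_a(g)/\epsilon_{l_a}(1)$ and $\tr\psi_a(g^{-1})/\epsilon_{r_a}(1)$ equal $d_a$, so the balancing condition holds identically. The first (non-degeneracy) condition, $\epsilon_\delta(1)\neq 0$ for every vacuum $\delta\in E$, then comes from positivity: specializing $\tr\psi_\delta(g)=d_\delta\,\epsilon_{r_{\bar\delta}}(1)$ to $\delta\in E$, where $\bar\delta=\delta$ and $r_\delta=l_\delta=\delta$ by \cref{eq:unit_unit_fusion} and \cref{thm:N_symmetry_consequence}, yields $\epsilon_\delta(1)=\tr\psi_\delta(g)/d_\delta>0$, as $g>0$ makes $\psi_\delta(g)$ positive and invertible and $d_\delta>0$. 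Hence $g$ is a positive spherical element of $\mathcal{A}^*$, which is the assertion.
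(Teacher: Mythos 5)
Your proposal is correct, and for the hard step it lands on the same mechanism as the paper: positivity of $g$ is obtained by gauging the antipode intertwiners so that $(Z_{\bar a}^{-1})^\dagger=Z_a$, after which \cref{eq:x_def} displays $\psi_a(g)$ as a positive multiple of $(Z_a^{-1})^\dagger Z_a^{-1}$ and $w_a>0$ follows from \cref{eq:w_a}. The step you rightly flag as delicate is precisely where the paper works: it first uses $\Delta=(*\otimes *)\circ\Delta\circ *$ to show the daggers of the fusion tensors differ from them by the positive matrices $A_{ab}^c=(o_{ab}^c)^\dagger o_{ab}^c$, so the fusion tensors may be regauged into hermitian-conjugate pairs (\cref{eq:fusion_dagger}); then $1^*=1=S(1)$ gives the adjointness of the unit vectors (\cref{eq:unit_dagger}); taking the dagger of \cref{eq:w_a} in this gauge yields $(Z_{\bar a}^{-1})^\dagger=\lambda_a Z_a$ with $\lambda_a=\overline{\lambda_{\bar a}}$, and the self-conjugate case $a=\bar a$ is settled by $\id_a=\lambda_a Z_aZ_a^\dagger$, which forces $\lambda_a>0$ so all phases can be absorbed — your appeal to standard solutions of the conjugate equations is a legitimate substitute, since the $Z_a$ are unique up to the scalars you then rescale. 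Where you genuinely diverge is group-likeness: you take positive square roots of the intertwining relations expressing that $g^2$ is group-like (\cref{lem:foldback}), via $AW=WB\Rightarrow f(A)W=Wf(B)$ for positive $A,B$; the paper instead exhibits $\left(B_{ab}^c\right)_{\mu\nu}\cdot g_c$ as a manifestly positive contraction, so $B_{ab}^c\geq 0$ together with $(B_{ab}^c)^2=\id$ forces $B_{ab}^c=\id$ in \cref{eq:B}. Both routes need $g>0$ as input; the paper's is a purely finite-dimensional matrix argument, while yours trades that for functional calculus and applies verbatim to any positive implementer of $S^2$ whose square is group-like. Your sphericity check is in fact more complete than the paper's terse one: you verify the vacuum condition $\epsilon_\delta(1)=\tr\psi_\delta(g)/d_\delta>0$ explicitly from positivity, and you check the balancing condition as a ratio identity, both quotients equalling $d_a$ via \cref{eq:tr_g,eq:tr_g_inv}, \cref{eq:N_unit} and $l_a=r_{\bar a}$ from \cref{thm:N_symmetry_consequence}, rather than asserting equality of the traces themselves.
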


\begin{proof}
  As mentioned above, we can fix the irrep representatives $\psi_a$ of $\mathcal{A}^*$ to be $*$ representations. This implies, as the $*$ operation of $\mathcal{A}^*$ is a cohomomorphism, $\Delta = (* \otimes *) \circ \Delta\circ *$,  that
    \begin{equation*}
      \sum_{\mu} \left(
      \tikzsetnextfilename{25435fbd-fd3e-4009-b5ae-f552e3eae421}
 \ =
      \sum_\kappa \overline{\left(o_{ab}^c\right)_{\kappa\mu}} \left(o_{ab}^c\right)_{\kappa\nu} =
      \sum_\kappa \left(\left(o_{ab}^c\right)^\dagger\right)_{\mu\kappa} \left(o_{ab}^c\right)_{\kappa\nu} = \left(\left(o_{ab}^c\right)^\dagger o_{ab}^c\right)_{\mu\nu}.
    \end{equation*}
    Therefore changing the fusion tensors to
    \begin{equation*}
      \hat{V}_{ab}^{c\mu} = \sum_{\kappa} \left(\left(o_{ab}^c\right)^{-1}\right)_{\mu\nu} V_{ab}^{c\nu} \quad \text{and} \quad \left(\hat{W}_{ab}^c\right)_{\mu\nu} = \sum_{\kappa} \left(o_{ab}^c\right)_{\mu\kappa} \left(V_{ab}^c\right)_{\kappa\nu},
    \end{equation*}
    we obtain $\left(\hat{V}_{ab}^{c\mu}\right)^\dagger = \hat{W}_{ab}^{c\mu}$. That is, w.l.o.g.\ we can assume that the fusion tensors are the hermitian conjugates of each other,
    \begin{equation}\label{eq:fusion_dagger}
      \left(
      \begin{tikzpicture}[baseline=-1mm]
        \node[fusion tensor] (v) at (0,0) {};
        \draw[virtual,->-] let \p1=(v.west) in (\x1,0.25)--++(-0.5,0) node[midway,black,irrep] {$a$};
        \draw[virtual,->-] let \p1=(v.west) in (\x1,-0.25)--++(-0.5,0) node[midway,black,irrep] {$b$};
        \draw[virtual,-<-] let \p1=(v.east) in (\x1,0)--++(0.5,0) node[midway,black,irrep] {$c$};
        \node[anchor=south,inner sep=2pt] at (v.north) {$\mu$};
      \end{tikzpicture}\right)^{\dagger} =
      \begin{tikzpicture}[baseline=-1mm]
        \node[fusion tensor] (w) at (1.5,0) {};
        \draw[virtual,-<-] let \p1=(w.east) in (\x1,0.25)--++(0.5,0) node[midway,black,irrep] {$a$};
        \draw[virtual,-<-=0.5] let \p1=(w.east) in (\x1,-0.25)--++(0.5,0) node[midway,black,irrep] {$b$};
        \draw[virtual,->-] let \p1=(w.west) in (\x1,0)--++(-0.5,0) node[midway,black,irrep] {$c$};
        \node[anchor=south,inner sep=2pt] at (w.north) {$\mu$};
      \end{tikzpicture} \quad \text{and} \quad
      \left(
      \begin{tikzpicture}[baseline=-1mm]
        \node[fusion tensor] (w) at (1.5,0) {};
        \draw[virtual,-<-] let \p1=(w.east) in (\x1,0.25)--++(0.5,0) node[midway,black,irrep] {$a$};
        \draw[virtual,-<-=0.5] let \p1=(w.east) in (\x1,-0.25)--++(0.5,0) node[midway,black,irrep] {$b$};
        \draw[virtual,->-] let \p1=(w.west) in (\x1,0)--++(-0.5,0) node[midway,black,irrep] {$c$};
        \node[anchor=south,inner sep=2pt] at (w.north) {$\mu$};
      \end{tikzpicture}\right)^{\dagger} =
      \begin{tikzpicture}[baseline=-1mm]
        \node[fusion tensor] (v) at (0,0) {};
        \draw[virtual,->-] let \p1=(v.west) in (\x1,0.25)--++(-0.5,0) node[midway,black,irrep] {$a$};
        \draw[virtual,->-] let \p1=(v.west) in (\x1,-0.25)--++(-0.5,0) node[midway,black,irrep] {$b$};
        \draw[virtual,-<-] let \p1=(v.east) in (\x1,0)--++(0.5,0) node[midway,black,irrep] {$c$};
        \node[anchor=south,inner sep=2pt] at (v.north) {$\mu$};
      \end{tikzpicture} \ .
    \end{equation}

    Let us now investigate what restrictions the $C^*$ structure of $\mathcal{A}^*$ imposes on the representation of the unit of the algebra. The unit $1\in\mathcal{A}$ is invariant under both the $*$-operation as well as under the action of the antipode, therefore
    \begin{equation*}
      1(f) = 1^*(f) = \overline{1(S(f^*))} = \overline{1(f^*)}.
    \end{equation*}
    Let us write  $\ket{w_e} = \tikz{\draw[virtual,->-] (0,0) node[t,black] {} --++(-0.5,0) node[midway,black,irrep] {$e$};} $ and $ \bra{v_e} = \tikz{\draw[virtual,-<-] (0,0) node[t,black] {} --++(0.5,0) node[midway,black,irrep] {$e$};}$, then this equation reads
    \begin{equation*}
      \sum_{e\in E} \bra{v_e} \psi_e (f) \ket{w_e} =
      \sum_{e\in E} \overline{\bra{v_e} \psi_e (f^*) \ket{w_e}} =
      \sum_{e\in E} \bra{w_e} \psi_e (f^*)^\dagger \ket{v_e} =
      \sum_{e\in E} \bra{w_e} \psi_e (f) \ket{v_e},
    \end{equation*}
    where in the last equation we have used that $\psi_e$ is a $*$-representation. As this equation holds for all $f$, we have obtained that
    \begin{equation}\label{eq:unit_dagger}
    \left(\tikz[baseline=-1mm]{\draw[virtual,->-] (0,0) node[t,black] {} --++(-0.5,0) node[midway,black,irrep] {$e$};}\right)^\dagger = \tikz[baseline=-1mm]{\draw[virtual,-<-] (0,0) node[t,black] {} --++(0.5,0) node[midway,black,irrep] {$e$};} \ .
    \end{equation}
  Using that the fusion tensors and the vectors representing the unit are self-adjoint, we can now prove that the matrices $Z_a$ can be chosen such that $\left(Z_{\bar{a}}^{-1}\right)^\dagger = Z_a$. To see that, let us first take  the dagger of \cref{eq:w_a} using \cref{eq:fusion_dagger,eq:unit_dagger}:
  \begin{equation*}
  \  .
  \end{equation*}
  This implies that there is $\lambda_a\in\mathbb{C}$ such that $\left(Z_{\bar{a}}^{-1}\right)^\dagger = \lambda_a Z_a$ and $Z_a^\dagger = \bar{w}_a/(w_a \lambda_a) \cdot Z_{\bar{a}}^{-1}$. Changing $a$ to $\bar{a}$ in the first equation, we obtain that $\left(Z_{a}^{-1}\right)^\dagger = \lambda_{\bar{a}} Z_{\bar{a}}$, or, after rearranging, $\left(Z_{\bar{a}}^{-1}\right)^\dagger = \overline{\lambda_{\bar{a}}} Z_a$, and thus $\lambda_a = \overline{\lambda_{\bar{a}}}$. This implies that if $a$ is such that $\bar{a}\neq a$, then $\lambda_a = \mu_a \overline{\mu_{\bar{a}}}$  can be solved, e.g. by $\mu_a = \lambda_a$ and $\mu_{\bar{a}}=1$. With this choice, $\lambda_{\bar{a}} = \mu_{\bar{a}} \overline{\mu_{a}}$ also holds.  If $\bar{a} = a$, one has to be more careful. In this case, $\lambda_a = \overline{\lambda_{\bar{a}}}$ implies that $\lambda_a$ is real. To solve $\lambda_a = \mu_a \overline{\mu_{\bar{a}}} = |\mu_a|^2$, we have to show that $\lambda_a$ is not only real, but also positive. To show that, note that $\left(Z_{a}^{-1}\right)^\dagger = \lambda_a Z_a$, or $\id_a = \lambda_a Z_a Z_a^\dagger$. As both $\id_a$, and $Z_a Z_a^\dagger$ are positive, this implies that $\lambda_a$ is positive, and thus $\lambda_a = \mu_a \overline{\mu_{\bar{a}}}$ can be solved as well.  Let $\hat{Z}_a = \mu_a Z_a$, then $\left(\hat{Z}_{\bar{a}}^{-1}\right)^\dagger = \hat{Z}_a$. This means that we can assume w.l.o.g.\ that $\left(Z_{\bar{a}}^{-1}\right)^\dagger = Z_a$.

  Let us show now that the linear functional $g\in\mathcal{A}^*$ defined in \cref{eq:x_def} is a pivotal element that is positive and spherical. First, the equation $\left(Z_{\bar{a}}^{-1}\right)^\dagger = Z_a$ implies that both $w_a$ and $\psi_a(g)$ are positive, and thus $g$ is positive as well. Let us show now that the matrix $B_{ab}^c$ defined in \cref{cor:T} is positive as well. This matrix can be obtained by
  \begin{equation*}
    \left(B_{ab}^c\right)_{\mu\nu} \cdot g_c =
    \begin{tikzpicture}
      \node[fusion tensor] (v) at (1,0) {};
      \node[fusion tensor] (w) at (0,0) {};
      \draw[virtual,-<-] (v.east) --++(0.5,0) node[midway,black,irrep] {$c$};
      \draw[virtual,->-] (w.west) --++(-0.5,0) node[midway,black,irrep] {$c$};
      \draw[virtual,-<-] let \p1=(w.east) in (\x1,0.25) --++(0.5,0) node[midway,black,irrep] {$a$} coordinate (c);
      \node[tensor,label=above:$g$] (Lambda) at (c) {};
      \draw[virtual,->-] let \p1=(v.west) in (\x1,0.25) -- (Lambda)  node[midway,black,irrep] {$a$};
      \draw[virtual,-<-] let \p1=(w.east) in (\x1,-0.25) --++(0.5,0) node[midway,black,irrep] {$b$} coordinate (c);
      \node[tensor,label=below:$g$] (Lambda) at (c) {};
      \draw[virtual,->-] let \p1=(v.west) in (\x1,-0.25) -- (Lambda) node[midway,black,irrep] {$b$};
      \node[anchor=north,inner sep=3pt,font=\tiny] at (v.south) {$\mu$};
      \node[anchor=north,inner sep=3pt,font=\tiny] at (w.south) {$\nu$};
    \end{tikzpicture},
  \end{equation*}
  i.e.\ it is a positive matrix. As $B_{ab}^c$ is positive and it squares to the identity, it is the identity, and thus $g$ is a (positive) pivotal element. Note that it is spherical as well, because
  \begin{equation*}
    \tr \psi_a(g) = d_a \cdot \epsilon_{r_a}(1) =  d_a \cdot \epsilon_{r_{\bar{a}}}(1) =  \tr\psi_a(g^{-1}).
  \end{equation*}
  We have thus derived the well-known result that in a $C^*$-WHA there is a positive spherical element.
\end{proof}

Let us now use this positive pivotal element in the  construction in \cref{thm:main_spherical}. With this choice, the resulting element $\Omega$ has MPO representation
\begin{equation*}
  \phi^{\otimes n} \circ \Delta^{n-1} (\Omega)  = \sum_{a\in \irr(\mathcal{A}^*)} \frac{1}{\sum_{x:\, l_x = l_a} d_x^2 }\cdot \frac{d_a}{\xi_a}  \cdot d_a  \cdot
  	  \begin{tikzpicture}
  	    \draw[virtual] (1.5,0) rectangle (5.5,-0.7);
  	    \foreach \x/\t in {2/1,3/2,5/n}{
  	      \node[tensor] (t\x) at (\x,0) {};
  	      \draw[->-] (t\x) --++ (0,0.5);
  	      \draw[-<-] (t\x) --++ (0,-0.5);
  	    }
  	    \node[fill=white] (dots) at (4,0) {$\dots$};
        \draw[virtual,->-] (t3) -- (t2) node[midway,black,irrep,above] {$a$};
        \draw[virtual,->-] (dots) -- (t3) node[midway,black,irrep,above] {$a$};
        \draw[virtual,->-] (t5) -- (dots) node[midway,black,irrep,above] {$a$};
        \draw[virtual,->-] (1.5,-0.7) -- (5.5,-0.7);
  	  \end{tikzpicture}\ ,
\end{equation*}
where all $d_a$ are positive and $d_a = d_{\bar{a}}$. As  $d_a$ are positive,  $\Omega$ is a positive linear functional on $\mathcal{A}^*$: for any positive $f$ the representing matrix $\psi_a(f)$ is positive, and thus $\Omega(f) = \sum_a d_a \cdot \tr \psi_a(f)\geq 0$.  Similarly, $d_a = d_{\bar{a}}$ implies that $\Omega^* = \Omega$, and therefore, as $\Omega$ is also a projector, it is a positive element of $\mathcal{A}$.

We have thus seen that the following theorem holds:

\begin{theorem}\label{thm:main_c_star}
  Let $\mathcal{A}$ be a finite dimensional $C^*$-weak Hopf algebra over $\mathbb{C}$. Then $\mathcal{A}^*$ is semisimple and spherical with positive spherical element $g\in\mathcal{A}^*$ defined in \cref{eq:x_def}. Then the element $\Omega\in \mathcal{A}$ defined by
  \begin{equation*}
    \phi^{\otimes n} \circ \Delta^{n-1} (\Omega)  = \sum_{a\in \irr(\mathcal{A}^*)} \frac{1}{\sum_{x:\, l_x = l_a} d_x^2 }\cdot d_a  \cdot
    \tikzsetnextfilename{fd27623e-cad2-4d0a-835a-14cc67ee452e}
    \begin{tikzpicture}
      \draw[virtual] (1.5,0) rectangle (5.5,-0.7);
      \foreach \x/\t in {2/1,3/2,5/n}{
        \node[tensor] (t\x) at (\x,0) {};
        \draw[->-] (t\x) --++ (0,0.5);
        \draw[-<-] (t\x) --++ (0,-0.5);
      }
      \node[fill=white] (dots) at (4,0) {$\dots$};
      \draw[virtual,->-] (t3) -- (t2) node[midway,black,irrep,above] {$a$};
      \draw[virtual,->-] (dots) -- (t3) node[midway,black,irrep,above] {$a$};
      \draw[virtual,->-] (t5) -- (dots) node[midway,black,irrep,above] {$a$};
      \draw[virtual,->-] (1.5,-0.7) -- (5.5,-0.7);
    \end{tikzpicture}\
  \end{equation*}
  is a positive non-degenerate trace-like linear functional on $\mathcal{A}^*$, it is an orthogonal projector, and there exists a linear map $T:\mathcal{A}\to \mathcal{A}$ such that
  \begin{gather*}
    (1\otimes x) \cdot \Delta(\Omega) = (T(x)\otimes 1) \cdot \Delta(\Omega), \\
    \Delta\circ T = (T\otimes g\otimes T) \circ \Delta_{\mathrm{op}}^2.
  \end{gather*}
  Moreover, $\Omega$ satisfies
  \begin{equation*}
    (T\otimes g^{-1})\circ \Delta(\Omega) = S(\Omega) = \Omega.
  \end{equation*}
\end{theorem}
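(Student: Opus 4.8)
The plan is to obtain \cref{thm:main_c_star} as a direct specialization of the results already established for spherical weak Hopf algebras, supplemented by the positivity afforded by the $C^*$ structure. First I would invoke the proposition immediately preceding the theorem, which shows that for a $C^*$-WHA $\mathcal{A}$ the dual $\mathcal{A}^*$ is again a semisimple $C^*$-WHA, that it is spherical, and that the functional $g\in\mathcal{A}^*$ from \cref{eq:x_def} is a positive spherical element. Concretely this fixes the gauge $(Z_{\bar a}^{-1})^\dagger = Z_a$, forces $\psi_a(g)$ to be positive, and yields $d_a^2 = w_a w_{\bar a} > 0$ together with $d_a = d_{\bar a}$; taking $g$ itself as the spherical element means $\xi_a = 1$ for every $a\in\irr(\mathcal{A}^*)$, which is why the boundary weights in the MPO for $\Omega$ reduce to $d_a / \sum_{x:\, l_x = l_a} d_x^2$.

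With $g$ identified as a positive spherical element, I would then apply \cref{thm:main_spherical} with $k = g$. This is the workhorse step: it delivers at once that $\Omega$ is non-degenerate and cocommutative, that it is a projector, that the pulling-through map $T$ (explicitly $T = (S\otimes g)\circ\Delta$) exists and satisfies $(1\otimes x)\cdot\Delta(\Omega) = (T(x)\otimes 1)\cdot\Delta(\Omega)$ together with the coproduct identity for $T$, and that $(T\otimes g^{-1})\circ\Delta(\Omega) = S(\Omega) = \Omega$. All of these are inherited upon the substitution $k = g$, so no new computation is needed for the algebraic part of the statement.

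The genuinely new content is the $C^*$ refinement, and here I would argue as in the paragraph preceding the theorem. Since every $d_a$ is strictly positive, for any positive $f\in\mathcal{A}^*$ each block $\psi_a(f)$ is positive semidefinite, whence $\Omega(f) = \sum_a d_a \tr\psi_a(f)\ge 0$; thus $\Omega$ is a positive, trace-like linear functional on $\mathcal{A}^*$. Next, $d_a = d_{\bar a}$ combined with the already-established behaviour of the injective MPO blocks under the $*$-operation, namely $\tau_a^* = \tau_{\bar a}$ (equivalently the block-permuting identity derived above), gives $\Omega^* = \Omega$; since $\Omega$ is moreover a projector, self-adjointness upgrades it to an orthogonal projector and to a positive element of $\mathcal{A}$.

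I expect the only real obstacle to sit upstream, in the proposition that $g$ is positive and spherical --- in particular the gauge-fixing $(Z_{\bar a}^{-1})^\dagger = Z_a$ and the resulting positivity of $\psi_a(g)$ and of $B_{ab}^c$, which is what makes $\xi_a = 1$ and $d_a > 0$ legitimate. Granting that, the subtlety internal to this theorem is merely bookkeeping: one must keep the two distinct meanings of ``positive'' apart --- positivity of $\Omega$ as a functional on $\mathcal{A}^*$ versus positivity of $\Omega$ as an element of $\mathcal{A}$ --- and check that $\Omega^* = \Omega$ is precisely what reconciles them with the orthogonal-projector claim.
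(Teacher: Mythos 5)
Your proposal is correct and follows essentially the same route as the paper: it invokes the preceding proposition (that $\mathcal{A}^*$ is a spherical $C^*$-WHA with positive spherical element $g$, so $\xi_a=1$, $d_a>0$, $d_a=d_{\bar a}$), specializes \cref{thm:main_spherical} with $k=g$ for all the algebraic claims including $T=(S\otimes g)\circ\Delta$, and then obtains positivity of $\Omega$ as a functional from $\Omega(f)=\sum_a d_a\tr\psi_a(f)\ge 0$ and self-adjointness $\Omega^*=\Omega$ from $d_a=d_{\bar a}$, upgrading the projector to an orthogonal one. This is precisely the argument given in the paragraphs preceding the theorem, so no gap remains.
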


Let us note that the (positive) numbers $d_a$ satisfy (see \cref{cor:T}) $\sum_b N_{ab}^c d_b = \delta_{l_al_c} d_a \cdot d_c$. In particular, if $E$ consists of a single irrep $e$, then $\delta_{l_al_c}$ is never zero and thus $d_b$ defines a positive eigenvector for the matrix $N_a$ defined by $(N_a)_b^c = N_{ab}^c$. As $N_{a}$ is a non-negative matrix, this implies that the corresponding eigenvalue, $d_{a}$, is the spectral radius of $N_{a}$. This number is also called the Perron-Froebenius dimension of $a\in \irr(\mathcal{A}^*)$.

Note finally that if $\mathcal{A}$ is a $C^*$ weak Hopf algebra, then $\mathcal{A}^*$ is also a $C^*$ weak Hopf algebra. This allows us to define an element $\omega\in\mathcal{A}^*$ that is a positive non-degenerate trace-like linear functional on $\mathcal{A}$. Using this $\omega$ then one can define a scalar product on $\mathcal{A}$ by
\begin{equation*}
  \scalprod{x}{y} = \omega(x^* y).
\end{equation*}

\subsection{\texorpdfstring{$C^*$}{C*}-Hopf algebras}\label{sec:C_star_Hopf}

In this section we further particularize the results obtained in the previous sections to $C^*$-Hopf algebras. As a $C^*$-Hopf algebra is a special $C^*$-WHA, \cref{thm:main_c_star} applies. We will show that the element $\Omega$ obtained from this theorem is in fact the Haar integral of the Hopf algebra.

Before stating the definition of a $C^*$ Hopf algebra, note that for a WHA $\mathcal{A}$ the following statements are all  equivalent \cite{Bohm1999}:
\begin{itemize}
\item $\Delta_{\mathcal{A}}(1) = 1\otimes 1$,
\item $\Delta_{\mathcal{A}^*}(\epsilon) = \epsilon\otimes \epsilon$,
\item $\sum_{(x)} S(x_{(1)}) x_{(2)} = \epsilon(x) 1$,
\item $\sum_{(x)} x_{(1)} S(x_{(2)}) = \epsilon(x) 1$.
\end{itemize}
Keeping this equivalence in mind, one can define Hopf algebras and $\mathcal{C}^*$ Hopf algebras as follows:

\begin{definition}[Hopf algebra and $C^*$ Hopf algebra]
  A Hopf algebra is a weak Hopf algebra such that $\Delta(1) = 1\otimes 1$. A $C^*$ Hopf algebra is a $C^*$ weak Hopf algebra such that $\Delta(1) = 1\otimes 1$.
\end{definition}

Any $C^*$-Hopf algebra $\mathcal{A}$ is semisimple, and thus, due to the Larson-Radford theorem, $S^2=\id$. This implies that the (unique) positive pivotal element of $\mathcal{A}^*$ is $\epsilon$, and thus the numbers $\tr(\psi_a(g)) = d_a$ appearing in \cref{thm:main_hopf_c_star} are in fact $d_a = \tr(\psi_a(\epsilon)) = D_a$, the dimension of the irrep $a$. Moreover, $\Omega$ in \cref{thm:main_c_star} and $\Lambda$ in \cref{thm:wha_special_integral} coincide. In particular, as in a $C^*$ Hopf algebra the unique normalized integral is the Haar integral (Larson-Sweedler theorem), $\Omega$ is the Haar integral of $\mathcal{A}$. We have thus seen that
\begin{theorem}\label{thm:main_hopf_c_star}
  Let $\mathcal{A}$ be a finite dimensional $C^*$ Hopf algebra over $\mathbb{C}$. Then $\mathcal{A}^*$ is semisimple and spherical and its positive spherical element is $\epsilon\in\mathcal{A}^*$. The element $\Omega\in \mathcal{A}$ defined by
  \begin{equation*}
    \phi^{\otimes n} \circ \Delta^{n-1} (\Omega)  = \sum_{a\in \irr(\mathcal{A}^*)} \frac{D_a}{\sum_{x\in \irr(\mathcal{A}^*)} D_x^2} \cdot
    \tikzsetnextfilename{fd27623e-cad2-4d0a-835a-14cc67ee452e}
    \begin{tikzpicture}
      \draw[virtual] (1.5,0) rectangle (5.5,-0.7);
      \foreach \x/\t in {2/1,3/2,5/n}{
        \node[tensor] (t\x) at (\x,0) {};
        \draw[->-] (t\x) --++ (0,0.5);
        \draw[-<-] (t\x) --++ (0,-0.5);
      }
      \node[fill=white] (dots) at (4,0) {$\dots$};
      \draw[virtual,->-] (t3) -- (t2) node[midway,black,irrep,above] {$a$};
      \draw[virtual,->-] (dots) -- (t3) node[midway,black,irrep,above] {$a$};
      \draw[virtual,->-] (t5) -- (dots) node[midway,black,irrep,above] {$a$};
      \draw[virtual,->-] (1.5,-0.7) -- (5.5,-0.7);
    \end{tikzpicture}\ ,
  \end{equation*}
  where $D_a$ is the dimension of the irrep $a$, is the Haar integral of $\mathcal{A}$.
\end{theorem}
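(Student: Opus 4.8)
The plan is to deduce the statement from \cref{thm:main_c_star} by showing that, in the Hopf case, the positive spherical element $g\in\mathcal{A}^*$ produced there collapses to the counit $\epsilon$, that the numbers $d_a$ become the integer dimensions $D_a$, and that the resulting $\Omega$ coincides with the integral $\Lambda$ of \cref{thm:wha_special_integral}. Since a $C^*$-Hopf algebra is by definition a $C^*$-WHA with $\Delta(1)=1\otimes 1$, \cref{thm:main_c_star} already supplies $\Omega$ together with its non-degeneracy, cocommutativity, projector property, and the identity $S(\Omega)=\Omega$; the remaining work is to rewrite its data in terms of $D_a$ and to upgrade ``normalized left integral'' to ``Haar integral''.

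First I would identify $g=\epsilon$. A finite-dimensional $C^*$-Hopf algebra is semisimple, so the Larson--Radford theorem gives $S^2=\id$. By \cref{eq:S2} the map $S^2$ acts on each block as conjugation by $\psi_a(g)$; with $S^2=\id$ this forces $\psi_a(g)$ to commute with all of $\psi_a(\mathcal{A}^*)=\mathcal{M}_{D_a}$, hence $\psi_a(g)=\gamma_a\id_a$ by Schur, with $\gamma_a>0$ since $g$ is a positive element represented by the $*$-representation $\psi_a$. Sphericity of $g$, which in the $C^*$ proof takes the form $\tr\psi_a(g)=\tr\psi_a(g^{-1})$, then yields $\gamma_a D_a=\gamma_a^{-1}D_a$, so $\gamma_a^2=1$ and therefore $\gamma_a=1$; that is, $\psi_a(g)=\id_a$ for all $a$, i.e.\ $g=\epsilon$.

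Next I would pin down the vacuum and the dimensions. Because $\Delta(1)=1\otimes 1$, the Grothendieck unit from the weak-bialgebra section satisfies $\tau_E=\sum_{(1)}1_{(1)}1_{(2)}=1$. The assignment $f\mapsto f(1)$ is a one-dimensional representation $\delta$ of $\mathcal{A}^*$ (multiplicativity uses $\Delta(1)=1\otimes 1$), and its character, viewed in $\mathcal{A}=(\mathcal{A}^*)^*$, is $1\in\mathcal{A}$, so $\tau_\delta=1$. Linear independence of irreducible characters then forces $E=\{\delta\}$, whence $l_a=r_a=\delta$ for every $a$ by \cref{eq:N_unit} and the normalization sum $\sum_{x:\,l_x=l_a}$ runs over all of $\irr(\mathcal{A}^*)$. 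Evaluating the vacuum projector on the unit gives $\epsilon_\delta(1)=\tr\psi_\delta(\epsilon_\delta)=1$, using $1=\tau_\delta$ and that a central projector acts as the identity on its own sector; thus \cref{eq:tr_g} with $g=\epsilon$ reads $D_a=\tr\psi_a(\epsilon)=d_a\,\epsilon_{r_{\bar a}}(1)=d_a$. Substituting $d_a=D_a$ and the single-vacuum normalization into the formula of \cref{thm:main_c_star} reproduces exactly the coefficient $D_a/\sum_x D_x^2$ of the claimed $\Omega$.

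Finally I would show $\Omega$ is the Haar integral. From the relation $\Omega=(g^{-1}\otimes\id)\circ\Delta(\Lambda)$ noted before the proof of \cref{thm:main}, together with $g=\epsilon$ and the counit axiom $(\epsilon\otimes\id)\circ\Delta=\id$, one gets $\Omega=\Lambda$, the element of \cref{thm:wha_special_integral}; hence $\Omega$ is a non-degenerate, normalized left integral. Combining $S(\Omega)=\Omega$ from \cref{thm:main_c_star} with the fact that $S$ carries left integrals to right integrals shows $\Omega$ is two-sided, so it is a normalized, two-sided, non-degenerate integral, i.e.\ a Haar integral; the Larson--Sweedler theorem then certifies it is \emph{the} Haar integral. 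I expect the main obstacle to be the bookkeeping of the third paragraph---verifying that the vacuum is the single one-dimensional irrep $\delta$ and that $\epsilon_\delta(1)=1$---since this is precisely what converts the abstract quantities $d_a$ into the honest dimensions $D_a$, while the remaining steps are direct applications of \cref{thm:main_c_star}, \cref{thm:wha_special_integral}, and the two cited classical theorems.
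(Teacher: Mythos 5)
Your proposal is correct and follows essentially the same route as the paper's own (very terse) proof: Larson--Radford gives $S^2=\id$, hence the positive spherical element collapses to $g=\epsilon$ and $d_a=D_a$, the element $\Omega$ of \cref{thm:main_c_star} is identified with the integral $\Lambda$ of \cref{thm:wha_special_integral}, and Larson--Sweedler closes the argument. You merely fill in details the paper leaves implicit (the Schur-lemma step forcing $\psi_a(g)=\id_a$, the identification of the vacuum with the trivial irrep $\delta$ and $\epsilon_\delta(1)=1$, and the direct verification of two-sidedness via $S(\Omega)=\Omega$ rather than only citing uniqueness of the normalized integral), all of which check out against the cited results.
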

In particular, as $g=\epsilon$, the map $T=S$ and the equations
  \begin{gather*}
    (1\otimes x) \cdot \Delta(\Omega) = (T(x)\otimes 1) \cdot \Delta(\Omega), \\
    \Delta\circ T = (T\otimes g\otimes T) \circ \Delta_{\mathrm{op}}^2.
  \end{gather*}
simplify to the definition of the integral and to the fact that $S$ is an anti-cohomomorphism:
  \begin{gather*}
    (1\otimes x) \cdot \Delta(\Omega) = (S(x)\otimes 1) \cdot \Delta(\Omega), \\
    \Delta\circ S = (S\otimes S) \circ \Delta_{\mathrm{op}}.
  \end{gather*}

\section{Pulling-through algebras}\label{sec:pulling_through_alg}

In the previous sections we have seen (\cref{thm:main}, and its variants \cref{thm:main_spherical} and \cref{thm:main_c_star}) that in a semisimple pivotal WHA over $\mathbb{C}$ there is a special non-degenerate cocentral element $\Omega$ that behaves almost like an integral. In this section we abstract this notion and investigate the structure such an element provides, independent from semisimplicity or weak Hopf algebras. The main reason behind this abstraction is that this is exactly the structure we need to use in order to define a PEPS with symmetries, see \cref{sec:peps}.

\begin{definition}[Pulling-through algebra]
    A finite dimensional pre-bialgebra over $\mathbb{C}$ is called a pulling-through algebra if there is a cocentral, non-degenerate element $\Omega\in \mathcal{A}$, a linear map $T:\mathcal{A}\to \mathcal{A}$ and a group-like linear functional $g\in \mathcal{A}^*$ such that for all $x\in \mathcal{A}$,
    \begin{gather}
      (1\otimes x) \cdot \Delta(\Omega) = (T(x) \otimes 1) \cdot \Delta(\Omega), \label{eq:pulling_through}\\
      \Delta\circ T = (T\otimes g \otimes T)\circ\Delta_{\mathrm{op}}^2 \label{eq:T_cohomo}.
    \end{gather}
\end{definition}

\Cref{eq:pulling_through} will be called \emph{pulling-through} equation. The rationale behind the name will become clear when we give a tensor network description in \cref{eq:pulling_through_tn} below. In a pulling-through algebra, $\Omega$ uniquely determines $T$: if there was another map $\hat{T}:\mathcal{A}\to\mathcal{A}$ also satisfying \cref{eq:pulling_through}, then
\begin{equation*}
  (T(x) \otimes 1) \cdot \Delta(\Omega) = (1\otimes x) \cdot \Delta(\Omega) = (\hat{T}(x) \otimes 1) \cdot \Delta(\Omega),
\end{equation*}
and thus by non-degeneracy of $\Omega$, $\hat{T}(x) = T(x)$. Using that $\Delta(\Omega) = \Delta_{\mathrm{op}}(\Omega)$, one can show that $T$ is an involution:
\begin{equation*}
  (1\otimes x) \cdot\Delta(\Omega) = (T(x) \otimes 1) \cdot \Delta(\Omega) = (T(x) \otimes 1) \cdot\Delta_{\mathrm{op}}(\Omega) = (1 \otimes T^2(x))\cdot \Delta_{\mathrm{op}}(\Omega) = (1 \otimes T^2(x))\cdot \Delta(\Omega),
\end{equation*}
where the third equation is the pulling-through equation for $y= T(x)$. As above, due to non-degeneracy of $\Omega$, $x= T^2(x)$, i.e.\ $T$ is an involution. In particular, $T$ is invertible, and thus there can be at most one linear functional $g\in \mathcal{A}^*$ satisfying \cref{eq:T_cohomo}. Let us show now that $T$ is an anti-homomorphism by using the pulling-through equation \cref{eq:pulling_through} twice:
\begin{equation*}
  (T(xy)\otimes 1)\Delta(\Omega) = (1\otimes xy)\Delta(\Omega) = (T(y)\otimes x)\Delta(\Omega) = (T(y) T(x)\otimes 1)\Delta(\Omega),
\end{equation*}
and thus non-degeneracy of $\Omega$ implies that $T(xy) = T(y) \cdot T(x)$.

Let us note that in a pre-bialgebra $\mathcal{A}$ there might be several different elements $\Omega$ that make it a pulling-through algebra. For example, it is easy to check that given an element $\Omega\in\mathcal{A}$ defining a pulling-through structure on $\mathcal{A}$ and any central group-like element $c\in \mathcal{A}^*$ (if there is any), the  element $\hat{\Omega} = (\id\otimes c) \circ \Delta (\Omega)$ defines another pulling-though structure. This is the case if in the construction of \cref{thm:main} we use two different pivotal elements $g$ and $\hat{g}$ to arrive at the cocentral elements $\Omega$ and $\hat{\Omega}$, respectively.

Let $\mathcal{A}$ be a pulling-through algebra. As $\mathcal{A}$ is a pre-bialgebra, it has MPO representations: given any representation $\phi$ of $\mathcal{A}$ and an injective representation $\psi$ of $\mathcal{A}^*$ on a vector space $W$, let us define an MPO tensor by
\begin{equation*}
 \ .
\end{equation*}
This equation is in fact why $(1\otimes x) \cdot \Delta(\Omega) = (T(x) \otimes 1) \cdot \Delta(\Omega)$ is called pulling-through equation: it states that the MPO described by the black MPO tensors can be pulled through the circular MPO describing $(T\otimes\id)\circ\Delta(\Omega)$.

To familiarize ourselves with the notation, we derive some equations that we will use in the following section. Taking different coproducts of \cref{eq:pulling_through_tn}, and using \cref{eq:white_tensor_coproduct}, we arrive at, for example,
\begin{equation*}
  \tikzsetnextfilename{a5d1cf82-3817-4507-a522-db9334afdbdf}
 \ .
\end{equation*}
Finally note that the above equations also hold for non-translation invariant MPO representations of the pulling-through algebra.

\section{MPO-injective PEPS}\label{sec:peps}

In this section we define projected entangled pair states that possess certain symmetries described by a pulling-through algebra $\mathcal{A}$.   Such a PEPS will also be called $\mathcal{A}$-injective PEPS or MPO-injective PEPS. As the primary examples for pulling-through algebras are pivotal cosemisimple weak Hopf algebras, and the representations of such algebras form a pivotal fusion category, our definition can be translated into a  category theoretical language. In fact, when the pulling-through algebra corresponds to a $C^*$-WHA, we believe that our definition is equivalent to the formalism presented in \cite{Lootens2020}.  We expect that the pulling-through structure is actually more general than cosemisimple weak Hopf algebras; for example, certain
non-semisimple pivotal Hopf algebras will also admit a pulling-through structure. States corresponding to those models have unusual properties. We show that MPO-injectivity is a topological property in the sense that it is invariant under the blocking of tensors (and thus under renormalization); in fact, our definition has been designed to satisfy this property. We also show, in the special case where $\mathcal{A}$ is a $C^*$-Hopf algebra, the relation between this PEPS and the generalization of Kitaev's toric code to $C^*$-Hopf algebras \cite{Buerschaper2013}. Following \cite{Bultinck2017}, we then construct a set of states that we call local excitations and that are states that differ only locally from the PEPS. We also construct a set of local operators that form a representation of the Drinfeld double $\mathcal{D}(\mathcal{A}^{*})$ of the Hopf algebra $\mathcal{A}^{*}$ and show that these local operators transform the local excitations among each other. We identify the anyons of the model as the irrep sectors of the local excitations under the action of these local operators.

Let us first recall the definition of two dimensional PEPS. A 2D PEPS is defined on a directed pseudo-graph\footnote{by pseudo-graph we mean a graph where multiple edges are allowed between vertices} $\mathcal{G}= (\mathcal{V},\mathcal{E})$  that can be drawn on an orientable 2-manifold manifold such that no edges intersect (like in the case of a planar graph). The figure below shows a local part of such a graph (locally it \emph{is} a planar graph). For convenience we have numbered the vertices and we have displayed a circular arrow around vertex $5$ depicting the orientation of the 2-manifold:
\begin{equation*}
  \tikzsetnextfilename{e42f46ff-1cb9-419c-b70c-7b937781cefb}
  \begin{tikzpicture}
    \def\r{1.2};
    \coordinate[] (c1) at (0,0) {};
    \coordinate[] (c2) at (10:2.2) {};
    \coordinate[] (c3) at (45:1.4) {};
    \coordinate[] (c4) at ( $(c3)+(140:1.8) $){};
    \coordinate[] (c5) at (140:1.8) {};
    \coordinate[] (e1) at ( $(c1)+(-130:\r) $);
    \coordinate[] (e2) at ( $(c1)+(-50:\r) $) {};
    \coordinate[] (e3) at ( $(c3)+(30:\r) $){};
    \coordinate[] (e4) at ( $(c2)+(-90:\r) $) {};
    \coordinate[] (e5) at ( $(c4)+(30:\r) $) {};
    \coordinate[] (e6) at  ( $(c5)+(-120:\r) $) {};
    \coordinate[] (e7) at ( $(c5)+(130:\r) $) {};
    \fill[lightgray!50] (c1)--(c2)--(c3)--(c1);
    \node[anchor=south] at (c1) {1};
    \node[anchor=south] at (c2) {2};
    \node[anchor=north] at (c3) {3};
    \node[anchor=south] at (c4) {4};
    \node[anchor=south] at (c5) {5};
    \draw[postaction={on each segment={-<-}}] (e1)-- (c1)--(c2)--(c3) to[bend left] (c4)--(c5)--(c1)--(e2);
    \draw[->-] (c3) to[bend right] (c4);
    \draw[postaction={on each segment={-<-}}] (c1)--(c3)--(e3);
    \draw[-<-] (c2)--(e4);
    \draw[-<-] (c4)--(e5);
    \draw[postaction={on each segment={-<-}}]  (e6)--(c5)-- (e7);
    \draw[red, ->-] ($(c5)+(120:0.4)$) arc (120:415:0.4);
  \end{tikzpicture} \ .
\end{equation*}
The areas enclosed by minimal (unoriented) cycles in the graph are called plaquettes. Such a plaquette is denoted by light gray shading between the vertices $1$, $2$ and $3$. Let $\mathcal{E}^{op}$ denote the set of edges in $\mathcal{E}$ with reversed orientation, and for any $e\in \mathcal{E}$ or $e\in\mathcal{E}^{op}$ let $\bar{e}$ denote the edge $e$ with the opposite orientation.

To every edge $e\in\mathcal{E}$ we assign a finite dimensional complex vector space, $V_e$. Let us also assign a vector space $V_{\bar{e}}$ to every oppositely oriented edge $\bar{e}\in \mathcal{E}^{op}$ such that $V_{\bar{e}} = V_e^*$. Note that this relation is symmetric, i.e. $V_{e} = V_{\bar{e}}^* = V_{\bar{\bar{e}}}$.  To every vertex we assign a (finite dimensional) Hilbert space, $\mathcal{H}_v$. Finally, to each vertex $v$ we assign a tensor $A_v\in \mathcal{H}_v \otimes \bigotimes_{e\in \mathcal{N}_v} V_{e}$, where $\mathcal{N}_v$ denotes the set of edges $e\in \mathcal{E}\cup\mathcal{E}^{op}$ that connect $v$ with another vertex such that the orientation of $e$ points away from $v$. Given all these data, the state defined by the PEPS is obtained by contracting $\bigotimes_v A_v$ along the edges of the graph; note that this contraction is possible, because if $e$ is an edge between $v$ and $w$, then the tensor component corresponding to $e$ in the tensor $A_v$ is $V_e$, while in the tensor $A_w$ it is $V_{\bar{e}} = V_e^*$. The state defined by the PEPS can be represented, using the graphical notation of tensor calculus, as
\begin{equation*}
  \tikzsetnextfilename{413b9e0b-6cf6-475f-8156-36635330bba2}
  \ket{\Psi} =
  \begin{tikzpicture}
    \def\r{1.2};
    \def\p{0.07};
    \coordinate[] (c1) at (0,0) {};
    \coordinate[] (c2) at (10:2.2) {};
    \coordinate[] (c3) at (45:1.4) {};
    \coordinate[] (c4) at ( $(c3)+(140:1.8) $){};
    \coordinate[] (c5) at (140:1.8) {};
    \coordinate[] (e1) at ( $(c1)+(-130:\r) $);
    \coordinate[] (e2) at ( $(c1)+(-50:\r) $) {};
    \coordinate[] (e3) at ( $(c3)+(30:\r) $){};
    \coordinate[] (e4) at ( $(c2)+(-90:\r) $) {};
    \coordinate[] (e5) at ( $(c4)+(30:\r) $) {};
    \coordinate[] (e6) at  ( $(c5)+(-120:\r) $) {};
    \coordinate[] (e7) at ( $(c5)+(120:\r) $) {};
    \node[anchor = north, inner sep = 5pt] at (c1) {$A_1$};
    \node[anchor = west, inner sep = 5pt] at (c2) {$A_2$};
    \node[anchor = north, inner sep = 5pt] at (c3) {$A_3$};
    \node[anchor = east, inner sep = 5pt] at (c4) {$A_4$};
    \node[anchor = east, inner sep = 5pt] at (c5) {$A_5$};
    \draw[-<-] (e1)-- (c1);
    \draw[-<-] (c1)--(c2);
    \draw[-<-] (c2)--(c3);
    \draw[-<-] (c3) to[bend left] (c4);
    \draw[->-] (c3) to[bend right] (c4);
    \draw[-<-] (c4)--(c5);
    \draw[-<-] (c5)--(c1);
    \draw[-<-] (c1)--(e2);
    \draw[-<-] (c1)--(c3);
    \draw[-<-] (c3)--(e3);
    \draw[-<-] (c2)--(e4);
    \draw[-<-] (c4)--(e5);
    \draw[-<-]  (e6)--(c5);
    \draw[-<-]  (c5)-- (e7);
    \foreach \x in {1,2,3,4,5}{
      \node[tensor] at (c\x) {};
      \draw[->-] (c\x) --++ (0,0.5);
    }
  \end{tikzpicture} \ .
\end{equation*}

Let us now define MPO-injective PEPS. These are PEPS such that the PEPS tensors are invariant under certain symmetry operations acting on their virtual degrees of freedom, i.e.\ on the tensor components $\bigotimes_{e\in \mathcal{N}_v} V_{e}$ of the tensor $A_v$. More precisely, a PEPS tensor $A_v\in \mathcal{H}_v\otimes\bigotimes_{e\in \mathcal{N}_v} V_{e}$ is called $O_v$-injective for an operator $O_v\in \End(\bigotimes_{e\in \mathcal{N}_v} V_{e})$, if there are tensors $B_v$ and $C_v$ such that
\begin{equation*}
  \tikzsetnextfilename{e1f2a962-37b9-4658-aaf0-402493511505}
 \ ,
\end{equation*}
where the red circle denotes the operator $O_v$. Note that if $O_v$ is a projector, then one can choose $B_v = A_v$. If, on the other hand, $O_v$ is nilpotent, $B_v$ and $A_v$ are different. This is the case when the symmetries of the tensor form a non-semisimple algebra. In the following we will define the symmetry operators $O_v$ for each vertex $v\in\mathcal{V}$. To do so, we need the following additional data. First, a pulling-through algebra $(\mathcal{A},\Omega,T,g)$, and second, representations  $\phi_{e}:\mathcal{A}\to \End(V_{e})$ on  the vector spaces $V_{e}$  such that $\phi_{\bar{e}} =  \bar{\phi}_{e}$, where $\bar{\phi}(x) = [\phi\circ T(x)]^T$ for all $x\in\mathcal{A}$. Note that as $T$ is idempotent, this relation is symmetric: if $\phi_{\bar{e}} = \bar{\phi}_{e}$, then $\phi_{e} = \bar{\phi}_{\bar{e}}$ as well. Finally, for every plaquette in the graph, we will choose a vertex from the ones surrounding the plaquette. Below we denote such a choice by putting a black dot close to the vertex chosen inside each plaquette:
\begin{equation*}
 =
  \sum_{x} \phi_e\circ T(x) \otimes \psi(\delta_x) = \sum_{x} \phi_{\bar{e}}(x)^T \otimes \psi(\delta_x).
\end{equation*}
Note that this white tensor is constructed the same way as the black tensor, but for the construction we use the wrong representation: $\phi_{\bar{e}}$ instead of $\phi_e$ (in the formula transpose appears because on the l.h.s.\ we read the tensor as a linear map from the bottom to the top). Changing the orientation of the edge $e$ (i.e.\ replacing it with $\bar{e}$) changes the black tensors to white and white tensors to black. Finally, as the vertex $3$ was selected for the plaquette surrounded by vertices $1$, $2$ and $3$, we insert the linear functional $g$ between the edges connecting the vertex $3$ to the vertices $1$ and $2$.

Using the above definition of an MPO-injective PEPS tensor, an MPO-injective PEPS -- after applying the inverse tensors $C_v$ at each vertex $v$ -- can be written as the following tensor network:
\begin{equation*}
  \tikzsetnextfilename{f841a111-5e76-45c2-bb1e-fc88fb73c9b1}
 \ .
\end{equation*}
In general pulling-through algebras (such as a pulling-through algebra that originates from a pivotal WHA that is  not a $C^*$-WHA) different placements of the  group-like elements $g$ lead to different states. In a pulling-through algebra that originates from a $C^*$-WHA, however, all these states are related to each other by local operations, and in fact, one can define MPO-injective PEPS in a more translation invariant way. To understand why, recall \cite{Bohm1999} that in a $C^*$-WHA the positive spherical element $g\in\mathcal{A}^*$ can be written as $g=g_L g_R^{-1} = g_R^{-1} g_L$ such that $g_R=S(g_L)$ and such that there are algebra elements $G_L,G_R\in\mathcal{A}$ satisfying the following relations:
\begin{gather*}
 \ .
\end{equation*}
As we have seen above, the yellow dot does not even have to appear on the virtual index of the MPO, it can be instead inserted between two PEPS tensors anywhere around the plaquette.

\subsection{Scale independence}

In this section we introduce an operation on states that we call \emph{blocking} and show that an MPO-injective PEPS stays MPO-injective even after blocking. Blocking is the basis of renormalization (there it is followed by an isometry that gets rid of certain degrees of freedom) and has a natural representation in tensor networks. Blocking simply means that we treat certain -- neighboring -- particles together: for example, given a three-partite state $\ket{\psi}\in \mathcal{H}_1 \otimes \mathcal{H}_2 \otimes \mathcal{H}_3$, the blocking of particles $2$ and $3$ means that we reinterpret $\ket{\psi}$ as a two-partite state in $\mathcal{H}_1 \otimes \mathcal{H}_{23}$, where $ \mathcal{H}_{23} = \mathcal{H}_2 \otimes \mathcal{H}_3$.

In PEPS, blocking is a partial contraction of the tensor network, i.e. that in a given region we replace all tensors by one tensor that is the result of the contraction of the tensors in that region. The blocked tensor network is then another, coarser tensor network. For example, in the tensor network below blocking of four tensors results in the following new tensor network:
\begin{equation*}
  \tikzsetnextfilename{d54b4892-b167-45bf-b2ce-8b86630876c1}
 \ .
\end{equation*}
In fact, every blocking can be broken down into a series of simple steps: in each step we either block two neighboring tensors that are connected by a singe edge, possibly leading to double edges in the PEPS, or block two edges together to remove a double edge from the PEPS. Note that in the first case the number of plaquettes does not change (if there are double edges after blocking, the area enclosed between the two edges is considered as a plaquette), while in the second case the number of plaquettes decreases by one.

In the following we will show that an MPO-injective PEPS stays MPO-injective even after blocking. Let us consider a pulling-through algebra $\mathcal{A}$ and an $\mathcal{A}$-injective PEPS. We first show that blocking two neighboring tensors in the $\mathcal{A}$-injective PEPS results in another $\mathcal{A}$-injective PEPS, and then, that removing double edges does not change the $\mathcal{A}$-injectivity property either.

Let us now consider the blocking of two neighboring $\mathcal{A}$-injective PEPS tensors that are connected to each other with a single edge. After applying the inverse tensors on the two PEPS tensors and using the pulling-through property, we obtain
  \begin{equation*}
    \tikzsetnextfilename{ccc4186c-fc3d-401f-805d-d56fea9181d0}
  \ .
\end{equation*}
We have thus constructed an inverse tensor (the inverse tensors of the individual PEPS tensors contracted with the matrix describing the linear functional used above) for the blocked tensor, and showed that it is MPO-injective. Let us note that in the process neither the ``outgoing arrow is black tensor, incoming arrow is white tensor'' nor the ``one green dot per plaquette'' property have changed, and thus not only the blocked tensor is MPO-injective but the whole PEPS remains MPO-injective as well.

Let us now consider the blocking of two neighboring edges $e$ and $f$. Assume that the edges are oriented in the opposite direction and that the ``one green dot per plaquette'' is on the left vertex:
\begin{equation*}
 \ .
\end{equation*}
Here the black MPO tensor is built using the representation $\phi_{fe} = \phi_f \boxtimes \phi_e$, while the white MPO tensor is built using the transpose of the representation $\bar{\phi}_{fe}$,
\begin{equation*}
  \phi_{fe} \circ T = (\phi_{f} \otimes \phi_e) \circ \Delta \circ T = (\phi_f \circ T \otimes g \otimes \phi_e \circ T) \circ \Delta_{\mathrm{op}}^2.
\end{equation*}
This equation shows that the single (blocked) white tensor is the concatenation of the two white tensors with the green dot in the middle. The fact that the equation contains $\Delta_{\mathrm{op}}$ instead of $\Delta$ simply reflects the fact that as the orientation of the two circles are the same, the arrow on the virtual index (red line) on the left is oriented in the  opposite direction as on the right.

As the blocking leaves the rest of the PEPS invariant, blocking two edges keeps the PEPS MPO-injective. We have thus shown that MPO-injectivity is invariant both under the blocking of neighboring vertices and under the blocking of neighboring edges. As blocking any number of tensors in a simply connected region can be decomposed into a series of such simple blocking steps, we have proven that MPO-injectivity is invariant under blocking.

\subsection{Relation to the Kitaev model}

The tensor networks defined above are strongly linked to the generalized Kitaev models defined in \cite{Buerschaper2013}. In this section we show the concrete connection. First, let us consider an MPO-injective PEPS constructed from a $C^*$-Hopf algebra $\mathcal{A}$. Based on the construction of this PEPS, we can define another multi-partite state $\ket{\Psi}$ where the individual degrees of freedom are described on the Hilbert space $\mathcal{A}$. Second, we construct a parent Hamiltonian for $\ket{\Psi}$ -- the Kitaev Hamiltonian -- that is the sum of commuting projectors. Finally, we define the Drinfeld double $D(\mathcal{B})$ of any Hopf algebra $\mathcal{B}$ and construct a set of local operators that form a representation of $D(\mathcal{A}^*)$; we also construct a set of local deformations of $\ket{\Psi}$ and show that this set of states $S$ is invariant under the action of $D(\mathcal{A}^*)$. We identify the anyons of the model as the irreps sectors of $S$ under the action of $\mathcal{D}(\mathcal{A}^*)$.

Let us first show how the state $\ket{\Psi}$ is defined. For that, consider the PEPS defined by a $C^*$-Hopf algebra. The construction of these states is easier to understand than in the general case, because there are no group-like elements $g$ present in it (as $g=\epsilon$). The PEPS, after applying the inverse tensors, read as
\begin{equation*}
  \ket{\Phi} = \
  \tikzsetnextfilename{6c0bdfbb-3d16-459c-a23e-93e1b7de5d40}
.
\end{equation*}
This vector $\ket{\Phi}$ lives in the vector space $ \bigotimes_{v\in \mathcal{V}} \left(\bigotimes_{e\in \mathcal{N}_v} V_{e} \right)$, because at every vertex $v$ the degrees of freedom are $\bigotimes_{e\in \mathcal{N}_v} V_{e}$. Note that in this tensor product every edge appears exactly twice, once with the orientation defined by the graph and once with opposite orientation. Let us now rearrange this tensor product and group together the degrees of freedom corresponding to the two orientations of each edge. After this regrouping, we can interpret the previous vector space as $ \bigotimes_{e \in \mathcal{E}} \left(V_{e} \otimes V_{\bar{e}}\right)$. Finally note that as $V_{\bar{e}} = V_e^*$ and for every finite dimensional vector space $V$, the tensor product $V\otimes V^*$ is isomorphic to $ \End(V)$, this vector space can also be interpreted as  $\bigotimes_{e \in \mathcal{E}} \End(V_e)$, i.e.\ one can think of $\ket{\Phi}$ as
\begin{equation*}
  \ket{\Phi} \in \bigotimes_{e \in \mathcal{E}} \End(V_e).
\end{equation*}
Note that, by construction, $\ket{\Phi}$ is not supported on the whole Hilbert space $\bigotimes_{e \in \mathcal{E}} \End(V_e)$, but instead only on the subspace $\bigotimes_{e \in \mathcal{E}} \phi_{e} (\mathcal{A})$, where $\phi_{e}$ is the representation of $\mathcal{A}$ used on the edge $e$. This implies that we can define another vector, $\ket{\Psi}$, as
\begin{equation*}
  \ket{\Psi} = \left(\bigotimes_{e \in \mathcal{E}} \phi_{e}^{-1}\right) \ket{\Phi}
  \in  \bigotimes_{e \in \mathcal{E}} \mathcal{A} .
\end{equation*}
If $\mathcal{A}$ is a $C^*$-Hopf algebra, there is a scalar product on $\mathcal{A}$, and thus the vector space $\bigotimes_{e \in \mathcal{E}} \mathcal{A}$ is a finite dimensional Hilbert space. Therefore $\ket{\Psi}$ can be interpreted as a (possibly unnormalized) state. The maps $\phi_{e}^{-1}$ act locally, therefore this state is again a PEPS -- the virtual legs of the PEPS tensor are the red lines in the figure above. This PEPS is then the same (up to a choice of orientation of the arrows on both the red and black lines) as the one described in \cite{Buerschaper2013}.

\subsubsection{The Kitaev Hamiltonian}

Building on the results of \cite{Buerschaper2013}, in this section we explicitly construct the Kitaev parent Hamiltonian for the state $\ket{\Psi}$ defined in the previous section. To make the reading easier, we will use a graphical language to depict the action of the defined Hamiltonian. This graphical language is nothing but the tensor network representation of the state using the representation $\bigotimes_{e\in\mathcal{E}} \phi_{e}$. For simplicity, let us restrict ourselves to a square lattice, and fix the orientation of the lattice such that all vertical edges point from bottom to top and all horizontal ones from right to left (i.e.\ the product of two elements on the vertical edge reads from top to bottom and on the horizontal edges from left to right).

The Kitaev Hamiltonian consists of two type of terms, the plaquette terms $B_p$ and vertex terms $A_v$. The total Hamiltonian is the sum of these terms,
\begin{equation*}
  H = - \sum_{p\in \text{plaquettes}} B_p - \sum_{v\in \text{vertices}} A_v.
\end{equation*}
Each plaquette term $B_p$ acts on the edges surrounding the plaquette $p$, while each vertex term $A_v$ act on the edges connected to the vertex $v$. We will define $B_p$ and $A_v$ such that they are orthogonal projectors and any two such terms commute. The MPO-injective PEPS $\ket{\Psi}$ is a frustration-free ground state of this Hamiltonian.

Let us now define the operator $A_v$ for a given vertex $v$. As stated above, this operator acts on the edges surrounding the vertex $v$. On these four particles, its action is given by:
\begin{equation*}
  A_v : x\otimes y\otimes z\otimes v \mapsto \sum_{(\Omega)} x \cdot \Omega_{(1)} \otimes S(\Omega_{(2)}) \cdot y \otimes S(\Omega_{(3)}) \cdot z \otimes v\cdot \Omega_{(4)} \ ,
\end{equation*}
where $\Omega$ is the Haar integral of $\mathcal{A}$, $x$ is the particle above the vertex, $y$ is the one on its right, $z$ is the one below and $v$ is the one on the left. The concrete form of $A_v$ depends on the orientation of the lattice. Using the graphical representation, it is easier to visualize the action of $A_v$:
\begin{equation*}
  A_v :
  \tikzsetnextfilename{91b54dc5-64fa-4048-b4b5-eb17fc778154}
  \begin{tikzpicture}[baseline=-1mm]
    \def\r{0.6};
    \foreach \d/\c/\l in {0/1/y,90/0/x,180/0/v,270/1/z}{
      \node[tensor,label = \d-90: $\l$] (r\d) at (\d:\r) {};
      \ifnum\c=0 \def\mystyle{->-};\else \def\mystyle{-<-};\fi
      \draw[\mystyle] (r\d) --++ (\d:0.4);
      \ifnum\c=0 \def\mystyle{-<-};\else \def\mystyle{->-};\fi
      \draw[\mystyle] (r\d) --++ (\d+180:0.4);
     }
  \end{tikzpicture}
  \mapsto
  \tikzsetnextfilename{9c398d80-2d91-474d-bd7c-a01e4f97405f}
  \begin{tikzpicture}[baseline=-1mm]
    \def\r{0.6};
    \draw[virtual,->-=0.65] (0,0) circle (\r);
    \node[t,label=135:$b(\Omega)$] at (135:\r) {};
    \foreach \d/\c/\l in {0/1/y,90/0/x,180/0/v,270/1/z}{
      \node[tensor, fill={\ifnum\c=0 black\else white\fi}] (r\d) at (\d:\r) {};
      \node[tensor,label = \d-90: $\l$] (t\d) at (\d:2*\r) {};
      \ifnum\c=0 \def\mystyle{->-};\else \def\mystyle{-<-};\fi
      \draw[\mystyle] (r\d) --++ (t\d);
      \draw[\mystyle] (t\d) --++ (\d:0.4);
      \ifnum\c=0 \def\mystyle{-<-};\else \def\mystyle{->-};\fi
      \draw[\mystyle] (r\d) --++ (\d+180:0.4);
     }
  \end{tikzpicture} \ ,
\end{equation*}
i.e.\ it multiplies the four particles by the (translation invariant) MPO $\Delta^3(\Omega)$, each from the side that is closer to the vertex. As $\Omega$ is a projector, it is clear that $A_v$ is a projector as well. As both representations $x \mapsto (y\mapsto yx)$ and $x \mapsto (y\mapsto S(x)y)$ of $\mathcal{A}^{op}$ are $*$-representations, $A_v$ is also self-adjoint. Note that for any two different vertices $v_1$ and $v_2$ the Hamiltonian terms $A_{v_1}$ and $A_{v_2}$ clearly commute: if $v_1$ and $v_2$ are not neighboring vertices, they act on different particles; if $v_1$ and $v_2$ are neighboring, there is a single particle on which both of them acts, but if $A_{v_1}$ acts from the left, then $A_{v_2}$ acts from the right of the particle.

Let us now define the operator $B_p$ for a given plaquette $p$. As stated above, this operator acts on the edges surrounding the plaquette. On these four particles, its action is given by
\begin{equation*}
  B_p: x\otimes y \otimes z \otimes v \mapsto \sum_{xyzv} \omega(S(x_{(1)}) S(y_{(1)} z_{(2)} v_{(2)}  )  ) \cdot x_{(2)} \otimes y_{(2)} \otimes z_{(1)} \otimes v_{(1)},
\end{equation*}
where $\omega$ is the Haar integral of $\mathcal{A}^*$ and $x$ is the particle on the right of the plaquette, $y$ is the one on top, $z$ is the one on the left and $v$ is the particle below the plaquette. Again,  the action of $B_p$ is easier to understand using the graphical representation:
\begin{equation*}
  B_p :
  \tikzsetnextfilename{a533c912-1693-49e1-88c8-4aa63c10f959}
  \begin{tikzpicture}
    \foreach \d/\c/\l in {0/0/x,90/0/y,180/1/z,270/1/v}{
      \node[tensor,label=\d:$\l$] (t\d) at (\d:1) {};
      \ifnum\c=0 \def\mystyle{->-};\else \def\mystyle{-<-};\fi
      \draw[\mystyle] (t\d) --++ (\d+90:0.8);
      \ifnum\c=0 \def\mystyle{-<-};\else \def\mystyle{->-};\fi
      \draw[\mystyle] (t\d) --++ (\d-90:0.8);
    }
  \end{tikzpicture} =
  \tikzsetnextfilename{033583ee-7b1b-43e0-8a8c-136c8f528f4c}
  \begin{tikzpicture}
    \foreach \d/\c/\l in {0/0/x,90/0/y,180/1/z,270/1/v}{
      \draw[virtual] (\d:1.8) rectangle ($(\d:0.7)+(\d+90:0.2)$);
      \node[tensor] (r\d) at (\d:1.0) {};
      \node[tensor,label = \d-90:$b(\l)$] (b\d) at (\d:1.5) {};
      \ifnum\c=0 \def\mystyle{->-};\else \def\mystyle{-<-};\fi
      \draw[virtual,\mystyle] (b\d) -- (r\d);
      \draw[\mystyle] (r\d) --++ (\d+90:0.8);
      \ifnum\c=0 \def\mystyle{-<-};\else \def\mystyle{->-};\fi
      \draw[\mystyle] (r\d) --++ (\d-90:0.8);
    }
  \end{tikzpicture}
  \mapsto
  \tikzsetnextfilename{79babb21-2050-4259-9f5a-0589e293d749}
  \begin{tikzpicture}
    \foreach \d/\c/\l in {0/0/x,90/0/y,180/1/z,270/1/v}{
      \draw[virtual] (\d:2.5) rectangle ($(\d:0.7)+(\d+90:0.2)$);
      \node[tensor,fill={\ifnum\c=1 white\else black\fi}] (t\d) at (\d:1) {};
      \node[tensor] (r\d) at (\d:1.5) {};
      \node[tensor,label = \d-90:$b(\l)$] (b\d) at (\d:2) {};
      \ifnum\c=0 \def\mystyle{->-};\else \def\mystyle{-<-};\fi
      \draw[virtual,\mystyle] (b\d) -- (r\d);
      \draw[virtual,\mystyle] (r\d) -- (t\d);
      \draw[->-] (t\d) --++ (\d+90:0.8) coordinate (c);
      \draw (c) arc (\d:\d+90:0.2);
      \draw[\mystyle] (r\d) --++ (\d+90:0.8);
      \ifnum\c=0 \def\mystyle{-<-};\else \def\mystyle{->-};\fi
      \draw[-<-] (t\d) --++ (\d-90:0.8);
      \draw[\mystyle] (r\d) --++ (\d-90:0.8);
    }
    \node[t,label=-45:$c(\omega)$] at ($(0.8,-0.8) + (-45:0.2)$) {};
  \end{tikzpicture} \ ,
\end{equation*}
where the matrix $c(\omega)$ is the boundary describing the MPO representation of $\omega$, see \cref{eq:dual_mpo}. Similar to the vertex terms, the operators $B_p$ are projectors, and as the representations $f \mapsto (x\mapsto x_{(1)} f(x_{(2)}))$ and $f\mapsto (x\mapsto f\circ S(x_{(1)}) \cdot x_{(2)})$ are both $*$-representations of $\mathcal{A}^*$, they are also self-adjoint. If $p_1$ and $p_2$ are plaquettes that are not neighboring, then $B_{p_1}$ and $B_{p_2}$ act on different particles, and thus they commute. If $p_1$ and $p_2$ are neighboring, then there is one particle both acts on. One of the operators, however, act from the right, the other from the left, and thus even in this case, $B_{p_1}$ and $B_{p_2}$ commute.

Let us now show that the operators $A_v$ and $B_p$ commute, i.e.\ that the Kitaev Hamiltonian is indeed a sum of commuting orthogonal projectors. If the vertex $v$  is not a vertex on the plaquette $p$, then $A_v$ and $B_p$ are acting on different particles and thus they trivially commute. If the vertex $v$ is one of the vertices around the plaquette, we first calculate the graphical representation of the action of $A_v B_p$ on the six particles surrounding both the plaquette and the vertex:
\begin{equation*}
  A_v\cdot B_p :
  \tikzsetnextfilename{ba414df7-303e-4db3-ba51-26c3a5df87c2}
   \ .
\end{equation}
Using this identity in the expression for $B_p A_v$, we obtain that the two loops, the virtual and physical one, can be untangled, and thus $B_p A_v = A_v B_p$, i.e.\ the Hamiltonian terms $A_v$ and $B_p$ commute.

Let us finally show that the state $\ket{\Psi}$  is a ground state of the Hamiltonian $H$. First, as $\Omega$ is a projector, $\ket{\Psi}$ is clearly invariant under each term $A_v$. Let us now show that it is also invariant under all $B_p$, then this will mean that $\ket{\Psi}$ is a frustration-free ground state of $H$. To see that $\ket{\Psi}$ is invariant under the action of $B_p$, note that the state, locally around a plaquette, looks like
\begin{equation*}
  \tikzsetnextfilename{7cbeca65-2580-4166-8c04-45af17c3e411}

\end{equation*}
Where in the last equation we have used \cref{eq:hopf_black_white_prod} in each corner. The result is therefore just a multiplication with the complex number $\omega(1) = 1$, i.e.\ the state $\ket{\Psi}$ is invariant under $B_p$.

\subsubsection{The Drinfeld double and anyons}

In this section we define the Drinfeld double $D(\mathcal{A})$ of a finite dimensional Hopf algebra $\mathcal{A}$ and, for each pair of plaquette $p$ and neighboring vertex $v$, a set of local operators including the Hamiltonian terms $A_v$ and $B_p$ that forms a representation of $D(\mathcal{A}^*)$. We also define a set of sates that differ only locally from the  state $\ket{\Psi}$. The defined local operators transform these states amongst each other, and thus these states form a $D(\mathcal{A}^*)$-module. We identify the anyons of the model as the subsets of states that form \emph{irreducible} $\mathcal{D}(\mathcal{A}^*)$-modules (see also \cite{Bultinck2017}).

Let us first define the Drinfeld double of a Hopf algebra.
\begin{definition}
  Let $\mathcal{A}$ be a finite dimensional Hopf algebra. The Drinfeld double $\mathcal{D}(\mathcal{A})$ is a Hopf algebra constructed as follows. As a vector space, it is $\mathcal{A}^* \otimes \mathcal{A}$. Given $f\in \mathcal{A}^*$ and $x\in\mathcal{A}$ we will write $f\bowtie x$ for their tensor product. The coproduct in $\mathcal{D}(\mathcal{A})$ is given by
  \begin{equation*}
    \Delta(f\bowtie x) = \sum_{(f),(x)} (f_{(2)} \bowtie x_{(1)}) \otimes (f_{(1)}\bowtie x_{(2)}).
  \end{equation*}
  The product in $\mathcal{D}(\mathcal{A})$ is given by
  \begin{equation}\label{eq:drinfeld_commutation}
    (f\bowtie x)\cdot (g\bowtie y) = \sum_{(g) (x)} g_{(1)}\circ S^{-1} (x_{(3)}) \cdot g_{(3)}(x_{(1)}) \cdot fg_{(2)} \bowtie x_{(2)}y.
  \end{equation}
\end{definition}

One can verify that the above product and coproduct indeed define a Hopf algebra. The unit of $\mathcal{D}(\mathcal{A})$ is $\epsilon\bowtie 1$. Linear functionals on $\mathcal{D}(\mathcal{A})$ are of the form $x\bowtie f$ ($x\in\mathcal{A}$ and $f\in\mathcal{A}^*$) and in particular, the counit is given by $1\bowtie \epsilon$. Finally, the antipode in $\mathcal{D}(\mathcal{A})$ is given by $S(f\bowtie x) = S^{-1}(f)\bowtie S(x)$. If $\mathcal{A}$ is a $C^*$-Hopf algebra, then $\mathcal{D}(\mathcal{A})$ is also a $C^*$-Hopf algebra with $*$ operation $f\bowtie x \mapsto f^* \bowtie x^*$. Note that, as $\sum_{(\epsilon)} \epsilon_{(1)}\otimes \epsilon_{(2)}\otimes \epsilon_{(3)} = \epsilon\otimes\epsilon\otimes\epsilon$, the map
\begin{equation*}
  \mathcal{A} \to \mathcal{D}(\mathcal{A}), \ x \mapsto \epsilon \bowtie x
\end{equation*}
is both a homomorphism and a cohomomorphism. Similarly, as $\sum_{(1)} 1_{(1)}\otimes 1_{(2)}\otimes 1_{(3)} = 1\otimes 1\otimes 1$, the map
\begin{equation*}
  \mathcal{A}^* \to \mathcal{D}(\mathcal{A}), \ f \mapsto f \bowtie 1
\end{equation*}
is a homomorphism and an anti-cohomomorphism. Using the images of these maps, all elements in the Drinfeld double can be written as
\begin{equation*}
  f\bowtie x = (f\bowtie 1) \cdot (\epsilon\bowtie x).
\end{equation*}
The elements $(f\bowtie 1)$ and $(\epsilon\bowtie x)$ satisfy the following commutation relation:
\begin{equation*}
  (\epsilon\bowtie x) \cdot (f\bowtie 1) = \sum_{(f)(x)} f_{(1)}\circ S^{-1} (x_{(3)}) \cdot f_{(3)}(x_{(1)}) \cdot (f_{(2)} \bowtie 1) \cdot (\epsilon \bowtie x_{(2)}).
\end{equation*}

Let $\mathcal{A}$ now be a finite dimensional Hopf algebra and $\mathcal{A}^*$ be its dual Hopf algebra. Let us now construct the Drinfeld double of $\mathcal{A}^*$. As a vector space, it is $\mathcal{A}\otimes\mathcal{A}^*$, and thus (as a vector space) it is canonically isomorphic to $\mathcal{D}(\mathcal{A})$. Let us make use of this isomorphism and write the elements of $\mathcal{D}(\mathcal{A}^*)$ as $f\bowtie x$ instead of $x\bowtie f$. In this notation, the coproduct of the Drinfeld double $\mathcal{D}(\mathcal{A}^*)$ is given by
\begin{equation*}
    \Delta(f\bowtie x) = \sum_{(f),(x)} (f_{(1)} \bowtie x_{(2)}) \otimes (f_{(2)}\bowtie x_{(1)}),
\end{equation*}
and the product is given by
  \begin{equation}\label{eq:drinfeld_a_star_product}
    (f\bowtie x)\cdot (g\bowtie y) = \sum_{(f) (y)} f_{(3)}\circ S^{-1} (y_{(1)}) \cdot f_{(1)}(y_{(3)}) \cdot f_{(2)}g \bowtie xy_{(2)}.
  \end{equation}
Again, the maps
\begin{equation*}
  \mathcal{A} \to \mathcal{D}(\mathcal{A}^*): \ x \mapsto \epsilon \bowtie x \quad \text{and} \quad \mathcal{A}^* \to \mathcal{D}(\mathcal{A}^*):\  f\mapsto f\bowtie 1
\end{equation*}
are homomorphisms. Similar as above, every element of $\mathcal{D}(\mathcal{A}^*)$ can be written as
\begin{equation*}
  f\bowtie x = (\epsilon\bowtie x) \cdot (f\bowtie 1).
\end{equation*}
The elements $(\epsilon\bowtie x)$ and $(f\bowtie 1)$ of $\mathcal{D}(\mathcal{A}^*)$ satisfy the commutation relation
  \begin{equation}\label{eq:drinfeld_a_star_commutation}
    (f\bowtie 1)\cdot (\epsilon\bowtie x) = \sum_{(f) (x)} f_{(3)}\circ S^{-1} (x_{(1)}) \cdot f_{(1)}(x_{(3)}) \cdot (\epsilon  \bowtie x_{(2)}) \cdot (f_{(2)}\bowtie 1).
  \end{equation}

Let us now define a set of local operators acting on an $\mathcal{A}$-injective PEPS and show that they form a representation of the Drinfeld double $\mathcal{D}(\mathcal{A}^*)$. All of these operators will act on the particles surrounding a neighboring plaquette and vertex pair $(p,v)$. We will define two types of operators. The first type is denoted by $A_{(p,v)}^w$ for any $w\in\mathcal{A}$, it acts only on the particles surrounding the vertex and represents the element $\epsilon\bowtie w\in \mathcal{D}(\mathcal{A}^*)$. The second type is denoted by $B_{(p,v)}^f$ for any $f\in\mathcal{A}^*$, it acts only only on the particles surrounding the plaquette and it represents the element $f\bowtie 1\in\mathcal{D}(\mathcal{A}^*)$.

Let us first define the operators $A_{(p,v)}^w$ for a given plaquette $p$, vertex $v$ and algebra element $w\in\mathcal{A}$. We define such an operator by its action on the four particles around the vertex; this action is defined by the graphical representation, using the injective representations $\phi_{(v,w)}$ on each edge. In the figure below, the plaquette $p$ is in the upper right corner of the vertex. The action of $A_{(p,v)}^w$ is given by
\begin{equation*}
  A^w_{(p,v)} :
  \tikzsetnextfilename{a96d713a-d44c-4eb4-8070-53bca94b4ae9}
  \begin{tikzpicture}[baseline=-1mm]
    \def\r{0.6};
    \foreach \d/\c/\l in {0/1/x,90/0/v,180/0/z,270/1/y}{
      \node[tensor,label = \d-90: $\l$] (r\d) at (\d:\r) {};
      \ifnum\c=0 \def\mystyle{->-};\else \def\mystyle{-<-};\fi
      \draw[\mystyle] (r\d) --++ (\d:0.4);
      \ifnum\c=0 \def\mystyle{-<-};\else \def\mystyle{->-};\fi
      \draw[\mystyle] (r\d) --++ (\d+180:0.4);
     }
  \end{tikzpicture}
  \mapsto
  \tikzsetnextfilename{04f99e60-78c9-4ede-9e23-28d0e73c679c}
  \begin{tikzpicture}[baseline=-1mm]
    \def\r{0.6};
    \draw[virtual,-<-=0.65] (0,0) circle (\r);
    \node[t,label=45:$b(w)$] at (45:\r) {};
    \foreach \d/\c/\l in {0/1/x,90/0/v,180/0/z,270/1/y}{
      \node[tensor, fill={\ifnum\c=0 white\else black\fi}] (r\d) at (\d:\r) {};
      \node[tensor,label = \d-90: $\l$] (t\d) at (\d:2*\r) {};
      \ifnum\c=0 \def\mystyle{->-};\else \def\mystyle{-<-};\fi
      \draw[\mystyle] (r\d) --++ (t\d);
      \draw[\mystyle] (t\d) --++ (\d:0.4);
      \ifnum\c=0 \def\mystyle{-<-};\else \def\mystyle{->-};\fi
      \draw[\mystyle] (r\d) --++ (\d+180:0.4);
     }
  \end{tikzpicture} \ .
\end{equation*}
Note that the MPO representing this operator is oriented in the opposite way as in the definition of the PEPS and in the definition of the vertex term of the Hamiltonian.
This expression reads as
\begin{equation*}
  A^w_{(p,v)} : x\otimes y\otimes z\otimes v \mapsto \sum_{(w)} w_{(4)} \cdot x   \otimes w_{(3)} \cdot y \otimes  z \cdot S(w_{(2)})\otimes v\cdot S( w_{(1)}) \ .
\end{equation*}
Here and in what follows we omit the representations $\phi$ when translating between the figures and the algebraic formulas. It is easy to check that the operators $A_{(p,v)}^w$ form a representation of $\mathcal{A}$ and that this representation is a $*$-representation. Notice that the vertex term $A_v$ is exactly $A_{(p,v)}^\Omega$, because the Haar integral $\Omega$ is invariant under the antipode $S$:
\begin{equation*}
  \tikzsetnextfilename{29f48ecf-0929-4da0-af9f-191544fd03fa}
  \begin{tikzpicture}[baseline=-1mm]
    \def\r{0.6};
    \draw[virtual,->-=0.65] (0,0) circle (\r);
    \node[t,label=45:$b(\Omega)$] at (45:\r) {};
    \foreach \d/\c/\l in {0/1/y,90/0/x,180/0/v,270/1/z}{
      \node[tensor, fill={\ifnum\c=0 black\else white\fi}] (r\d) at (\d:\r) {};
      \ifnum\c=0 \def\mystyle{->-};\else \def\mystyle{-<-};\fi
      \draw[\mystyle] (r\d) --++ (\d:0.4);
      \ifnum\c=0 \def\mystyle{-<-};\else \def\mystyle{->-};\fi
      \draw[\mystyle] (r\d) --++ (\d+180:0.4);
     }
  \end{tikzpicture} \  = \
  \tikzsetnextfilename{71b03d64-5b0a-47ad-996e-8c85e1b9c044}
  \begin{tikzpicture}[baseline=-1mm]
    \def\r{0.6};
    \draw[virtual,->-=0.65] (0,0) circle (\r);
    \node[t,label=45:$b(S(\Omega))$] at (45:\r) {};
    \foreach \d/\c/\l in {0/1/y,90/0/x,180/0/v,270/1/z}{
      \node[tensor, fill={\ifnum\c=0 black\else white\fi}] (r\d) at (\d:\r) {};
      \ifnum\c=0 \def\mystyle{->-};\else \def\mystyle{-<-};\fi
      \draw[\mystyle] (r\d) --++ (\d:0.4);
      \ifnum\c=0 \def\mystyle{-<-};\else \def\mystyle{->-};\fi
      \draw[\mystyle] (r\d) --++ (\d+180:0.4);
     }
  \end{tikzpicture} \ = \
  \tikzsetnextfilename{0f34cea0-19fb-41f0-842a-75dec18e1bd1}
  \begin{tikzpicture}[baseline=-1mm]
    \def\r{0.6};
    \draw[virtual,-<-=0.65] (0,0) circle (\r);
    \node[t,label=45:$b(\Omega)$] at (45:\r) {};
    \foreach \d/\c/\l in {0/1/y,90/0/x,180/0/v,270/1/z}{
      \node[tensor, fill={\ifnum\c=0 white\else black\fi}] (r\d) at (\d:\r) {};
      \ifnum\c=0 \def\mystyle{->-};\else \def\mystyle{-<-};\fi
      \draw[\mystyle] (r\d) --++ (\d:0.4);
      \ifnum\c=0 \def\mystyle{-<-};\else \def\mystyle{->-};\fi
      \draw[\mystyle] (r\d) --++ (\d+180:0.4);
     }
  \end{tikzpicture} \ ,
\end{equation*}
where the last equation is \cref{eq:S_x_MPO}. The plaquette term $A_v$ is the multiplication with the MPO on the left, while $A_{(p,v)}^\Omega$ is multiplication with the MPO on the right. As $\Omega$ is cocommutative, the particular choice of the plaquette $p$ does not matter, i.e.\ $A_v = A_{(p,v)}^\Omega$ holds for any choice of the plaquette $p$ next to the vertex $v$.

Let us now define the operator $B_{(p,v)}^f$ for the plaquette $p$, vertex $v$ and linear functional $f\in\mathcal{A}^*$. In the figure below, the vertex $v$ is in the lower left corner of the plaquette $p$. The action of $B_{(p,v)}^f$ is given by
\begin{equation*}
  B_{(p,v)}^f :
  \tikzsetnextfilename{9339ddf6-d53c-44af-a3b9-335b1495e78b}
  \begin{tikzpicture}
    \foreach \d/\c/\l in {0/0/y,90/0/z,180/1/v,270/1/x}{
      \node[tensor,label=\d:$\l$] (t\d) at (\d:1) {};
      \ifnum\c=0 \def\mystyle{->-};\else \def\mystyle{-<-};\fi
      \draw[\mystyle] (t\d) --++ (\d+90:0.8);
      \ifnum\c=0 \def\mystyle{-<-};\else \def\mystyle{->-};\fi
      \draw[\mystyle] (t\d) --++ (\d-90:0.8);
    }
  \end{tikzpicture}
  \mapsto
  \tikzsetnextfilename{290d8ce5-0bb0-460f-99b4-599fef684ca2}
  \begin{tikzpicture}
    \foreach \d/\c/\l in {0/0/y,90/0/z,180/1/v,270/1/x}{
      \draw[virtual] (\d:2.5) rectangle ($(\d:0.7)+(\d+90:0.2)$);
      \node[tensor,fill={\ifnum\c=1 black\else white\fi}] (t\d) at (\d:1) {};
      \node[tensor] (r\d) at (\d:1.5) {};
      \node[tensor,label = \d-90:$b(\l)$] (b\d) at (\d:2) {};
      \ifnum\c=0 \def\mystyle{->-};\else \def\mystyle{-<-};\fi
      \draw[virtual,\mystyle] (b\d) -- (r\d);
      \draw[virtual,\mystyle] (r\d) -- (t\d);
      \draw[-<-] (t\d) --++ (\d+90:0.8) coordinate (c);
      \draw (c) arc (\d:\d+90:0.2);
      \draw[\mystyle] (r\d) --++ (\d+90:0.8);
      \ifnum\c=0 \def\mystyle{-<-};\else \def\mystyle{->-};\fi
      \draw[->-] (t\d) --++ (\d-90:0.8);
      \draw[\mystyle] (r\d) --++ (\d-90:0.8);
    }
    \node[t,label=-135:$c(f)$] at ($(-0.8,-0.8) + (-135:0.2)$) {};
  \end{tikzpicture} \ ,
\end{equation*}
or equivalently, by
\begin{equation*}
  B_{(p,v)}^f: x\otimes y \otimes z \otimes v \mapsto \sum_{xyzv} f(x_{(2)} S(y_{(1)}) S(z_{(1)}) v_{(2)} ) \cdot x_{(1)} \otimes y_{(2)} \otimes z_{(2)} \otimes v_{(1)}.
\end{equation*}
It is easy to check that these operators form a $*$-representation of $\mathcal{A}^*$. The plaquette term $B_p$ of the Kitaev Hamiltonian is exactly the operator $B_{(p,v)}^\omega$, where $\omega$ is the Haar integral of $\mathcal{A}^*$. As $\omega$ is cocommutative, the particular choice of $v$ does not matter, i.e.\  $B_p = B_{(p,v)}^\omega$ for any vertex $v$ around the plaquette $p$.

Let us now define a linear map $\mathcal{D}(\mathcal{A}^*) \to \End(\mathcal{A}^6)$ for a given pair of plaquette $p$ and vertex $v$ by
\begin{equation}\label{eq:Drinfeld_representation}
  f\bowtie w \mapsto A_{(p,v)}^w \cdot B_{(p,v)}^f ,
\end{equation}
for all $f\in\mathcal{A}^*$ and $w\in\mathcal{A}$, and by linear extension, on the whole $\mathcal{D}(\mathcal{A})$. Below we show that this map defines a representation of the Drinfled double $\mathcal{D}(\mathcal{A})$ (see also \cite{Buerschaper2013}). As the maps
\begin{equation*}
  f\bowtie \epsilon \mapsto B_{(p,v)}^f  \quad \text{and} \quad   \epsilon\bowtie w \mapsto A_{(p,v)}^w
\end{equation*}
form representations of $\mathcal{A}^*$ and $\mathcal{A}$, respectively, and $(f\bowtie w) = (\epsilon \bowtie w) \cdot (f\bowtie 1)$, we only have to check that the commutation relation \cref{eq:drinfeld_a_star_commutation} holds. Let us therefore compare the action of $A_{(p,v)}^w\cdot B_{(p,v)}^f$ and the action of $B_{(p,v)}^f \cdot A_{(p,v)}^w$. The graphical representation of the action of the operator $A_{(p,v)}^w\cdot B_{(p,v)}^f$ is the following:
\begin{equation*}
  A_{(p_1,v_1)}^w\cdot B_{(p_2,v_2)}^f :
  \tikzsetnextfilename{a560c3b2-27aa-4dee-8a94-ced52c25022b}
.
\end{equation*}
This operator describes the action of the operator
\begin{equation*}
  \sum_{(f)(w)} f_{(3)}\circ S( w_{(1)}) \cdot f_{(1)}(w_{(3)}) \cdot A_{(p,v)}^{w_{(2)}} \cdot B_{(p,v)}^{f_{(2)}},
\end{equation*}
and thus, as this commutation relation is the same\footnote{remember that $\mathcal{A}$ is a $C^*$-Hopf algebra, and thus $S^{-1} = S$} as \cref{eq:drinfeld_a_star_commutation}, the map in \cref{eq:Drinfeld_representation} describes a representation of the Drinfeld double $\mathcal{D}(\mathcal{A}^*)$.

Let us now construct a  set of states that differ  from $\ket{\Psi}$ only on a given pair of plaquette $p$ and vertex $v$. The states, in the bulk, are given by
\begin{equation*}
  \tikzsetnextfilename{90b0cc67-faf9-48da-b944-83ea667e316a}
  \begin{tikzpicture}
    \foreach \x/\y in {0/0,0/2,2/0,2/2}{
      \draw[virtual] (\x,\y) circle (0.6);
      \foreach \d/\c in {0/0,90/1,180/1,270/0}{
        \node[tensor,shift={(\x,\y)},fill={\ifnum\c=0 white\else black\fi}] (t) at (\d:0.6) {};
        \draw (t)--++(\d:0.4);
        \draw (t)--++(\d+180:0.3);
      }
    }
    \node[tensor] (s) at (1,0) {};
    \node[tensor,shift={(2,2)}] (t) at (-135:0.6) {};
    \draw[virtual] (t) to[in=90,out=-135] (s) --++ (0,-0.5);
  \end{tikzpicture} \ ,
\end{equation*}
i.e.\ at the position defined by the vertex and plaquette we insert a rank-three tensor in $W \otimes \psi(\mathcal{A}) \subset W\otimes \End(W)$, where $W$ is the virtual vector space of the MPO and $\psi$ is the corresponding representation of $\mathcal{A}^*$, and we continue with the MPO starting from the given point. If the state is defined by open boundary, then the boundary has one more index than the boundary describing the ground state $\ket{\Psi}$ of the PEPS. If the state is defined on closed boundary, then the string has to terminate at some point; at this termination we will insert another rank-three tensor the same way as above. That is, on closed boundary, we do not define a state with a single defect in it, instead, only states that differ from $\ket{\Psi}$ at least in two different positions (note, however, that one can construct periodic boundary states with an odd number of defects).

Both operators $B_{(p,v)}^f$ and $A_{(p,v)}^w$ map a state with a defect at the pair of plaquette $p$ and vertex $v$ to another state with a defect at the same position, but this state might be described by a different rank-three tensor. The action of the operators $B_{(p,v)}^f$ and $A_{(p,v)}^w$ on the rank-three tensor can be depicted as
\begin{equation*}
  B_{(p,v)}^f: \
  \tikzsetnextfilename{d2d79f49-52a0-448d-b980-0ab7f158b3af}
  \begin{tikzpicture}
    \draw[virtual] (0,0) arc (-135:-90:0.8);
    \draw[virtual] (0,0) arc (-135:-180:0.8);
    \node[tensor] (t) at (0,0) {};
    \draw[virtual] (t) --++ (-135:0.8);
  \end{tikzpicture} \mapsto
  \tikzsetnextfilename{e364c450-0b9b-4be0-856c-2b71f0dbe40a}
  \begin{tikzpicture}
    \draw[virtual] (0,0) arc (-135:-90:0.8);
    \draw[virtual] (0,0) arc (-135:-180:0.8);
    \node[tensor] (t) at (0,0) {};
    \draw[virtual] (t) --++ (-135:0.8);
    \draw (0,0) circle (0.4);
    \node[tensor] at (-135:0.4) {};
    \node[tensor] at (120:0.4) {};
    \node[tensor,fill=white] at (-30:0.4) {};
    \node[t,label=above right:$c(f)$] at (45:0.4)  {};
  \end{tikzpicture}
  \quad \text{and} \quad
  A_{(p,v)}^w :
  \tikzsetnextfilename{2000615f-d33c-4a29-873e-a8a03686e313}
  \begin{tikzpicture}
    \node[tensor] (t) at (0,0) {};
    \draw[virtual] (t) --++ (0.5,0);
    \draw[virtual] (t) --++ (-0.5,0);
    \draw[virtual] (t) --++ (0,-0.5);
  \end{tikzpicture}
  \mapsto
  \sum_{\mu}
  \tikzsetnextfilename{76653b83-2c88-40ad-9a47-8003471f1bd0}
  \begin{tikzpicture}
    \node[tensor] (t) at (0,0) {};
    \draw[virtual] (t) --++ (0.5,0);
    \draw[virtual] (t) --++ (-0.5,0);
    \draw[virtual] (t) --++ (0,-0.5);
    \node[t,label=above:$b(w)$] (b) at (0,0.5) {};
    \draw[virtual] (b) --++ (0.5,0);
    \draw[virtual] (b) --++ (-0.5,0);
    \node[fusion tensor] (v) at (0.5,0.25) {};
    \node[fusion tensor] (w) at (-0.5,0.25) {};
    \draw[virtual] (v.east) --++ (0.5,0);
    \draw[virtual] (w.west) --++ (-0.5,0);
    \node[anchor=north,inner sep=3pt,font=\tiny] at (v.south) {$\mu$};
    \node[anchor=north,inner sep=3pt,font=\tiny] at (w.south) {$\mu$};
  \end{tikzpicture}
\end{equation*}
Or, by formulas, if the tensor is given by $\ket{v}\otimes b(x)$, then
\begin{gather*}
  B_{(p,v)}^f: \ket{v}\otimes b(x) \mapsto \sum_{(f)} \psi(f_{(2)}) \ket{v} \otimes f_{(1)}(x_{(1)}) \cdot f_{(3)}\circ S(x_{(3)})\cdot  b(x_{(2)}),\\
  A_{(p,v)}^w: \ket{v}\otimes b(x) \mapsto \ket{v} \otimes b(xw).
\end{gather*}
These states thus form a $\mathcal{D}(\mathcal{A}^*)$-module. We identify the \emph{anyons} of the model as the irrep sectors of this module, i.e.\ an anyon is a set of states, each of which are locally differ from the ground state of the Hamiltonian and such that the above defined local operators do not mix the different anyons. Note that as the Hamiltonian, in general, is not in the center of $\mathcal{D}(\mathcal{A}^*)$, an anyon might not have a definite energy (i.e.\ the different states in the anyonic sector might have different energies, see also \cite{Komar2017}).

\section*{Acknowledgments}

We thank Laurens Lootens, Gabriella Böhm and Frank Verstraete for inspiring discussions. This work has received support from the European Union’s Horizon 2020 program through the ERC CoG SEQUAM (No. 863476), ERC CoG GAPS (No. 648913) and the ERC AdG QENOCOBA (No. 742102), from the DFG (German Research Foundation) under Germany’s Excellence Strategy (EXC-2111-390814868), from the Spanish Ministry of Science and Innovation through the Agencia Estatal de Investigación MCIN/AEI/10.13039/501100011033 (PID2020-113523GB-I00 and grant BES-2017-081301 under the ``Severo Ochoa Programme for Centres of Excellence in R\&D'' CEX2019-000904-S and ICMAT Severo Ochoa project SEV-2015-0554), from CSIC Quantum Technologies Platform PTI-001, from Comunidad Autónoma de Madrid through the grant QUITEMAD-CM (P2018/TCS-4342)

\printbibliography

\appendix

\section{Proof of \cref{lem:foldback} and of \cref{cor:T}}\label{sec:foldback_lemma_proof}

Before proving \cref{lem:foldback}, we need the following lemma:

\begin{lemma}\label{lem:important}
  In a cosemisimple WHA, the following holds for all $a,b,c\in \irr(\mathcal{A}^*)$:
  \begin{equation*}
    \sum_{\mu}
    \tikzsetnextfilename{9a8f9640-8375-46b5-aecb-e0722b08c427}
  \ ,
  \end{align*}
  where in the second equality we have used that $N_{dc}^{r_c} = N_{r_c\bar{c}}^{d}$ (see \cref{eq:N_symmetries}) and that $r_c = l_{\bar{c}}$ (see \cref{thm:N_symmetry_consequence}), and thus that the sum over $\mu$ on the r.h.s.\ runs from 1 to $N_{dc}^{r_c} = N_{r_c\bar{c}}^{d} = N_{l_{\bar{c}}\bar{c}}^{d} = \delta_{d\bar{c}}$, i.e.\ it is an empty sum if $d\neq \bar{c}$ and it consists of a single term if $d=\bar{c}$. Due to associativity (\cref{eq:associative}), the r.h.s.\ can be further rewritten as
  \begin{align*}
    \sum_{d\mu\nu}
    \tikzsetnextfilename{838902c5-f691-467b-9978-46cf1d710827}
,
  \end{align*}
  where in the first equation we have used that $N_{\bar{a}d}^e = N_{ae}^d$ (see \cref{eq:N_symmetries}), and that it is non-zero for $e\in E$ if and only if $e=r_{a}$, and in this case it is $\delta_{da}$ (see \cref{eq:N_unit}); in the second equation we have used again \cref{eq:w_a}. Finally note that $N_{\bar{a}\bar{b}}^{\bar{c}} = N_{\bar{b}c}^{a}$ is non-zero only if $r_{c} = r_a$, and thus one can drop the prefactor $\delta_{r_cr_a}$. Rearranging the resulting equation leads to the desired result.
\end{proof}

Let us now restate and prove \cref{lem:foldback}:
\Blemma*
\begin{proof}
  Let us apply \cref{eq:fold-back} three times. The first application yields
  \begin{equation*}
    \tikzsetnextfilename{0c0a5742-5420-4a57-b74c-ab8bb6b6f340}
 \ .
\end{equation*}
Therfore, using \cref{lem:foldback}, we obtain that $B_{ab}^c$ can be expressed as
\begin{equation}\label{eq:B_C_connection}
  B_{ab}^c = w_a \cdot \frac{d_c}{d_a d_b} \cdot \left(C_{\bar{a}c}^b\right)^T \left(C_{ab}^c\right)^{-1}.
\end{equation}
As $B_{ab}^c$ squares to the identity, the eigenvalues of $B_{ab}^c$ are $\pm 1$, and thus $T_{ab}^c$, the trace of $B_{ab}^c$, is an integer with $|T_{ab}^c|\leq N_{ab}^c$. Note now that \cref{eq:B_C_connection} can be used to obtain the following expression for $B_{\bar{a}c}^b$:
\begin{equation*}
  B_{\bar{a}c}^b = w_{\bar{a}} \cdot \frac{d_b}{d_a d_c} \left(C_{ab}^c\right)^{T} \left(C_{\bar{a}c}^b\right)^{-1} = C_{\bar{a}c}^b \cdot \left[\left(B_{ab}^c\right)^{-1}\right]^T \cdot \left(C_{\bar{a}c}^b\right)^{-1} =  C_{\bar{a}c}^b \cdot \left(B_{ab}^c\right)^T \cdot \left(C_{\bar{a}c}^b\right)^{-1},
\end{equation*}
where in the first equation we have used that $d_a = d_{\bar{a}}$, in the second equation we have used that $w_{\bar{a}} d_b / (d_ad_c) = (w_{a} d_c /(d_a d_b))^{-1}$ and in the third that $B_{ab}^c$ squares to the identity. Taking the trace of the two sides in this equation, we obtain that $T_{\bar{a}c}^b = T_{ab}^c$. By a similar argument, we obtain that $T_{ab}^c = T_{c\bar{b}}^a$ as well. The definition of $T_{ab}^c$ and \cref{lem:foldback} implies that
\begin{equation*}
  \sum_b T_{ab}^c d_b \cdot \id_c=
  \sum_{b\mu}  d_b
  \tikzsetnextfilename{adf87f52-1e22-4bde-b1c1-bfdab6dc14e3}
 = \begin{cases}
      d_a d_c\cdot \id_c, & \text{if $l_a=l_c$,}\\
      0                & \text{otherwise.}
  \end{cases}
\end{equation*}
Therefore the equation $\sum_b T_{ab}^c d_b = d_a \cdot \delta_{l_a l_c} d_c$ holds.

Following the proof of Theorem 2.3 in \cite{Etingof2002} (see also \cite{Bartlett2015}), we can now prove that $d_a^2 = w_a w_{\bar{a}}$ is a positive number. Notice first that the matrix $T_a^T T_a = T_{\bar{a}} T_a$ is positive semidefinite (as $T_a$ is a real matrix), and thus all of its eigenvalues are non-negative. Let us show that $d_a^2 = w_a w_{\bar{a}}$ is one of its eigenvalues, then, as neither $w_a$ nor $w_{\bar{a}}$ is 0, this implies the the positivity of $d_a^2$. To see that it is an eigenvalue, we can check that $\delta_{r_a l_b}d_b$ is the corresponding eigenvector:
  \begin{equation*}
    \sum_{bc} T_{\bar{a}c}^d T_{ab}^c \delta_{r_a l_b} d_b =\sum_{bc} T_{\bar{a}c}^d T_{ab}^c d_b = d_a \sum_c T_{\bar{a}c}^d \delta_{l_a l_c} d_c = d_a \sum_c T_{\bar{a}c}^d d_c = d_a d_{\bar{a}} \delta_{r_a l_d} d_d,
  \end{equation*}
  where in the first equation we have used that  $T_{ab}^c = 0$ if $\delta_{r_al_b} = 0$ (as then $N_{ab}^c=0$ and $|T_{ab}^c| < N_{ab}^c$); the same relation is used in the third equation, together with $r_{\bar{a}} = l_a$. In the last equation we have used $r_a = l_{\bar{a}}$. Finally note that $d_{\bar{a}} = d_a$ and that the vector defined by $\delta_{r_a l_b} d_b$ is non-zero.

  Finally, let us prove that $\sum_{x:l_x = l_a} d_x^2 = \sum_{x:l_x =r_a} d_x^2$. For that, note that $T_{ab}^c = T_{\bar{a}c}^b = T_{b\bar{c}}^{\bar{a}}  = T_{\bar{b}\bar{a}}^{\bar{c}}$, and thus
  \begin{equation*}
    d_a \sum_{x:l_x = l_a} d_x^2 = \sum_{xb} T_{xb}^a d_x d_b = \sum_{xb} T_{\bar{b}\bar{x}}^{\bar{a}} d_x d_b = \sum_{\bar{x}\bar{b}} T_{\bar{b}\bar{x}}^{\bar{a}} d_{\bar{x}} d_{\bar{b}} = d_{\bar{a}} \sum_{\bar{b}:l_{\bar{b}} = l_{\bar{a}}} d_{\bar{b}}^2 = d_a \sum_{x:l_x = r_a} d_{x}^2 .
  \end{equation*}
  As $d_a\neq 0$, the statement follows.
\end{proof}

\section{Example: string-net models}

In the following we create WHAs from fusion categories that appear in the construction of string-net models. For simplicity, we will restrict to fusion categories where the fusion multiplicities are all $0$ or $1$, $N_{ab}^c \leq 1$ for all $a,b,c$. With this assumption, the pentagon equations for the F-symbols simplify to the equation:
\begin{equation}\label{eq:pentagon_simple}
  \left[F_{fcd}^e\right]_l^g \left[F_{abl}^e\right]_k^f = \sum_h \left[F_{abc}^g\right]_h^f \left[F_{ahd}^e\right]_k^g \left[F_{bcd}^k\right]_l^h.
\end{equation}
The MPO tensor used in the construction of the string-net models is then defined as
\begin{equation*}
  \tikzsetnextfilename{48b7458c-54b9-4c81-97c3-3509d56d5fb8}

  = N_{ab}^c \cdot \delta_{bk} \cdot \delta_{aa'} \delta_{bb'} \delta_{cc'}.
\end{equation}
These are indeed projectors as $N_{fl}^e\in \{0,1\}$ and it is easy to see that they commute with the MPO tensor (the $F$-symbols satisfy $\left[F_{abl}^e\right]_k^f=0$ unless $N_{fl}^e\neq 0$ and $N_{bl}^k\neq 0$). The bond dimension of each block is then $D_l = \sum_{fe} N_{fl}^e$. A similar decomposition holds for the vertical direction as well. There the projectors that decompose the MPO tensor are
\begin{equation*}
  \tikzsetnextfilename{52dc6821-24a9-4635-b084-e09384354da6}
 \     .
\end{equation*}
As above, taking repeated coproducts of a coalgebra element is the same as growing the size of the MPO. In particular, the operation $\Delta$ defined by this equation is associative. Note that as the pentagon equation \cref{eq:pentagon_simple} can be rearranged as
\begin{equation*}
  \left[F_{ahd}^e\right]_k^g \left[F_{bcd}^k\right]_l^h = \sum_f \left[F_{fcd}^e\right]_l^g \left[F_{abl}^e\right]_k^f \left[\left(F_{abc}^g\right)^{-1}\right]_f^h,
\end{equation*}
the coproduct can also be expressed as
\begin{equation*}
  \tikzsetnextfilename{9c9aa475-6ac1-40b8-855b-208da1ea7910}

\end{equation*}
both follow from the pentagon equation \cref{eq:pentagon_simple}. More precisely, the left equation is exactly \cref{eq:pentagon_simple}, while the right equation is its inverse,
\begin{equation*}
  \left[\left(F_{fcd}^e\right)^{-1}\right]_g^l \left[\left(F_{abl}^e\right)^{-1}\right]_f^k = \sum_h \left[\left(F_{abc}^g\right)^{-1}\right]_f^h \left[\left(F_{ahd}^e\right)^{-1}\right]_g^k \left[\left(F_{bcd}^k\right)^{-1}\right]_h^l.
\end{equation*}
We can now check for  the key equation \cref{eq:MPO_tensor_product}. The r.h.s.\ of \cref{eq:MPO_tensor_product} is
\begin{equation*}
  \tikzsetnextfilename{655df682-a394-491d-ac0e-dc0e88781b83}
,
\end{equation*}
where we have only written out the components that are not automatically zero due to the delta functions. This equation is the pentagon equation rearranged:
\begin{equation*}
   \left[F_{abc}^g\right]_h^f \left[F_{ahd}^e\right]_k^g = \sum_l \left[F_{fcd}^e\right]_l^g \left[F_{abl}^e\right]_k^f \left[\left(F_{bcd}^k\right)^{-1}\right]_h^l.
\end{equation*}
Let us now check for the l.h.s.\ of \cref{eq:MPO_tensor_product}, i.e.\ the orthogonality of the fusion tensors:
\begin{equation*}
  \sum_{f'gg'e'}
  \tikzsetnextfilename{56971c56-e5a4-47d9-9b2a-f2b6c03316ca}
  \begin{tikzpicture}[every node/.style={font=\tiny,fill=white,inner sep=1pt}]
    \pic (f) at (0,0) {stringnet fusion};
    \pic[xscale=-1] (i) at (-0.95,0) {stringnet fusion};
    \node[anchor=north] at (f-right-down) {$e$};
    \node               at (f-right-mid)  {$l$};
    \node[anchor=south] at (f-right-up)   {$f$};
    \node[anchor=south] at (f-left-up-up)    {$\tinyprime{f}$};
    \node               at (f-left-up-mid)   {$c$};
    \node[anchor=north] at (f-left-up-down)  {$g$};
    \node[anchor=south] at (f-left-down-up)    {$\tinyprime{g}$};
    \node               at (f-left-down-mid)   {$d$};
    \node[anchor=north] at (f-left-down-down)  {$\tinyprime{e}$};
    \node[anchor=north] at (i-right-down) {$\tinypprime{e}$};
    \node               at (i-right-mid)  {$\tinypprime{l}$};
    \node[anchor=south] at (i-right-up)   {$\tinypprime{f}$};
  \end{tikzpicture} =
  \sum_{f'gg'e'cd}
  \left[\left(F_{fcd}^e\right)^{-1}\right]_g^{l'} \cdot \delta_{ff'} \delta_{gg'} \delta_{ee'} \left[F_{fcd}^e\right]_l^g \cdot \delta_{f'f''} \delta_{ee''} = \delta_{ee''} \delta_{ff''} \delta_{ll''} N_{fl}^e,
\end{equation*}
where again $N_{fl}^e$ is one or zero, which gives the r.h.s.\ of \cref{eq:MPO_tensor_product}. We have thus checked that $\mathcal{A}$ admits a pre-bialgebra structure with the usual multiplication, except that we have not shown that $\mathcal{A}$ has a unit. For that, note that the unit is of the form
\begin{equation*}
     1 =
     \tikzsetnextfilename{05869e72-b2d0-40ea-ab60-933ac9c7bb3e}
    \begin{tikzpicture}[every node/.style={font=\tiny,fill=white,inner sep=1pt,outer sep=1pt}]
      \pic (s) at (0,0) {stringnet mpo};
      \pic[rotate=180] (unitup) at (0.95,0) {stringnet unit};
      \pic (unitdown) at (-0.95,0) {stringnet unit};
      \node[anchor=west]  at (s-up-right)   {$b$};
      \node               at (s-up-mid)     {$a$};
      \node[anchor=east]  at (s-up-left)    {$f$};
      \node[anchor=south] at (s-left-up)    {$f$};
      \node               at (s-left-mid)   {$l$};
      \node[anchor=north] at (s-left-down)  {$c$};
      \node[anchor=east]  at (s-down-left)  {$c$};
      \node               at (s-down-mid)   {$a$};
      \node[anchor=west]  at (s-down-right) {$k$};
      \node[anchor=north] at (s-right-down) {$k$};
      \node               at (s-right-mid)  {$l$};
      \node[anchor=south] at (s-right-up)   {$b$};
    \end{tikzpicture} \ ,
\end{equation*}
where we have defined
\begin{equation*}
  \tikzsetnextfilename{086bcb90-967c-4f4a-a85e-7390943dd797}
  \begin{tikzpicture}[every node/.style={font=\tiny,fill=white,inner sep=1pt}]
    \pic (v) at (0,0) {stringnet unit};
    \node[anchor=north] at (v-down) {$c$};
    \node               at (v-mid)  {$b$};
    \node[anchor=south] at (v-up)   {$a$};
  \end{tikzpicture} = \alpha_a^{-1} \cdot \delta_{ac} \delta_{b1}
  \qquad \text{and} \qquad
  \tikzsetnextfilename{dc2e9c48-f66d-45f6-86e7-91b680499916}
  \begin{tikzpicture}[every node/.style={font=\tiny,fill=white,inner sep=1pt}]
    \pic[xscale=-1] (v) at (0,0) {stringnet unit};
    \node[anchor=north] at (v-down) {$c$};
    \node               at (v-mid)  {$b$};
    \node[anchor=south] at (v-up)   {$a$};
  \end{tikzpicture} = \alpha_a\cdot\delta_{ac} \delta_{b1} .
\end{equation*}
As the unit is described by a rank-one boundary that is supported only in the vacuum sector, the pre-bialgebra $\mathcal{A}$ automatically satisfies the unit axiom. Dually, as the counit is also described by a rank-one boundary that is supported only in the vacuum sector, $\mathcal{A}$ also satisfies the counit axiom. Therefore $\mathcal{A}$ is a WBA.

Let us now show that $\mathcal{A}$ is a WHA as well. A cosemisimple WBA is a WHA if and only if for every irrep sector $a$ there is another irrep sector $\bar{a}$ such that the symmetries $N_{ab}^c = N_{\bar{a}c}^b$ and $N_{ab}^c = N_{c\bar{b}}^a$ hold. In our case, as the fusion multiplicities originate from a fusion category, and thus these symmetries trivially hold. Therefore $\mathcal{A}$ is a WHA. The matrices $Z_c$ and $Z^{-1}_c$ describing the antipode can be expressed with the help of the fusion tensors as \cref{eq:w_a}:
\begin{equation*}
  Z_c =
  \tikzsetnextfilename{a801d23f-7184-4b91-918f-4ffce6e6e92a}

  = \frac{1}{\left[F_{c\bar{c}c}^{c}\right]_{1}^{1}}  \left[F_{a\bar{c}c}^a\right]_1^f
  \left[\left(F_{a\bar{c}c}^a\right)^{-1}\right]^1_f \ .
\end{equation}

\subsection{Fibonacci anyons}

The fusion rules are given by:
\begin{align*}
  N_{11}^1 = N_{\tau 1}^\tau = N_{1\tau}^\tau = N_{\tau\tau}^1 = N_{\tau \tau}^1 =& 1, \\
  N_{11}^\tau = N_{\tau 1}^1 = N_{1\tau }^1 = & 0.
\end{align*}
The F-symbol $\left[F_{fcd}^e\right]_l^g$ is proportional to $N_{fc}^g \cdot N_{cd}^l \cdot N_{fg}^e \cdot N_{ld}^e$. Therefore the following entries are the only non-zero ones and are given by
\begin{align*}
  \left[F_{111}^1\right]_1^1 =
  \left[F_{11\tau}^\tau\right]_\tau^1 =
  \left[F_{\tau 11}^\tau \right]_1^\tau=
  \left[F_{1\tau 1}^\tau \right]_\tau^\tau =
  \ \left[F_{1\tau\tau}^1\right]_1^\tau  =
  \left[F_{1\tau\tau}^\tau\right]_\tau^\tau =  \\
  \left[F_{\tau 1\tau}^\tau\right]_\tau^\tau =
  \left[F_{\tau 1\tau}^1\right]_\tau^\tau  =
  \left[F_{\tau\tau 1}^\tau\right]_\tau^\tau =
  \left[F_{\tau\tau 1}^1\right]_\tau^1 =
  \left[F_{\tau\tau\tau}^1\right]_\tau^\tau =1 \\
  \left[F_{\tau\tau\tau}^\tau \right]_1^1 = \varphi, \ .
  \left[F_{\tau\tau\tau}^\tau \right]_\tau^\tau = -\varphi, \
  \left[F_{\tau\tau\tau}^\tau \right]_1^\tau =
  \left[F_{\tau\tau\tau}^\tau \right]_\tau^1 = \sqrt{\varphi},
\end{align*}
where $\varphi = (\sqrt{5}-1)/2$. We also have $\left(\left(F_{abc}^d\right)^{-1}\right)_e^f=\left(F_{abc}^d\right)_f^e$.

\subsubsection{The MPO tensor}

The MPO tensor given in \cref{eq:stringnet_mpo_tensor} has bond- and physical dimension $8$. It is block-diagonal both in the horizontal and vertical direction, with the two blocks labeled by $1$ and $\tau$. The block $1$ corresponds to the subspace spanned by  $\ket{111}, \ket{\tau 1 \tau}$ and the block $\tau$ to the subspace spanned by  $\ket{1\tau \tau}, \ket{\tau \tau 1}, \ket{\tau\tau\tau}$. The two simple cocommutative elements are given by
\begin{align*}
  a_1 & \equiv
  \left(
\right) .
\end{equation*}

\end{document}